\definecolor{darkred}  {rgb}{0.5,0,0}
\definecolor{darkblue} {rgb}{0,0,0.5}
\definecolor{darkgreen}{rgb}{0,0.5,0}
\pgfmathsetmacro\MathAxis{height("$\vcenter{}$")}
\newcommand{\mc}{\mathcal}
\renewcommand{\polylog}{\mathrm{polylog}}
\renewcommand{\poly}{\mathrm{poly}}
\renewcommand{\epsilon}{\varepsilon}
\renewcommand{\ln}{\log}
\newcommand{\sA}{\mathsf{A}}
\newcommand{\sB}{\mathsf{B}}
\newcommand{\sC}{\mathsf{C}}
\newtheorem{theorem}{Theorem}[section]
\newtheorem*{theorem*}{Theorem}
\newtheorem*{question*}{Question}
\newtheorem{lemma}[theorem]{Lemma}
\newtheorem{claim}[theorem]{Claim}
\newtheorem{fact}[theorem]{Fact}
\newtheorem{corollary}[theorem]{Corollary}
\theoremstyle{definition}
\newtheorem{definition}[theorem]{Definition}
\newtheorem{remark}[theorem]{Remark}
\newtheorem{properties}[theorem]{Properties}
\definecolor{DarkRed}{RGB}{170,0,0}
\newtheorem{example}[theorem]{Example}
\numberwithin{equation}{section}
\newcommand{\under}[2]{\underbrace{#1}_{\substack{#2}}}
\begin{document}

\title{Rapid mixing for Gibbs states within a logical sector:\\ a dynamical view of self-correcting quantum memories}

\author{Thiago Bergamaschi\thanks{UC Berkeley. \href{mailto:thiagob@berkeley.edu}{thiagob@berkeley.edu}}
\and
Reza Gheissari\thanks{Northwestern University. \href{mailto:gheissari@northwestern.edu}{gheissari@northwestern.edu}}
\and
Yunchao Liu\thanks{Harvard University. \href{mailto:yunchaoliu@berkeley.edu}{yunchaoliu@berkeley.edu}}
}

\date{}
\maketitle

\begin{abstract}
Self-correcting quantum memories store logical quantum information for exponential time in thermal equilibrium at low temperatures. By definition, these systems are slow mixing. This raises the question of how the memory state, which we refer to as the Gibbs state within a logical sector, is created in the first place. 

In this paper, we show that for a broad class of self-correcting quantum memories on lattices with parity check redundancies, a quasi-local quantum Gibbs sampler rapidly converges to the corresponding low-temperature Gibbs state within a logical sector when initialized from a ground state. This illustrates a dynamical view of self-correcting quantum memories, where the ``syndrome sector'' rapidly converges to thermal equilibrium, while the ``logical sector'' remains metastable. As a key application, when initialized from a random ground state, this gives a rapid Gibbs state preparation algorithm for the 4D toric code in $\mathrm{polylog}(n)$ depth.
The main technical ingredients behind our approach are new, low-temperature decay-of-correlation properties for these metastable states.
\end{abstract}

\setcounter{tocdepth}{1}
\tableofcontents
\newpage

\section{Introduction}
A self-correcting quantum memory provides a way of storing quantum information in a quantum state that remains stable under thermal noise, even without active error correction. For example, the well-known 4D toric code~\cite{Dennis2001TopologicalQM,Alicki2008OnTS} gives such an encoding when the temperature is below a constant threshold. The stability of the memory state comes from the fact that it reaches some form of thermal equilibrium: if the Gibbs state is understood as the thermal perturbation of a uniformly random ground state (which encodes no logical information), then the memory state can be understood as the same thermal perturbation of a specific ground state, which we refer to as the \emph{Gibbs state within a logical sector}. The defining feature of these states is that they are \textit{metastable} at low temperatures. That is, the system is approximately stationary under thermal noise (modeled as quantum Glauber dynamics \cite{Davies1974MarkovianME}) for an exponentially long time.

This naturally raises the question: \emph{how are these memory states prepared in the first place?} From an algorithmic perspective, the stability of the memory implies that the quantum Glauber dynamics corresponds to a slow-mixing Markov chain, which could be viewed as a barrier to the efficient preparation of memory states.  On the other hand, physics intuition suggests that there is a natural means of state preparation: first encode the logical information into a ground state (a code-state), then simply run Glauber dynamics starting from that ground state. The underlying intuition is that if the only obstacle to mixing lies in the bottlenecks \emph{between} the energy wells surrounding the ground states, then the system should rapidly reach some form of thermal equilibrium \textit{within} the wells (see \cref{fig:memorydynamics}). That is, the system is expected to quickly converge to the corresponding Gibbs state within a logical sector, which then remains metastable.

This approach can be viewed as circumventing the slow-mixing barrier by starting from a good initial state, which is challenging to establish rigorously due to the breakdown of worst-case mixing time techniques. Recently, there has been significant effort to prove rapid mixing in this paradigm in the classical Markov chain literature~\cite{Chen2019FastAA,GS23Ising,Bresler2024Metastable,galanis2024plantingmcmcsamplingpotts,huang2024weakpoincareinequalitiessimulated,GSS25}. 
Closest to the setting of this paper,~\cite{GS23Ising} formalized this approach in the 2D Ising model, where they proved that when initialized from the all-zero configuration, the system rapidly converges to the restriction of the Gibbs state with majority $0$\footnote{For the Ising model, the Gibbs state within a logical sector can be simply understood as the true Gibbs state conditioned on the majority spin being zero. The quantum version is more delicate and is discussed in \cref{sec:gibbslogical}.}. A natural question is whether one can generalize these results to the quantum setting, showing analogous rapid mixing within a logical sector for self-correcting quantum memories such as the 4D toric code.

However, such an attempt immediately runs into a fundamental barrier: unlike the Ising Glauber dynamics, quantum error correcting codes lack a crucial property called \emph{monotonicity}, which says that the process initialized in a ground state always has stochastically fewer errors (bit flips) than that initialized in the Gibbs state. 
A crucial consequence of monotonicity is that the \textit{memory time} -- the time until the logical information is lost -- starting from the ground state, is provably larger than that of the metastable memory state.  
In other words, the evolution starting from the ground state must go through the memory state first, before converging to the Gibbs state.
In the absence of monotonicity, even proving the \emph{ground state stability} for the 4D toric code is still an open problem: current techniques are insufficient to rigorously rule out the possibility that running Glauber dynamics from the ground state quickly loses the logical information (see \cref{section:discussion}).

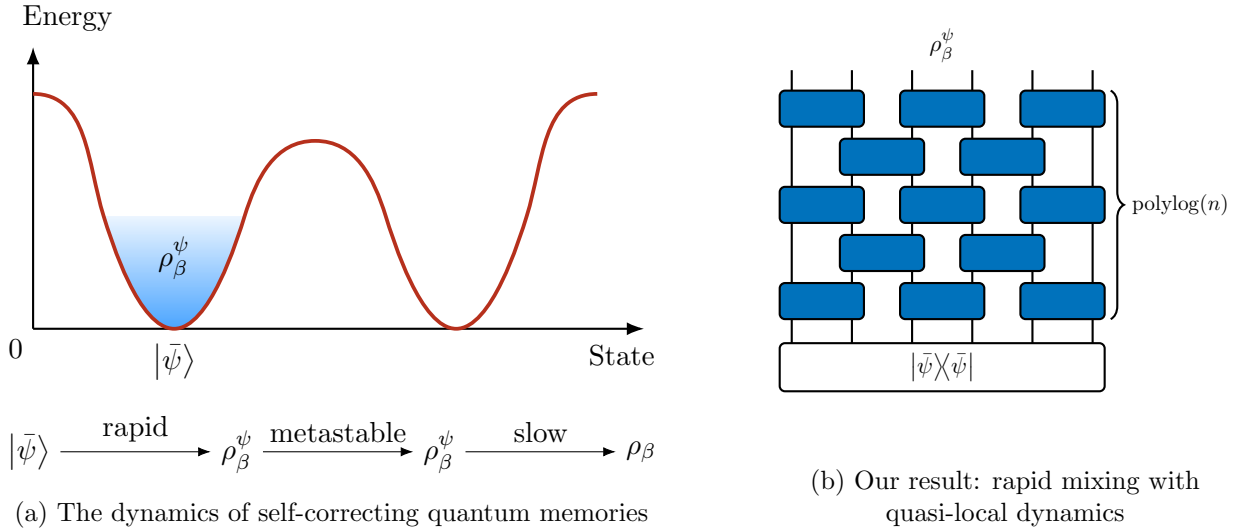
\begin{figure}[t]
\centering

\begin{subfigure}[b]{0.55\textwidth}
\centering

\begin{tikzpicture}[
    scale=1.25, %
    axis/.style={->, >=Latex, thick}
]

\draw[axis] (0,0) -- (6.5,0) node[below left=2pt and -8pt] {State};
\draw[axis] (0,0) -- (0,3) node[above right=2pt and -8pt] {Energy};

\def\wellbottom{0} %
\def\barrierheight{2.0} %
\def\entryexitheight{2.5} %
\def\slopecontrolheight{1.2} %

\def\wellonecenter{1.5}
\def\welltwocenter{4.5}
\def\barriercenter{3}
\def\wellwidth{0.75} %
\def\curvetension{0.9} %

\pgfmathsetmacro{\parabolaSlope}{2*(\slopecontrolheight-\wellbottom)/\wellwidth}
\pgfmathsetmacro{\parabolaAngle}{atan(\parabolaSlope)}

\begin{scope}
    \path[clip]
        (\wellonecenter - \wellwidth, \slopecontrolheight)  %
        parabola bend (\wellonecenter, \wellbottom)         %
        (\wellonecenter + \wellwidth, \slopecontrolheight)    %
        -- cycle; %

    \shade[
        bottom color=RoyalBlue, %
        top color=RoyalBlue!10,          %
        opacity=0.7
    ]
        (\wellonecenter - \wellwidth, \wellbottom) %
        rectangle
        (\wellonecenter + \wellwidth, \slopecontrolheight); %
\end{scope}

\draw[draw=BrickRed, line width=0.5mm]
    (0, \entryexitheight)
    to[out=0, in=180-\parabolaAngle, tension=\curvetension] 
    (\wellonecenter - \wellwidth, \slopecontrolheight)
    
    parabola bend (\wellonecenter, \wellbottom) (\wellonecenter + \wellwidth, \slopecontrolheight)
    
    to[out=\parabolaAngle, in=180, tension=\curvetension] 
    (\barriercenter, \barrierheight)
    
    to[out=0, in=180-\parabolaAngle, tension=\curvetension] 
    (\welltwocenter - \wellwidth, \slopecontrolheight)
    
    parabola bend (\welltwocenter, \wellbottom) (\welltwocenter + \wellwidth, \slopecontrolheight)
    
    to[out=\parabolaAngle, in=180, tension=\curvetension] 
    (6, \entryexitheight);

\node[below=2pt, text=black] at (\wellonecenter, \wellbottom) {$\ket{\bar \psi}$};
\node[above=15pt, text=black] at (\wellonecenter, \wellbottom) {$\rho_\beta^\psi$};

\node[below left] at (0,0) {$0$};

\end{tikzpicture}\\[2mm]

\begin{tikzpicture}[>=Latex, font=\normalsize, align=center]

\node (start) {$\ket{\bar\psi}$};
\node[right=2cm of start] (mid1) {$\rho_\beta^\psi$};
\node[right=2cm of mid1] (mid2) {$\rho_\beta^\psi$};
\node[right=2cm of mid2] (end) {$\rho_\beta$};

\draw[->] (start) -- (mid1) node[midway, above] {rapid};
\draw[->] (mid1) -- (mid2) node[midway, above] {metastable};
\draw[->] (mid2) -- (end) node[midway, above] {slow};

\end{tikzpicture}

\caption{The dynamics of self-correcting quantum memories}
\label{fig:memorydynamics}
\end{subfigure}
\hfill
\begin{subfigure}[b]{0.365\textwidth}
\centering
\begin{tikzpicture}[
    scale=0.8,
    transform shape,
    gate/.style={
        draw,
        thick,
        fill=RoyalBlue!100,
        minimum width=1.4cm,
        minimum height=0.6cm,
        rounded corners=2pt
    }
]

\def\qubitsep{1.0}   %
\def\layersep{0.8}   %

\foreach \q in {0,...,5} {
    \draw[thick] (\q*\qubitsep, 0) -- (\q*\qubitsep, 5.8*\layersep);
}

\node[
    draw, thick, fill=white, 
    minimum width=5.4cm,  %
    minimum height=0.8cm,       %
    rounded corners=2pt
] at (2.5*\qubitsep, -0.3cm) {$\ketbra{\bar\psi}$}; %

\node[above] at (2.5*\qubitsep, 5.8*\layersep) {$\rho_\beta^\psi$};

\foreach \q [evaluate=\q as \qnext using int(\q+1)] in {0,2,4} {
    \node[gate] at ($(\q*\qubitsep, 1*\layersep) !0.5! (\qnext*\qubitsep, 1*\layersep)$) {};
}

\foreach \q [evaluate=\q as \qnext using int(\q+1)] in {1,3} {
    \node[gate] at ($(\q*\qubitsep, 2*\layersep) !0.5! (\qnext*\qubitsep, 2*\layersep)$) {};
}

\foreach \q [evaluate=\q as \qnext using int(\q+1)] in {0,2,4} {
    \node[gate] at ($(\q*\qubitsep, 3*\layersep) !0.5! (\qnext*\qubitsep, 3*\layersep)$) {};
}

\foreach \q [evaluate=\q as \qnext using int(\q+1)] in {1,3} {
    \node[gate] at ($(\q*\qubitsep, 4*\layersep) !0.5! (\qnext*\qubitsep, 4*\layersep)$) {};
}

\foreach \q [evaluate=\q as \qnext using int(\q+1)] in {0,2,4} {
    \node[gate] at ($(\q*\qubitsep, 5*\layersep) !0.5! (\qnext*\qubitsep, 5*\layersep)$) {};
}

\draw [decorate,decoration={brace,amplitude=5pt,mirror},thick]
  (5.3,\layersep - 0.3) -- (5.3,5*\layersep + 0.3) node[midway,right=0.1]{\small{ $\polylog(n)$}};
\end{tikzpicture}
\vspace{0.3cm}

\captionsetup{justification=centering}
\caption{Our result: rapid mixing with quasi-local dynamics}
\label{fig:rapidmixing}
\end{subfigure}
\caption{Rapid mixing for Gibbs states within a logical sector. (a) Quantum Glauber dynamics starting from a ground state $\ket{\bar\psi}$ (bottom of well) is expected to converge rapidly to the corresponding Gibbs state within a logical sector $\rho_\beta^\psi$ (blue). This state then remains metastable for exponential time, until eventually converging to the true Gibbs state $\rho_\beta$. (b) We prove rapid mixing within a logical sector for a broad class of self-correcting quantum memories on lattices, where each step of the Markov chain is a quasi-local quantum channel acting on $\polylog(n)$ qubits (blue box). %
}
\label{fig:energywell}
\end{figure}

In this paper, we make progress on these questions. We prove that at sufficiently low (but constant) temperatures, a broad class of self-correcting quantum memories on lattices -- namely, those under the \textit{connectedness criterion} for self-correction of \cite{Bombin2009SelfcorrectingQC} -- admit a \textit{quasi-local} quantum Markov chain which rapidly converges to the Gibbs state within a logical sector, when appropriately initialized from a ground state (see \cref{fig:rapidmixing}). This provides a theoretical justification for the intuitive understanding of the memory dynamics shown in \cref{fig:memorydynamics}. Our main result, when specialized to the 4D toric code, is the following theorem. 

\begin{theorem}
    [Rapid Mixing within a Logical Sector, Informal]\label{theorem:maininformal}
    There exists a constant threshold temperature, below which the Gibbs state within a logical sector of the 4D toric code on $n$ qubits can be prepared in $\polylog (n)$ time, using a continuous time and geometrically local quantum Markov chain with jumps of locality $\polylog (n)$, when initialized from a ground state. Moreover, this process can be simulated by a $\polylog (n)$ depth quantum circuit of 2-qubit gates. 
\end{theorem}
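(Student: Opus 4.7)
My plan is to exploit the CSS structure of the 4D toric code to reduce the quantum mixing problem to a classical one, prove decay of correlations for a classical Gibbs measure restricted to a fixed homology class, and then implement the resulting chain as a shallow quantum circuit. Because the stabilizer Hamiltonian splits into commuting $X$- and $Z$-type terms, its Gibbs state is block-diagonal in the joint syndrome basis and factorizes on each block into two classical Ising-type measures on $2$-chains of the $4$-torus, with the stabilizer group acting as the space of $2$-boundaries and the logical information encoded in the homology class modulo boundaries. Thus the Gibbs state within a logical sector corresponds to the classical measure restricted to a specific coset of the boundaries, and a quasi-local quantum Lindbladian whose jumps are block-diagonal in the syndrome basis reduces to a classical block-update chain on those cosets.

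I would take this chain to pick, uniformly at random, a ball $B$ of radius $r = \polylog(n)$ and resample the configuration inside $B$ from the marginal of the restricted Gibbs measure conditioned on the exterior. Since $r$ is much smaller than the minimum length of a non-contractible cycle in the $4$-torus, every such block update preserves the homology class, so the chain stays inside the logical sector, has the correct stationary distribution, and is ergodic there. The initialization from a ground state corresponds to the all-zero excess-error configuration in this classical picture.

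The main technical step, and the one I expect to be hardest, is to prove strong spatial mixing for the Gibbs state within a logical sector at low but constant temperature. A Peierls-type argument shows that the excess error above the coset representative is almost surely a union of small closed $2$-chains, with the probability of an excitation of size $\ell$ bounded by $e^{-\Omega(\beta\ell)}$. The content of the connectedness criterion in this regime is precisely that every sufficiently small excitation can be corrected locally without leaving its logical sector, which is the structural input needed to compare the restricted and unrestricted measures on any local scale up to exponentially small corrections, and hence conclude that correlations decay exponentially in distance, uniformly across the sector. The difficulty is that the homology constraint is a global topological condition, so standard tools such as disagreement percolation or cluster expansion must be carefully adapted to separate the ``local'' excitations from the ``topological'' degrees of freedom.

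Given strong spatial mixing for the restricted measure, I would then convert it into a $\polylog(n)$ bound on the mixing time of the block chain via a Martinelli--Olivieri-style block dynamics comparison, or a modified log-Sobolev inequality. Finally, the resulting continuous-time geometrically local Lindbladian can be simulated by a $\polylog(n)$-depth circuit of $2$-qubit gates by Trotterizing and applying recent quasi-local quantum Gibbs sampler primitives to each polylog-size block generator, yielding the stated quantum circuit complexity.
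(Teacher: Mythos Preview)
Your high-level reduction (CSS $\Rightarrow$ classical syndrome distribution $\Rightarrow$ block dynamics, with a Peierls argument controlling excitations) matches the paper's framework. But there is a genuine gap at the heart of the plan: you propose to resample a ball $B$ from the conditional measure given the current exterior configuration, and then to drive mixing from a \emph{uniform} strong spatial mixing statement for the restricted measure. At low temperature this form of SSM does \emph{not} hold for the syndrome distribution under arbitrary boundary conditions. For instance, in the analogous 2D Ising syndrome picture, two violated checks pinned on opposite sides of an $L\times L$ box force a domain wall to connect them, and correlations decay only as $L^{-1/2}$, not exponentially. The paper's Peierls and separating-surface arguments only yield a ``\textsf{Domain SSM}'' property: exponential decay under \emph{zero} (ground-state-like) syndrome boundaries, and robustness to deformations of the domain rather than to flips of boundary pins. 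Your proposed Martinelli--Olivieri or MLSI route needs SSM uniformly over boundary conditions, which you will not get.

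The paper's fix is algorithmic, not analytic: it modifies the block dynamics so that, in addition to fixing the exterior, one \emph{freezes every connected component of violated syndromes that touches $\partial B$}. This guarantees that the effective inner boundary on which one conditions is always all-zero, putting every update precisely in the regime where \textsf{Domain SSM} applies. Mixing is then proved by path coupling in a tailored ``cluster distance'' within the set of small-loop configurations, not by MO/MLSI. A second, smaller gap is the circuit depth claim: Trotterizing the block generator generically costs $\exp(\mathrm{polylog}\,n)$ per block. The paper gets $\polylog(n)$ depth by identifying the conditional block update for the 4D toric code with the even-subgraph model and sampling it via the worm process, which has polynomial mixing time in the block size.
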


We add that our result also implies efficient Gibbs state preparation for the 4D toric code in $\polylog(n)$ depth, by running our algorithm starting from a \emph{random} ground state. To the best of our knowledge, this is the fastest state-preparation algorithm for both the 4D toric code memory state and its low temperature Gibbs state, as measured by the quantum circuit depth.\footnote{Recall that ground states (the code-states) can be prepared in $O(\log n)$ depth starting from a product state with all-to-all connectivity; though polynomial depth is necessary on a 4D geometry, e.g.~\cite{Konig2014Generating}. Our algorithm (which starts from a ground state) is geometrically local in 4D.} 

The more general version of our main result is stated in  \cref{theorem:cbd_intro}, which shows rapid mixing under a quasi-local quantum Markov chain for error-correcting codes with certain properties adapted from~\cite{Bombin2009SelfcorrectingQC}.\footnote{As stated in \cref{theorem:maininformal}, for the 4D toric code our quasi-local updates can be additionally realized using efficient quantum circuits. This may not be true in general, but one can at least brute-force simulate the quasi-local updates, resulting in a quasi-polynomial time state-preparation algorithm.} 
In fact, our result also applies to \emph{classical} codes with these properties. Roughly speaking, we require the code to have parity check redundancies (i.e. metachecks) such that local clusters of syndromes can be corrected locally. These properties are satisfied by many known examples of self-correcting classical or quantum memories on finite-dimensional lattices, including high-dimensional toric and color codes, as well as the Ising and Potts models.

Our quasi-local quantum Markov chain is largely inspired by local decoders for the 4D toric code \cite{hastings2013decodinghyperbolicspacesldpc,Zhou2025FinitetemperatureQT}, and is designed in a way such that the dynamics is guaranteed to maintain the logical information (\cref{fig:di_dynamics}). It remains an interesting open question to improve our result to fully local (single-site) quantum Glauber dynamics. As discussed earlier, due to the lack of monotonicity, even the more basic question of ground state stability for the 4D toric code under Glauber dynamics is open.

\subsection{Related Work}
\label{section:related}

\paragraph{Quantum Gibbs sampling in the low-temperature regime.} Bounding the mixing time -- i.e., the time to converge to the Gibbs state, from a worst-case initial state -- is a central challenge in efficient quantum Gibbs sampling. Most existing bounds work in the \emph{high-temperature} regime, e.g.~\cite{KB16, capel2021modified, RFA24}. However, high-temperature quantum Gibbs states are essentially classical~\cite{yin2023polynomialtime,BLMT24}, which shifts the primary algorithmic challenge to the low-temperature regime where uniquely quantum features dominate. %

In the \emph{low-temperature} regime, existing fast-mixing guarantees are only for highly structured systems, e.g.~\cite{Bardet2023Rapid,BCL24, DLLZ24}, where the bounds in fact hold at all temperatures. The more fundamental challenge in this regime is the \emph{slow-mixing barrier}: many quantum systems, such as self-correcting memories, are known to have exponentially long mixing times (see also~\cite{Gamarnik2024SlowMO,rakovszky2024bottlenecksquantumchannelsfinite}). Consequently, developing efficient algorithms requires more refined methods beyond standard worst-case mixing-time techniques. Here we generalize the classical approach of ``running Glauber dynamics from a good initial state'' to the quantum setting, and derive new rapid mixing results in low-temperature quantum systems which mix slowly in general.

\paragraph{Self-correcting quantum memories.} The development of quantum many-body systems which are passively robust to thermal noise has been a central line of work in quantum error correction and condensed matter physics (see \cite{Brown_2016} for a review). Here we focus on Calderbank-Shor-Steane (CSS) codes on finite-dimensional lattices with sub-exponentially long memory times \cite{Alicki2008OnTS, Bombin2009SelfcorrectingQC}. Our results give a new theoretical basis for the understanding of the dynamical behavior of these systems (\cref{fig:memorydynamics}), and offer new insights into quantum phases of matter (see \cref{section:discussion}).

There are a number of models not captured by our framework. For instance, quantum memories based on quantum expander codes (c.f.~\cite{Hong2024QuantumMA, placke2024topologicalquantumspinglass}) are geometrically nonlocal and have no metachecks (see \cref{section:discussion}). Haah's 3D cubic code is shown to have non-trivial memory time up to a finite (\textit{temperature-dependent}) system size, and to have ground state stability within the memory time~\cite{haah11, Bravyi_2013, CC2012}. Finally, there are other models of stable quantum memory, including non-stabilizer codes~\cite{Chesi_2010, hsin2024nonabelianselfcorrectingquantummemory} and circuit models~\cite{Aharonov-Ben-Or, BDL24, GB25}.

\subsection{Our Approach}
\label{section:contributions}
Broadly speaking, our strategy is based on the philosophy that ``decay-of-correlation implies fast mixing'', a standard paradigm in the classical \cite{MO1, MO2, Dyer2002MixingIT} and quantum literature \cite{KB16,Brando2016FiniteCL} that decay of correlations in the Gibbs state (i.e. spatial mixing) implies rapid convergence of Glauber dynamics. Of course, at low temperatures the Gibbs state of the 4D toric code (much like the 2D Ising model) does not exhibit any standard notion of decay of correlations. Nevertheless, our plan is to first identify some limited form of correlation decay at low temperatures for the Gibbs state within a logical sector (\cref{section:decay_corr_logical}), and then show that this form of correlation decay is sufficient to imply rapid mixing for a modified heat-bath block dynamics, inspired by parallel decoders for the 4D toric code (\cref{section:algorithm}).

\subsubsection{Decay of Correlations within a Logical Sector}
\label{section:decay_corr_logical}

What is the right notion of decay of correlations for the Gibbs state within a logical sector? Our starting point is to draw inspiration from the classical literature, where it is known that some more limited form of correlation decay does hold for the Ising model when restricted to a single logical sector \cite{DuminilCopin2018ExponentialDO}. Recently \cite{GS23Ising} introduced a specific form of this correlation-decay which they called \textit{weak spatial mixing} (\textsf{WSM}) \textit{within a phase}\footnote{The term ``phase'' has the same meaning as ``logical sector'' used in this paper.}  -- which quantifies the decay of point-to-set, conditional correlations under ``ground-state-like'' (or monotone) boundary conditions (\cref{fig:wsm}) -- and showed that this is sufficient for rapid mixing within a logical sector in the Ising model.

Unfortunately, there are two issues when trying to do something similar for quantum Gibbs states: first, unlike classical models where one can study conditional correlations by pinning bits on the boundary, one cannot pin qubits in this manner; second, \textsf{WSM} (within a logical sector) alone is insufficient to establish mixing even in non-monotone classical models,\footnote{This limitation is also known to be a major challenge in the classical literature, where the analogous result of fast mixing in a logical sector from a ground state is not known for the 2D Potts model.} thus we have to formulate and prove a stronger spatial mixing condition. We show how to address these issues below.

\paragraph{The Gibbs distribution over syndromes.} The first key insight is to separate the logical sector from the syndrome sector of the Hilbert space, which allows us to formulate decay-of-correlation properties using just the distribution over syndromes, which is purely classical. 
At any inverse temperature $\beta$, the Gibbs state $\rho_\beta$ of a quantum CSS code is isometric to a factorization over the logical and syndrome degrees of freedom:
\begin{equation}
    \rho_\beta \cong \mathbb{I} \otimes \pi_\beta\,,
\end{equation}
where $\mathbb{I}$ is the maximally mixed logical state, and $\pi_\beta$ is the classical Gibbs distribution over the code syndromes, which captures the energy excitations. In this notation, if we were to fix a logical state $\psi$, the Gibbs state \textit{within the logical sector} $\psi$ can be written as
\begin{equation}\label{eq:gibbswithinlogicalsector}
    \rho_\beta^\psi\cong \psi\otimes \pi_\beta\,.
\end{equation}
For a classical or quantum code with syndrome space $\Omega\subseteq \{0,1\}^m$, we have $\pi_\beta(s)\propto e^{-\beta |s|}$, $\forall s\in\Omega$.

\begin{example}
    In the 2D Ising model, $\pi_\beta$ is the distribution over domain walls (loops separating regions of $0/1$ spins) and $\psi$ is the spin assignment of the majority. %
\end{example}

\begin{example}
    In the 4D toric code, $\pi_\beta$ is a distribution over even subgraphs (together with some topological constraints) of $(\mathbb Z/w\mathbb Z)^4$, see \cref{section:toric_background}.
\end{example}

The above decomposition leads to the following perspective: suppose we initialize from a ground state of a classical or quantum code, which determines the logical sector and has an all-zero syndrome configuration. Proving rapid convergence to the Gibbs state within the logical sector can be achieved using the following two steps:
\begin{enumerate}
    \item Prove that the induced dynamics on the syndromes thermalizes quickly to $\pi_\beta$, starting from the all-zero syndrome configuration.
    \item Show that the logical information is stable throughout the dynamics.
\end{enumerate}
To realize this plan, we first formulate and prove new decay-of-correlation properties for $\pi_\beta$ (see below), and then use these properties to prove rapid convergence for a carefully designed quasi-local Gibbs sampler which maintains the stability of the logical sector (\cref{section:algorithm}).

\newcommand{\tikzboxscale}{0.8}
\begin{figure}[t]
\centering
\begin{subfigure}[b]{0.48\textwidth}
\centering

    \begin{equation*}
        \scalebox{\tikzboxscale}{\begin{tikzpicture}[baseline=0cm]

    \node (mybox) [draw=RoyalBlue, line width=1mm, rectangle, minimum width=4cm, minimum height=4cm] at (0,0) {};

    \node[below=1cm] at (mybox.center) {$\mathsf{B}$};

    \node (mycircle) [draw, circle, minimum size=1cm, fill=Gray!20] at (mybox.center) {$\mathsf{A}$};

\end{tikzpicture}}
   \quad\approx\quad
    \scalebox{\tikzboxscale}{\begin{tikzpicture}[baseline=0cm]

    \node (mybox) [draw, dashed, rectangle, minimum width=4cm, minimum height=4cm] at (0,0) {};

    \node[below=1cm] at (mybox.center) {$\mathsf{B}$};

    \node (mycircle) [draw, circle, minimum size=1cm, fill=Gray!20] at (mybox.center) {$\mathsf{A}$};
\end{tikzpicture}}
\end{equation*}
\caption{Weak spatial mixing within a logical sector}
\label{fig:wsm}
\end{subfigure}
\hfill
\begin{subfigure}[b]{0.48\textwidth}
\centering

    \begin{equation*}
        \scalebox{\tikzboxscale}{\begin{tikzpicture}[baseline=0cm]

    \node (mybox) [draw=RoyalBlue, line width=1mm, rectangle, minimum width=4cm, minimum height=4cm] at (0,0) {};

    \node[below=1cm] at (mybox.center) {$\mathsf{B}$};

    \node (mycircle) [draw, circle, minimum size=1cm, fill=Gray!20] at (mybox.center) {$\mathsf{A}$};
\end{tikzpicture}}
\quad\approx\quad
\scalebox{\tikzboxscale}{\begin{tikzpicture}[baseline=0cm]

    \draw[RoyalBlue, line width=1mm] (2,0.5) -- (2,2) -- (-2,2) -- (-2,-2) -- (2,-2) -- (2,-0.5);
    \draw[dashed] (2,0.5) -- (2,-0.5);
    \draw[RoyalBlue, line width=1mm] (2,-0.55) arc [start angle=270, end angle=90, radius=0.55];

    \node[left] at (2,0) {$\mathsf{C}$};
    \node[below=1cm] at (0,0) {$\mathsf{B}$};
    \node (mycircle) [draw, circle, minimum size=1cm, fill=Gray!20] at (mybox.center) {$\mathsf{A}$};
\end{tikzpicture}}
\end{equation*}
\caption{Domain strong spatial mixing}
\label{fig:ssm}
\end{subfigure}
\caption{Decay-of-correlations for the syndrome Gibbs distribution. (a) \textsf{WSM} \textit{within a logical sector} compares two marginal distributions on region \textsf{A}: the conditional syndrome Gibbs distribution where the boundary syndromes are pinned to $0$ (blue), and the true syndrome Gibbs distribution. (b) \textsf{Domain SSM} compares two distinct conditional distributions on \textsf{A} where different boundaries are pinned to $0$: either the original box $\partial \mathsf{B}$ is pinned to $0$, or a  locally deformed boundary $\partial(\mathsf{B}\setminus \mathsf{C})$.}
\label{fig:decayofcorrelation}
\end{figure}
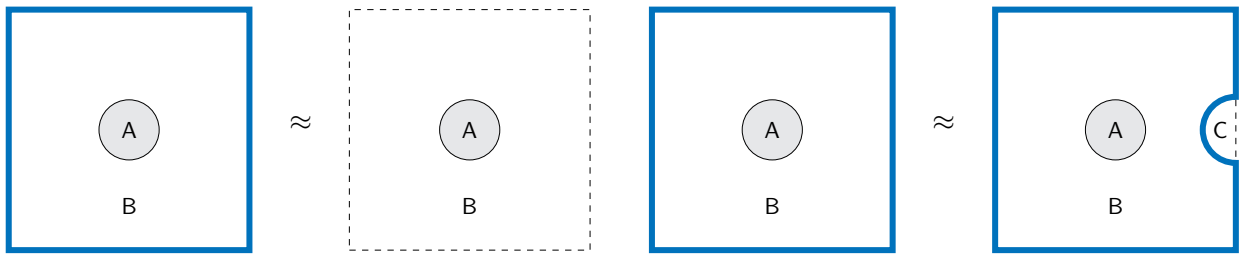

\paragraph{Domain strong-spatial-mixing for the syndrome Gibbs distribution.}
We introduce a new low-temperature correlation-decay property for the Gibbs distribution over syndromes $\pi_\beta$ of self-correcting classical or quantum memories, which we refer to as \textit{domain strong-spatial-mixing} (\textsf{Domain SSM}, \cref{fig:ssm}) and could be of independent interest. 

In the context of classical spin systems, the definition of strong-spatial-mixing (\textsf{SSM}, e.g.~\cite[Def. 2.2]{Dyer2002MixingIT}) quantifies the sensitivity of sets of spins to (perturbations of) their boundary conditions. In more detail, a distribution $\rho$ over spins $x\in \{0, 1\}^n$ is said to satisfy \textsf{SSM} if the marginal distribution on a region $\mathsf{A}\subset [m]$ (within a larger region \textsf{B} with boundary $\partial \mathsf{B}$) is approximately the same under two (similar) boundary conditions $\tau, \sigma\in \{0, 1\}^{\partial \mathsf{B}}$, 
\begin{equation}
 \max_{\tau, \sigma}   \big\| \rho\big(x_{\mathsf{A}}\big|  x_{\partial \mathsf{B}}=\tau\big) - \rho\big(x_{\mathsf{A}}\big|  x_{\partial \mathsf{B}}=\sigma\big)\big\|_1 \leq e^{-\Omega(d(\mathsf{A}, \mathsf{C}))},
\end{equation}
where $\mathsf{C}(\tau, \sigma):=\{u\in \partial\mathsf{B}:\tau_u\neq \sigma_u\}$ is the region that $\tau, \sigma$ are allowed to differ, and $d(\cdot,\cdot)$ is the distance metric on the underlying geometry. 

Naturally, the quantification over \textit{all} boundary conditions $\tau, \sigma$ is too strong to hold at low temperatures, even for a suitable definition of \textsf{SSM} just for the \textit{syndrome} distribution.\footnote{As an example, consider a 2D Ising model in an $L \times L$ box with two violated syndromes across from one another; this results in $1/\sqrt{L}$ rather than exponential decay of correlations in the syndrome space.} Inspired by the aforementioned notion of $\mathsf{WSM}$ within a logical sector, here we introduce a relaxation of $\mathsf{SSM}$ to specific, ``ground-state-like'' boundary conditions, where all the boundary \textit{syndromes} are pinned to 0. We refer to this notion as ``domain'' $\mathsf{SSM}$, as it quantifies the influence of deformations not in the boundary conditions, but in the domain itself (with ground state boundary conditions, see \cref{fig:ssm}): 

\begin{definition}
    [$\mathsf{Domain}$ $\mathsf{SSM}$, Informal]\label{definition:domain_ssm_intro} Fix an inverse-temperature $\beta$, and let $\pi_\beta$ denote the syndrome Gibbs distribution of a (classical or quantum) error-correcting code. Then $\pi_\beta$ is said to satisfy $\mathsf{Domain}$ $\mathsf{SSM}$, if for any region $\mathsf{B}\subset [m]$ and disjoint $\mathsf{A}, \mathsf{C}\subseteq \mathsf{B}$,
    \begin{equation}
    \big\|\pi_\beta(s_\mathsf{A}|s_{\partial \mathsf{B}}=0) - \pi_\beta(s_\mathsf{A}|s_{\partial (\mathsf{B}\setminus \mathsf{C})}=0)\big\|_1 \leq e^{-\Omega(d(\mathsf{A}, \mathsf{C}))} .
    \end{equation}
\end{definition}

The intuition behind \cref{definition:domain_ssm_intro} is that at low temperatures, the energy excitations (syndrome configurations) of the code are sparse and localized. The syndromes $\mathsf{A}$ at the center have essentially “forgotten” about the syndrome configurations far away, and they might as well be near the ground state.

\begin{remark}
\textsf{Domain SSM} (\cref{fig:ssm}) implies \textsf{WSM} within a logical sector (\cref{fig:wsm}). To see this, consider the special case where $\mathsf{B} = [m]$ is the full graph, and $\mathsf{C}$ is the boundary of some ball containing $\mathsf{A}$. 
\end{remark}

In \cref{lemma:ssm}, we prove that the family of codes considered in this paper admits $\mathsf{Domain}$ $\mathsf{SSM}$, below some constant threshold temperature. 

\begin{theorem}
    [\textsf{Domain SSM}, informal version of \cref{lemma:ssm}]\label{theorem:ssm_intro} Every self-correcting (classical or quantum) memory $\mathcal{M}$ satisfying the connectedness-criterion of \cite{Bombin2009SelfcorrectingQC} (\cref{definition:connectedness}) exhibits a (sufficiently small) constant temperature $\beta_\mathcal{M}^{-1}$ below which it satisfies \textsf{Domain} \textsf{SSM}.
\end{theorem}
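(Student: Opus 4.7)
The plan is to treat the syndrome Gibbs distribution $\pi_\beta$ under zero-syndrome boundary conditions as a low-temperature polymer model, and then run a Peierls-style disagreement argument in which the connectedness criterion of \cite{Bombin2009SelfcorrectingQC} is precisely what lets us decompose admissible syndrome configurations into well-separated local clusters whose weights and counts can be controlled.

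First I would set up the polymer decomposition. Under $s_{\partial \mathsf{B}}=0$ (respectively $s_{\partial(\mathsf{B}\setminus \mathsf{C})}=0$), any syndrome configuration must satisfy the metacheck relations everywhere and, in particular, must be ``closed off'' by zeros along the pinned boundary. The content of the connectedness criterion is that every such admissible configuration decomposes canonically into connected syndrome clusters (polymers) $\gamma$, each supported on a ball, each separately consistent with the metachecks, each of Gibbs weight $e^{-\beta|\gamma|}$, and each closable from its complement using only a bounded-radius halo of checks inside the domain. The two conditional measures thus correspond to polymer gases on slightly different domains ($\mathsf{B}$ versus $\mathsf{B}\setminus \mathsf{C}$) with otherwise identical activities, and they agree exactly away from a bounded neighborhood of $\mathsf{C}$.

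Next I would carry out a Peierls cluster-expansion bound. For $\beta$ above a threshold depending only on the branching number of the lattice and on the local-complexity constant furnished by the connectedness criterion, a standard counting argument shows that for any fixed site $x$ and any $\ell\geq 1$,
\begin{equation*}
\Pr\bigl[\text{some polymer of size }\geq \ell \text{ covers }x\bigr] \;\leq\; C^\ell\, e^{-\beta \ell} \;=\; e^{-\Omega(\beta \ell)},
\end{equation*}
uniformly in the domain and in which of the two boundary conditions we use. To compare the marginals on $\mathsf{A}$, I would then build an explicit coupling of the two polymer gases (for instance by sampling polymers jointly on the common sub-domain and independently near $\mathsf{C}$) and argue that the $\mathsf{A}$-marginals can only differ if some polymer in one of the two copies intersects both a neighborhood of $\mathsf{A}$ and a neighborhood of $\mathsf{C}$, since away from $\mathsf{C}$ the two measures see identical constraints. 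Such a polymer must have size $\geq d(\mathsf{A},\mathsf{C})$, and a union bound over its starting site yields
\begin{equation*}
\bigl\|\pi_\beta(s_\mathsf{A}\mid s_{\partial \mathsf{B}}=0)-\pi_\beta(s_\mathsf{A}\mid s_{\partial(\mathsf{B}\setminus \mathsf{C})}=0)\bigr\|_1 \;\leq\; |\mathsf{A}|\sum_{\ell\geq d(\mathsf{A},\mathsf{C})}e^{-\Omega(\beta \ell)}\;\leq\; e^{-\Omega(d(\mathsf{A},\mathsf{C}))},
\end{equation*}
which is the desired decay.

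The main obstacle, and the reason this is not an immediate application of cluster expansion, is establishing the canonical polymer decomposition honestly from the connectedness criterion rather than by hand. In models like the 4D toric code the syndromes are not arbitrary subsets of checks but are constrained by metachecks to be globally consistent (closed surfaces), so one cannot naively carve a large cluster into local pieces without violating admissibility. The connectedness criterion is precisely the geometric input that lets one ``close off'' a local syndrome cluster within a bounded halo; turning this local closability into a weight-preserving, count-controllable polymer decomposition that behaves consistently under the deformation of the domain from $\mathsf{B}$ to $\mathsf{B}\setminus \mathsf{C}$ is the heart of the proof, while the Peierls tail and the coupling comparison are fairly standard once this structural statement is in hand.
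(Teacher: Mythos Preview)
Your high-level plan (Peierls bound + coupling across the two domains) matches the paper's, but the technical route you sketch diverges in an important way, and there is a real gap where you flag the ``main obstacle'' but do not resolve it.

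The paper does \emph{not} set up a polymer gas and does not rely on any factorization of $\pi_\beta$ into independent clusters. In fact it emphasizes that the syndrome Gibbs distribution is \emph{not} a Markov random field on the syndrome network (because $\mathsf{Im}(H)\subsetneq \mathsf{Ker}(M)$ in general; topologically nontrivial configurations create global dependencies). So your proposed coupling --- ``sample polymers jointly on the common sub-domain and independently near $\mathsf{C}$'' --- cannot be justified directly: the measure is not a product over clusters, and removing a cluster near $\mathsf{C}$ can in principle be felt far away. The paper's workaround is: (i) condition on the high-probability event $\mathcal{E}_{\mathsf{eras}}$ that every connected component of violated syndromes is individually erasable (a Peierls bound), and (ii) prove a \emph{conditional} Markov property (their Lemma~5.11) that holds only under zero-syndrome pinning and only on $\mathcal{E}_{\mathsf{eras}}$. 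This replaces your polymer factorization.

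The second difference is the disagreement argument itself. You posit that the $\mathsf{A}$-marginals can differ only if a single polymer touches both $\mathsf{A}$ and $\mathsf{C}$. The paper instead samples $X,Y$ from the two conditional measures and argues about a \emph{separating surface} of zeros in $X\vee Y$ around $\mathsf{C}$: if no such surface exists within radius $r$, there is a path from $\mathsf{C}$ outward whose vertices are violated in $X$ \emph{or} $Y$ (not necessarily a single cluster in one sample). One then extracts a sub-path of length $\ge r/2$ entirely in one sample and runs Peierls on that. This is what the polymer picture misses: the obstruction can alternate between the two copies, and the bound naturally picks up $|\mathsf{C}|$ (the paper's prefactor) rather than $|\mathsf{A}|$. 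Once a shared separating surface is exhibited, the conditional Markov property couples everything inside it exactly.

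In short: your Peierls tail bound and your identification of the obstacle are correct, but the polymer-gas framing assumes a factorization that does not hold, and the paper sidesteps it via the approximate Markov property plus the separating-surface/path argument rather than a cluster expansion.
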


We do not expect the encountered temperature threshold to be sharp; in fact, to the extent of our knowledge, there is no rigorous analysis of the precise critical point of the 4D toric code. A concise proof-sketch is placed in the next \cref{section:techniques}. Next, we seek to boost the \textsf{Domain SSM} property on the syndrome space, into a rapid mixing guarantee.

\subsubsection{Rapid Mixing under a Decoding-Inspired, Quasi-Local Dynamics}
\label{section:algorithm}

In the absence of monotonicity, we appeal to a standard relaxation of single-site Glauber dynamics in non-monotone models: block-dynamics, where blocks $\mathcal{B}^L$ of size $L=\polylog(n)$ are resampled according to the conditional measure of their current boundary.  In addition to being a relevant tool for the analysis of single-site dynamics, block dynamics are well-studied in their own right \cite{Martinelli_2004, Brando2016FiniteCL, chen2025quantumgibbsstateslocally}. Our starting point is an elegant paper of \cite{Dyer2002MixingIT}, who used combinatorial arguments to show (among other results) that \textsf{SSM} implies block-dynamics mixes quickly, even in non-monotone systems. While we broadly follow this strategy, it is key in their argument that \textsf{SSM} holds for \textit{all} spin boundary conditions. Since we only have guarantees under ``ground-state'' syndrome boundaries, we require several modifications to the algorithm and analysis.

The new insight in our approach is to design a modified heat-bath block dynamics, inspired by parallel decoders for the 4D toric code \cite{hastings2013decodinghyperbolicspacesldpc,Zhou2025FinitetemperatureQT}. We define a detailed-balanced dynamics -- which we refer to as the \emph{conditional block dynamics} (see \cref{fig:di_dynamics}, or \cref{definition:cbd} for a formal definition) -- that updates local blocks of qubits, while freezing all the connected components of violated syndromes that intersect the boundary of the block. As we explain in detail in \cref{section:mixing}, this constraint plays a crucial role: it allows the spontaneous creation/annihilation/propagation of clusters of syndromes within the bulk of the block, but explicitly avoids situations in which the clusters on the boundary grow or merge together. As a result, we are able to maintain well-controlled boundary conditions throughout the evolution.

The main result of this work is that this decoding-inspired block dynamics is rapid mixing.

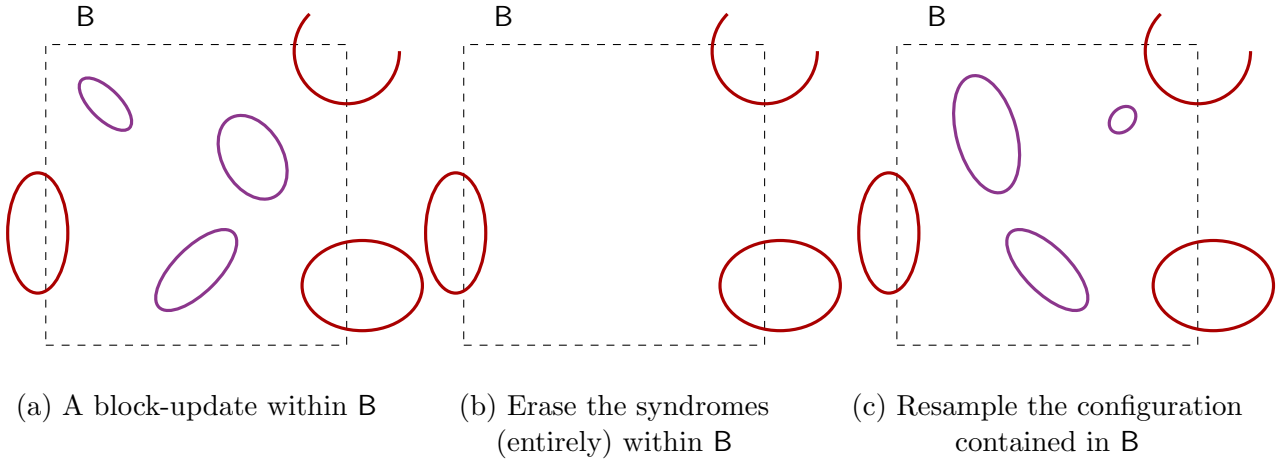
\begin{figure}[t]
    \begin{equation*}
    \begin{tikzpicture}[baseline=0cm]
        
            \node (mybox) [draw, dashed, rectangle, minimum width=4cm, minimum height=4cm] at (0,0) {};

            \node[anchor=south east, above left=0.1cm and -0.8cm] at (mybox.north west) {$\mathsf{B}$};

            \draw (mybox.north west) ++(0.8cm,-0.8cm) coordinate (topleft);
            \draw [Fuchsia, line width=1.2pt, rotate=-45] (topleft) ellipse (0.45cm and 0.2cm);

            \draw (mycircle.center) ++(.75cm,0.5cm) coordinate (innerellipsecenter);
            \draw [Fuchsia, line width=1.2pt, rotate=30] (innerellipsecenter) ellipse (0.4cm and 0.6cm);

            \draw (mycircle.center) ++(0.0cm,-1.0cm) coordinate (lowerellipsecenter);
            \draw [Fuchsia, line width=1.2pt, rotate=-45] (lowerellipsecenter) ellipse (0.3cm and 0.7cm);

            \draw (mybox.south east) ++(0.2cm,0.8cm) coordinate (bottomrightinner);
            \draw [DarkRed, line width=1.2pt, rotate=0] (bottomrightinner) ellipse (0.8cm and 0.6cm);

            \draw (mybox.south west) ++(-0.1cm,1.5cm) coordinate (bottomleftcenter);
            \draw [DarkRed, line width=1.2pt, rotate=0] (bottomleftcenter) ellipse (.4cm and .8cm);

            \draw[DarkRed, line width=1.2pt] (mybox.north east) ++(-0.5cm,0.4cm) arc (135:360:0.7cm);

            \node[below=0.5cm, align=center] at (mybox.south) {(a) A block-update within $\mathsf{B}$};
        \end{tikzpicture}
        \begin{tikzpicture}[baseline=0cm]
        
            \node (mybox) [draw, dashed, rectangle, minimum width=4cm, minimum height=4cm] at (0,0) {};

            \node[anchor=south east, above left=0.1cm and -0.8cm] at (mybox.north west) {$\mathsf{B}$};

            \draw (mybox.south east) ++(0.2cm,0.8cm) coordinate (bottomrightinner);
            \draw [DarkRed, line width=1.2pt, rotate=0] (bottomrightinner) ellipse (0.8cm and 0.6cm);

            \draw (mybox.south west) ++(-0.1cm,1.5cm) coordinate (bottomleftcenter);
            \draw [DarkRed, line width=1.2pt, rotate=0] (bottomleftcenter) ellipse (.4cm and .8cm);

            \draw[DarkRed, line width=1.2pt] (mybox.north east) ++(-0.5cm,0.4cm) arc (135:360:0.7cm);

            \node[below=0.5cm, align=center] at (mybox.south) {(b) Erase the syndromes \\ (entirely) within $\mathsf{B}$};
        \end{tikzpicture}
        \begin{tikzpicture}[baseline=0cm]
        
            \node (mybox) [draw, dashed, rectangle, minimum width=4cm, minimum height=4cm] at (0,0) {};

            \node[anchor=south east, above left=0.1cm and -0.8cm] at (mybox.north west) {$\mathsf{B}$};

            \draw (mybox.north west) ++(1.2cm,-1.2cm) coordinate (topleft);
            \draw [Fuchsia, line width=1.2pt, rotate=15] (topleft) ellipse (0.4cm and 0.8cm);
            
            \draw (mycircle.center) ++(-0cm,-1.0cm) coordinate (innerellipsecenter);
            \draw [Fuchsia, line width=1.2pt, rotate=45] (innerellipsecenter) ellipse (0.3cm and 0.7cm);

            \draw (mycircle.center) ++(1.0cm,1.0cm) coordinate (innerellipsecenter2);
            \draw [Fuchsia, line width=1.2pt, rotate=45] (innerellipsecenter2) ellipse (0.2cm and 0.15cm);

            \draw (mybox.south east) ++(0.2cm,0.8cm) coordinate (bottomrightinner);
            \draw [DarkRed, line width=1.2pt, rotate=0] (bottomrightinner) ellipse (0.8cm and 0.6cm);

            \draw (mybox.south west) ++(-0.1cm,1.5cm) coordinate (bottomleftcenter);
            \draw [DarkRed, line width=1.2pt, rotate=0] (bottomleftcenter) ellipse (.4cm and .8cm);

            \draw[DarkRed, line width=1.2pt] (mybox.north east) ++(-0.5cm,0.4cm) arc (135:360:0.7cm);

            \node[below=0.5cm, align=center] at (mybox.south) {(c) Resample the configuration \\ contained in $\mathsf{B}$};
        \end{tikzpicture}
        \quad
    \end{equation*}
    \caption{The conditional block dynamics. Given an initial configuration (a), the connected components of syndrome violations within $\mathsf{B}$ (purple) are decoded and erased, while those incident on the boundary $\partial\mathsf{B}$ (red) are fixed (b). Subsequently, the syndrome configuration in the ``bulk'' of $\mathsf{B}$ is resampled from the syndrome Gibbs distribution conditioned on the syndrome patterns incident on the boundary (c). }
    \label{fig:di_dynamics}
\end{figure}

\begin{theorem}
    [Rapid Mixing within a Logical Sector, informal version of \cref{theorem:cbd_rapid_mixing}] \label{theorem:cbd_intro} Every self-correcting (classical or quantum) memory $\mathcal{M}$ of $n$ qubits on a finite-dimensional lattice satisfying the connectedness criterion of \cite{Bombin2009SelfcorrectingQC} (\cref{definition:connectedness}), exhibits a constant inverse temperature $\beta_\mathcal{M}$, and constants $c_1, c_2>0, c_3> 1$ with the following guarantee. 

    At any inverse temperature $\beta>\beta_\mathcal{M}$, there exists a continuous-time $\polylog (n)$-local Lindbladian evolution $\mathcal{L}$ which, when initialized from a ground state $\psi$, rapidly converges to the associated Gibbs state within the logical sector $\psi$. That is, $\forall t\geq 0:$
    \begin{equation}\label{eq:maintheoremintro}
       \under{\big\|e^{t\mathcal{L}}[\psi] - \rho_\beta^\psi\big\|_1}{\textsf{Distance to Metastability}} \leq \under{c_1\cdot n \cdot e^{-c_2 t}}{\textsf{Rapid Mixing in a Sector}} + \under{t\cdot (n\cdot e^{-(\beta-\beta_\mathcal{M})\cdot \log^{c_3} n})}{\textsf{Leakage Rate}}\,.
    \end{equation}
    In particular, if $\log n\ll t\ll n^{O(1)}$, then the distance above can be made polynomially small.
\end{theorem}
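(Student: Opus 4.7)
The plan is to identify the Lindbladian $\mathcal{L}$ as a continuous-time implementation of the conditional block dynamics (CBD) of \cref{fig:di_dynamics}, on blocks of polylogarithmic side-length $L = \log^{c} n$. Using the factorization $\rho_\beta^\psi \cong \psi \otimes \pi_\beta$, each block update decomposes into (i) a classical resampling of the syndrome bulk, conditioned on the frozen boundary-crossing clusters, and (ii) a coherent local decoder that routes the logical content of $\psi$ through the update whenever the bulk clusters are small. This factorization splits the trace-norm error into a ``syndrome mixing'' error and a ``logical leakage'' error, corresponding to the two summands of \eqref{eq:maintheoremintro}.

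For the syndrome mixing, I would adapt the combinatorial path-coupling strategy of \cite{Dyer2002MixingIT} to exploit \cref{theorem:ssm_intro}. Couple two copies of the induced syndrome chain via a shared block schedule and maximally-coupled resamplings. The critical observation is that the CBD's freezing rule guarantees every block update is performed from a \emph{ground-state-like} syndrome boundary --- a sparse collection of frozen boundary-crossing contours --- and that pairwise differences between the two copies are confined to a local disagreement region $\mathsf{C}$ inherited from prior updates. Domain SSM then bounds the total variation distance between the two conditional resampling distributions on any inner region $\mathsf{A}$ by $e^{-\Omega(d(\mathsf{A},\mathsf{C}))}$. Summing over the $O(n/L^d)$ blocks tiling the lattice, a weighted Hamming-type distance between the chains contracts at a constant continuous-time rate, yielding the $c_1 n e^{-c_2 t}$ summand.

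The logical leakage is then a Peierls-type estimate: for the local decoding step within a block $\mathsf{B}$ to introduce a non-trivial logical operator on $\psi$, there must exist a contiguous syndrome cluster of diameter $\Omega(L)$ inside $\mathsf{B}$. The connectedness criterion of \cite{Bombin2009SelfcorrectingQC} yields, at low temperature, a bound of the form $\pi_\beta(\exists\,\text{cluster of size}\geq k) \leq n\, e^{-(\beta-\beta_\mathcal{M}) k}$; choosing $L = \log^{c_3} n$ and union-bounding over the $O(tn)$ expected block updates in time $t$ produces the second summand of \eqref{eq:maintheoremintro}.

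The main obstacle, in my estimation, is justifying the use of \cref{theorem:ssm_intro} along the entire \emph{trajectory} of the dynamics rather than only at equilibrium: the frozen boundary-crossing clusters produced by the CBD are not literal ground-state boundaries, whereas Domain SSM as stated only controls ground-state boundaries. The resolution should be a self-consistent bootstrap --- showing inductively that the law of the frozen clusters encountered along the evolution remains close to the corresponding $\pi_\beta$-marginals (so that domain SSM applies at each step) and that the Peierls a priori estimate on cluster sizes is preserved under iteration. The specific design of the CBD --- freezing rather than resampling boundary-crossing clusters --- is precisely what makes this bootstrap feasible, and is where extending the classical argument of \cite{Dyer2002MixingIT} beyond the monotone, SSM regime requires genuinely new ideas.
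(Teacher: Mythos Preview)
Your architecture is broadly right---factorize into logical and syndrome sectors, run a path-coupling argument on the syndrome chain using \textsf{Domain SSM}, and control large clusters via Peierls---but two key pieces are misidentified. First, the ``main obstacle'' you flag is not an obstacle at all: the CBD is specifically designed so that after freezing each boundary-crossing cluster $\mathsf{V}_i$ \emph{together with its neighborhood} $N(\mathsf{V}_i)$, the resampling region $\mathsf{B} = \mathcal{B}^L \setminus \bigcup_i (N(\mathsf{V}_i)\cup \mathsf{V}_i)$ has boundary $\partial\mathsf{B}$ consisting only of unviolated syndromes (since the outer boundary of a connected component of violations is, by definition, unviolated). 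So $s_{\partial\mathsf{B}}=0$ \emph{exactly}, and \textsf{Domain SSM} applies directly at every step---no inductive bootstrap is needed. This structural observation is the whole point of the freezing rule, and once you see it, the path-coupling goes through on a ``cluster distance'' metric (adjacent configurations differ by one connected component of syndromes), restricted to the set $\Omega_R$ of configurations with all clusters of size $\le R$.

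Second, your interpretation of the second summand as ``logical leakage'' from the local decoder is not how the paper's argument works. By choosing $|\mathcal{B}^L|\le m^\chi$ (below the energy barrier), the physical update $\mathsf{corr}(s_{\mathsf{B}})\oplus\mathsf{corr}(s'_{\mathsf{B}})$ \emph{never} introduces a nontrivial logical operator: the reduction from the code-space dynamics to the syndrome chain is exact (cf.\ \cref{lemma:cl_reduction,lemma:q_reduction}), with no probabilistic error. The leakage term instead arises entirely within the syndrome analysis: one first restricts the chain to $\Omega_R$, proves rapid mixing of the restricted chain via path coupling, and then separately bounds (via Peierls) the probability that the unrestricted chain escapes $\Omega_R$ in time $t$, plus the TV distance between $\pi_\beta|_{\Omega_R}$ and $\pi_\beta$. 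Your Peierls estimate is the right tool, but it controls escape from $\Omega_R$ in syndrome space, not logical errors in the decoder.
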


 There are two different time-scales appearing in \cref{eq:maintheoremintro}, which distinguish two contributions to the convergence behavior: the rapid mixing within a logical sector, and the quasi-polynomially slow leakage out of that ``sector''. Our proof of \cref{theorem:cbd_intro} is based on a path coupling argument within a high-measure subset of the configuration space, consisting of syndrome configurations that are geometrically well-behaved. Within this subset, we prove contraction under block updates by leveraging \textsf{Domain SSM}. Combined with a ``memory'' statement which shows that the system remains confined to the well-behaved region for a sufficiently long time, we are able to establish rapid mixing from the ground state.

\cref{theorem:cbd_intro} also implies the existence of a quantum circuit of $\polylog(n)$ depth, comprised of $\polylog(n)$-local channels, which prepares the Gibbs state within a logical sector starting from the ground state. However, when implemented using 2-qubit gates, each quasi-local channel could have exponential depth in the size of the block in the worst case. Therefore the total quantum circuit depth to prepare the Gibbs state within a logical sector is quasi-polynomial in $n$ without assuming further structure of the code.\footnote{We remark this is a drawback often encountered by quantum Gibbs samplers based exclusively on decay-of-correlations, such as that of~\cite{Brando2016FiniteCL}.} Nonetheless, in the special case of the 4D toric code, we show via a connection to the even subgraph model and the \textit{worm process} \cite{PS01,CGTT16}, that the block updates can be implemented efficiently.

\begin{theorem}[Efficient Block Updates for the 4D toric code, informal version of \cref{theorem:block_updates_toric}]
\label{thm:efficiency}
For the 4D toric code, each conditional block update on a ball of side-length $L$ can be implemented up to error $\epsilon$ using $\poly(L)\cdot \log(1/\epsilon)$ local gates and ancillas.  
\end{theorem}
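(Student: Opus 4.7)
The plan is to reduce each conditional block update to a classical sampling task on the syndromes inside $\mathsf{B}$, show that this task can be solved in $\poly(L)\cdot \log(1/\epsilon)$ classical steps, and then lift the classical sampler to a coherent quantum circuit of comparable size.

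First, I would exploit the CSS structure of the 4D toric code to decouple the $X$- and $Z$-sectors and reduce the block-update channel on the physical qubits in $\mathsf{B}$ to a classical sampling channel on syndromes. The dynamics of \cref{definition:cbd} fixes the connected components of syndromes incident on $\partial \mathsf{B}$ and resamples the remaining bulk syndromes from the conditional Gibbs distribution $\pi_\beta(\cdot \mid s_{\partial \mathsf{B}})$. For a CSS code, the implementing channel can be realized by first coherently measuring the interior syndromes in the stabilizer basis, then classically sampling a new interior pattern $s'$ from the conditional measure, and finally applying a Pauli correction supported in $\mathsf{B}$ (modulo stabilizers) that maps the measured pattern to $s'$. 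Since $\mathsf{B}$ contains $\poly(L)$ qubits, the measurement and correction steps only cost $\poly(L)$ gates, so the entire cost is controlled by the classical sampling.

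Next, I would represent $\pi_\beta(\cdot \mid s_{\partial\mathsf{B}})$ inside $\mathsf{B}$ as an even-subgraph/surface measure with sources fixed by the frozen boundary clusters, to which the Prokof'ev--Svistunov worm process \cite{PS01,CGTT16} applies. For the 4D toric code, syndromes of a given type live on $k$-cells and their pre-images (errors) are $(k+1)$-chains of Boltzmann weight $e^{-\beta |E|}$ with a prescribed boundary; quotienting by stabilizers gives the standard loop/plaquette (even-subgraph) representation used by the worm. To bound the mixing time of the worm inside the $\poly(L)$-sized block, I would combine the \textsf{Domain SSM} bound of \cref{theorem:ssm_intro} with standard block-to-local mixing arguments in the spirit of \cite{MO1,Dyer2002MixingIT}: under the ground-state-like boundary conditions produced by the conditional dynamics, \textsf{Domain SSM} yields contraction of single-bond heat-bath updates up to scales of order $L$, which translates into a classical $\epsilon$-TV sampler using $\poly(L)\cdot \log(1/\epsilon)$ steps.

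The final step, lifting to a quantum circuit, is routine: the classical sampler uses $\poly(L)\cdot \log(1/\epsilon)$ elementary operations on $\poly(L)$ bits of working memory and a fresh random tape, so standard reversible simulation produces a unitary on $\poly(L)\cdot\log(1/\epsilon)$ qubits whose output register specifies the Pauli correction to apply to $\mathsf{B}$. After applying the correction and uncomputing the work qubits, one obtains a quantum channel within diamond distance $\epsilon$ of the ideal conditional block update. The main obstacle I anticipate is the mixing bound on the worm: specifically, one has to verify that the frozen boundary clusters produced by the conditional block dynamics (\cref{fig:di_dynamics}) place us in exactly the boundary regime governed by \textsf{Domain SSM}, and then carefully translate the syndrome-level correlation decay into a contraction for the worm dynamics on even subgraphs with those boundary sources; everything else (the CSS decomposition and reversible simulation of classical computation) is structural.
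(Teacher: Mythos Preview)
Your overall architecture (reduce to classical syndrome sampling, invoke the worm process, lift reversibly to a quantum circuit) matches the paper's, but your plan for the one nontrivial step, bounding the worm's mixing time, diverges from the paper and is where the proposal is shaky.

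The paper does \emph{not} use \textsf{Domain SSM} here at all. It observes (\cref{claim:toric_is_even_subgraph}) that because the conditional block dynamics freezes the boundary clusters and pins $s_{\partial\mathsf{B}}=0$, the conditional measure $\pi_\beta(s_\mathsf{B}\mid s_{\partial\mathsf{B}}=0)$ is \emph{exactly} the even-subgraph model on the induced subgraph of $\mathsf{M}\subseteq\mathcal{B}^L$, with no sources. It then black-boxes the result of \cite{CGTT16} (\cref{theorem:worm_mixing}): the worm process mixes in time $O(d|V|^5|E|^2\log(|V|/\epsilon))$ on \emph{any} finite graph at \emph{any} temperature. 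Since $|V|,|E|=\poly(L)$, this immediately gives the $\poly(L)\cdot\log(1/\epsilon)$ bound. No correlation-decay input is needed or used.

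Your proposed route, deriving worm mixing from \textsf{Domain SSM} via ``block-to-local'' arguments \`a la \cite{MO1,Dyer2002MixingIT}, is both unnecessary and not obviously sound. Those implications are for single-site heat-bath chains on spin systems whose conditional measures factor over a fixed Markov network; the worm chain lives on an enlarged state space (even subgraphs plus two defects) and its transitions are not local Gibbs updates in the sense required. Translating syndrome-level \textsf{Domain SSM} into a contraction statement for the worm would itself be a new result. Also note a small modeling slip: there are no ``boundary sources'' in the paper's setup, since the frozen clusters are excised from $\mathsf{B}$ and the boundary is pinned to $0$; the sampling problem is the vanilla even-subgraph model, which is precisely why the unconditional \cite{CGTT16} bound applies out of the box.
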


\noindent Consequently, the entire state preparation algorithm for the 4D toric code can be simulated using a quantum circuit with 2-qubit gates of circuit depth $\polylog(n)$, up to inverse-polynomial error.

\subsection{Discussion}
\label{section:discussion}
We end this section with a discussion on applications and related questions.

\paragraph{Classification of quantum phases of matter.} Our results also lead to an interesting characterization of quantum phases of matter in self-correcting quantum memories. According to the definitions of \cite{Coser2018ClassificationOP,Sang2023MixedStateQP}, a pair of mixed states $\rho, \sigma$ is said to be in the same phase $\rho \leftrightarrow \sigma$, if $\rho$ can be mapped to $\sigma$ via a low-depth sequence of (quasi-)local channels and vice versa.

In this context, our result thus rigorously establishes that the ground states $\ket{\psi}$ of a broad class of self-correcting quantum memories lie in the same phase as the associated Gibbs states within a logical sector $\rho_{\beta}^\psi$, which has been an open question (see e.g.~\cite{Zhou2025FinitetemperatureQT}).
\begin{equation}
    \begin{tikzcd}[column sep=large]
\ketbra{\psi}  \arrow[r, shift left=1.2ex, "{\mathcal{L}}"] 
  \arrow[r, shift right=1.2ex, swap, leftarrow, "{\mathsf{Dec}}"]
& \rho_{\beta}^\psi
\end{tikzcd}
\end{equation}
While our Lindbladian evolution is not strictly local, its quasi-local dynamics can be simulated by a $\polylog(n)$ depth circuit of $\polylog(n)$-local channels. Therefore, it has a quasi-local lightcone, which is often sufficient for applications such as relating long-range correlations in both states via Lieb-Robinson bounds. We remark that the converse channel is guaranteed by the existence of a quasi-local (parallel) decoder for the low-temperature excitations $\ketbra{\psi} =\mathsf{Dec}(\rho_{\beta}^\psi)$, which can be extracted from the original results of \cite{Alicki2008OnTS, Bombin2009SelfcorrectingQC} or from \cite{Zhou2025FinitetemperatureQT}.

\paragraph{Generalized Lifshitz conjecture.} A famous open question in the theory of Glauber dynamics is the Lifshitz law, which posits that the mixing time for the 2D Ising model in an $L\times L$ box, under $0$ boundary conditions, should scale as $L^2$ (see~\cite{M94,MT10,LMST13} for partial progress). Heuristically, this reflects the timescale required for curvature-driven domain shrinkage, in analog to surface tension. One could posit a generalization of this conjecture to other self-correcting memories, such as high dimensional toric or color codes, where one instead pins the syndromes at the boundary to $0$. Polynomial mixing of subgraphs of $\mathbb{Z}^d$ with 0 boundary, could yield, for instance, a fast simulation of each block update in our quasi-local block dynamics, similar to \cref{thm:efficiency}.

\paragraph{The quantum ``reverse Mpemba'' effect.} As previously discussed, a necessary step towards proving rapid mixing within a logical sector under single-site Glauber dynamics is to establish \emph{gound state stability}. Namely, one must show that the memory time of the 4D toric code, under single-site Glauber dynamics when initialized from a ground state, is at least comparable to that when initialized from the Gibbs state within a logical sector. In some sense, this is asking if ``lower temperature states can heat up quicker'', curiously related to the reverse of the \textit{Mpemba effect} \cite{LR17}, which says that hot water can freeze faster than cold water. 

While we do not expect self-correcting memories to exhibit this behavior, in the absence of monotonicity, it would seem that rigorously ruling it out requires new techniques. Indeed, we remark that the only non-monotone models of self-correcting memories where ground state stability has rigorously been proven are based on (a strong form of) expansion, with linear confinement up to linear distance such as classical or quantum Tanner codes~\cite{Leverrier2022quantum} (this is folklore and follows similarly to e.g.~\cite[Theorem~A.12]{Hong2024QuantumMA}).

\paragraph{Other models of self-correcting memories.} As discussed earlier in \cref{section:related}, our framework focuses on codes on finite-dimensional lattices and does not apply to classical and quantum expander codes. Here we elaborate further on their key structural differences: metachecks and the presence of local minima.
First, unlike lattice models, many expander codes have no metachecks at all; thus the Peierls arguments used in our framework are not applicable. The proofs of thermal stability in these models are instead based on linear confinement \cite{Hong2024QuantumMA, placke2024topologicalquantumspinglass}. 

Second, unlike lattice models, many expander codes have local minima: there exist low-energy eigenstates that are far from any ground state; in other words, constant-weight syndromes associated to high-weight Pauli errors. For these models, a precise definition of Gibbs states within a logical sector must take into consideration these local minimas. We leave it as an interesting open question to propose the correct definition and prove rapid mixing within a logical sector for quantum expander codes. 

\subsection{Organization}
In \cref{section:techniques}, we give a proof-sketch of our main results. \cref{section:preliminaries} introduces our notation and basic concepts, and \cref{section:self-correcting} describes the model of Glauber dynamics on error-correcting codes and the basic definitions of self-correction.

The core of our technical argument begins in \cref{section:conditions}, where we state the assumptions on the family of codes that we study. We then show in \cref{section:ssm} that these codes admit \textsf{Domain SSM} (\cref{theorem:ssm_intro}). We leverage this spatial mixing property in \cref{section:mixing} to prove the rapid mixing of our conditional block dynamics (\cref{theorem:cbd_intro}). Finally, in \cref{section:locality}, we show how to efficiently implement the conditional block updates for the 4D toric code.

The appendices provide background on the 4D toric code in \cref{section:toric_background} and on open quantum system dynamics in \cref{section:open_quantum_systems}.

\subsection*{Acknowledgements}
We thank Anurag Anshu, João Basso, Nikolas Breuckmann, Chi-Fang Chen, Tyler D. Ellison, Daniel Stilck França, Jeongwan Haah, Vedika Khemani, Lin Lin,  Sidhanth Mohanty, Nick O’Dea, Amit Rajaraman, Shengqi Sang, Umesh Vazirani, David Wu, and Alexander Zlokapa for helpful discussions. This work was done in part while the authors were visiting the Simons Institute for the Theory of Computing, supported by DOE QSA grant number FP00010905. T.B.~acknowledges support by the National Science Foundation under Grant No.~DGE~2146752. R.G.\ is supported in part by NSF CAREER grant 2440509 and NSF grant DMS-2246780. Y.L.~acknowledges support through the Harvard Quantum Initiative postdoctoral fellowship and Anurag Anshu's startup funds from Harvard University.

\section{Technical Overview}
\label{section:techniques}

We dedicate this section to a more detailed proof sketch, in establishing \textsf{Domain} \textsf{SSM} for the syndrome Gibbs distribution, and our mixing-time argument for the conditional block-dynamics. We tailor the discussion to the 4D toric code for simplicity and concreteness, and defer generalizations to the bulk of the paper.  

A dedicated background on the  4D toric code is placed in \cref{section:toric_background}. Here, we simply recall that the valid syndrome configurations of the 4D toric code, much like the 2D Ising model, form cycles (or more generally even subgraphs) on the edges of the torus.

\subsection{Self-Correction via Peierls Arguments}
\label{subsection:self_correction_summary}

We begin with an introduction to the key tool we use to establish spatial mixing properties: a Peierls argument. To illustrate this, we sketch \cite{Alicki2008OnTS}'s proof that the 4D toric code is thermally stable when initialized from the Gibbs state within a logical sector. Roughly speaking, the encoded logical information is stable until a topologically nontrivial syndrome-cycle arises. The crux of the argument is to reduce this dynamical stability question, \textit{when initialized from the Gibbs state within a logical sector}, to a static estimate. Since the syndrome configurations are already in thermal equilibrium (recall \cref{eq:gibbswithinlogicalsector}), it suffices to compute whether $\pi_\beta$ gives significant probability to configurations with long cycles. This is precisely the role of the Peierls argument.

Suppose we fix any even subgraph $\mathsf{C}\subset [m]$ on the $m$ edges of the lattice, of length $|\mathsf{C}|\geq \ell$ and let $\mathsf{K}_\mathsf{C}\subset \{0, 1\}^m$ denote the set of syndrome configurations which contain syndrome violations on $\mathsf{C}$. The main idea in the argument is to consider the set of configurations with $\mathsf{C}$ removed:
\begin{equation}
    \mathsf{K}_\mathsf{C}^* = \left\{ s\oplus \mathsf{C}: s\in \mathsf{K}_\mathsf{C}\right\}\,.
\end{equation}

\noindent The set of configurations $\mathsf{K}_\mathsf{C}^* $ is in bijection with $\mathsf{K}_\mathsf{C}$, and so long as $\mathsf{C}$ is topologically trivial, consists only of valid syndrome configurations.\footnote{For simplicity, we refer the reader to \cref{section:peierls} for the case that $\mathsf{C}$ is topologically non-trivial.} We can then analyze the probability that the specific collection of cycles $\mathsf{C}$ appears:
\begin{equation}
\mathbb{P}_{X\leftarrow\pi_\beta}[X\in \mathsf{K}_\mathsf{C}] \leq \frac{\sum_{s\in \mathsf{K}_\mathsf{C}} e^{-\beta|s|}}{\sum_{s\in \mathsf{K}_\mathsf{C}\cup \mathsf{K}_\mathsf{C}^*} e^{-\beta|s|}} \leq \frac{1}{1+e^{\beta |\mathsf{C}|}}\leq e^{-\beta \ell}.
\end{equation}

\noindent A union bound is then sufficient to show that the probability that a sample from $\pi_\beta$ contains any large cycle of length $\ell\gg\log m$ also decays exponentially in $\ell$, as long as the inverse temperature $\beta>\ln d$ is sufficiently large relative to the degree $d$ of the graph. If the torus is of side-length $w\sim m^{1/4}$, then the leakage rate of the memory is bounded by the probability that cycles of size $\Omega(w)$ occur, which implies a memory time of $e^{\Omega(w)}$. Next, we show how to derive decay-of-correlations properties for the syndrome Gibbs distribution via a delicate extension of this argument.

\subsection{Correlation-Decay via Separating Surfaces}
Here, we show how to prove $\mathsf{WSM}$ within a logical sector (\cref{fig:wsm}) for the syndrome Gibbs distribution of the 4D toric code; the extension to \textsf{Domain SSM} (\cref{fig:ssm}) is geometrically more involved, but follows the same strategy. That is, we specialize \cref{definition:domain_ssm_intro} to the case where $\mathsf{B}$ is the full graph, $\mathsf{A} = \{u\}$ is a single syndrome, and we let $\mathsf{C}=\partial\mathcal{B}^L_u$ denote the boundary of a box of radius $L$ around $u$, which reduces to \cref{fig:wsm}. 

Our goal now is to argue that the marginals at $u$ of the syndrome Gibbs distribution $\pi_\beta$ and $\pi_\beta(\cdot | s_{\partial\mathcal{B}_u^L} = 0)$ without and with boundary pinning, are close. Suppose we draw a sample independently from each:
\begin{equation}\label{equation:samples_from_conditional_gibbs_intro}
    X\leftarrow \pi_\beta\big(\cdot \big) \,,\qquad Y\leftarrow \pi_\beta\big(\cdot \big| s_{\partial\mathcal{B}_u^L} = 0\big)\,.
\end{equation}
The key idea is to reason that both $X, Y$ agree on a \textit{separating surface}, a notion originating from percolation theoretic ideas, found in e.g. \cite{grimmett1989percolation} (\cref{definition:sep_surface_intro}), with high probability. %

\begin{definition}
    [Separating Surface, Informal]\label{definition:sep_surface_intro} For a syndrome configuration $s\in \{0, 1\}^m$ and a center $u\in [m]$, a collection $\mathsf{S}\subset \mathcal{B}^L_u$ of syndromes is said to be a separating surface for $u$ if 
    \begin{enumerate}
        \item There are no violations in $\mathsf{S}$. That is, $\forall v\in \mathsf{S}:s_v = 0$.
        \item $\mathsf{S}$ disconnects (screens) $u$ from the boundary of $\mathcal{B}^L_u$ (see \cref{fig:separating_surface}).
    \end{enumerate}
\end{definition}

\begin{figure}[t]
    \centering
    \begin{equation}
        \begin{tikzpicture}[baseline=0cm]
        
            \node (mybox) [draw=RoyalBlue, line width=1mm, rectangle, minimum width=4cm, minimum height=4cm] at (0,0) {};

            \node[below right =1.2cm and 0.35cm] at (mybox.center) {$\mathsf{B}$};

            \node (mycircle) [draw, circle, minimum size=1cm, fill=Gray!20] at (mybox.center) {$\mathsf{A}$};

            \draw[RoyalBlue, dashed, line width=1.2pt, rotate=45] (mycircle.center) ellipse (1.5cm and 1.0cm);

            \draw (mybox.north west) ++(0.8cm,-0.8cm) coordinate (topleft);
            \draw [DarkRed, line width=1.0pt, rotate=-45] (topleft) ellipse (0.45cm and 0.2cm);

            \draw (mycircle.center) ++(.75cm,0.5cm) coordinate (innerellipsecenter);
            \draw [DarkRed, line width=1.0pt, rotate=30] (innerellipsecenter) ellipse (0.3cm and 0.15cm);

            \draw (mybox.south east) ++(-0.4cm,0.8cm) coordinate (bottomrightinner);
            \draw [DarkRed, line width=1.0pt, rotate=0] (bottomrightinner) ellipse (0.25cm and 0.6cm);

            \node[below=0.5cm, align=center] at (mybox.south) {(a) A sample $Y\leftarrow \pi_\beta\big(\cdot \big|s_{\partial\mathsf{B}} = 0\big)$};

        \end{tikzpicture}
        \quad \quad \quad
        \begin{tikzpicture}[baseline=0cm]
        
            \node (mybox) [draw, dashed, rectangle, minimum width=4cm, minimum height=4cm] at (0,0) {};

            \node[below right =1.2cm and 0.35cm] at (mybox.center) {$\mathsf{B}$};

            \node (mycircle) [draw, circle, minimum size=1cm, fill=Gray!20] at (mybox.center) {$\mathsf{A}$};

             \draw[RoyalBlue, dashed, line width=1.2pt, rotate=45] (mycircle.center) ellipse (1.5cm and 1.0cm);

            \draw (mybox.north west) ++(0.8cm,-0.8cm) coordinate (topleft);
            \draw [DarkRed, line width=1.0pt, rotate=-45] (topleft) ellipse (0.45cm and 0.2cm);

            \draw (mycircle.center) ++(.75cm,0.5cm) coordinate (innerellipsecenter);
            \draw [DarkRed, line width=1.0pt, rotate=30] (innerellipsecenter) ellipse (0.3cm and 0.15cm);

            \draw (mybox.south east) ++(0.2cm,0.8cm) coordinate (bottomrightinner);
            \draw [DarkRed, line width=1.0pt, rotate=0] (bottomrightinner) ellipse (0.8cm and 0.6cm);

            \node[below=0.5cm, align=center] at (mybox.south) {(a) A sample $X\leftarrow \pi_\beta\big(\cdot \big)$};
        \end{tikzpicture}
    \end{equation}

    \caption{The samples $X, Y$ from the conditional distributions above agree on a shared \textit{separating surface} of syndromes $\mathsf{S}$ (\cref{definition:sep_surface_intro}, dashed blue) within $\mathsf{B}$ with high probability. $\mathsf{S}$ does not intersect any syndrome violation (solid red cycles), and disconnects $\mathsf{A}$ from the rest of $\mathsf{B}$.}
    \label{fig:separating_surface}
\end{figure}
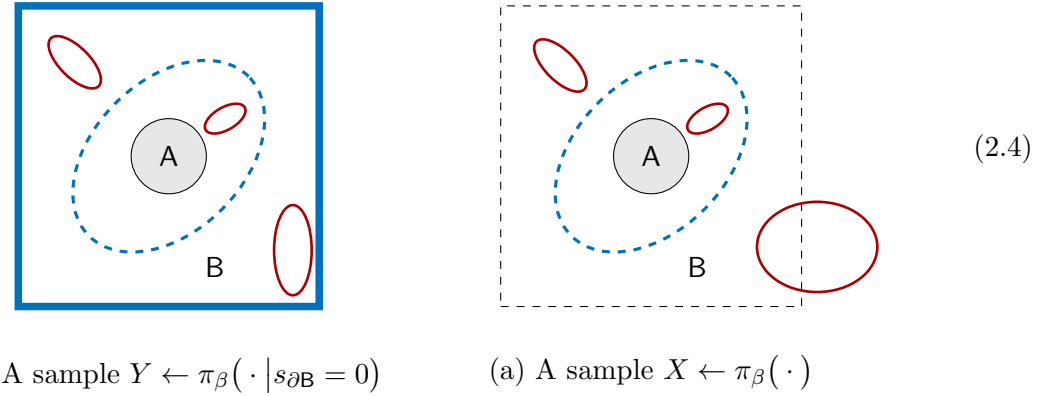

In the case of the 4D toric code, this (essentially) just stipulates that no collection of violated syndromes in $X\vee Y$ spans from outside $\partial \mathcal{B}^L_u$ to the center $u$. We claim (and justify shortly) that both $X$ and $Y$ agree on some separating surface $\mathsf{S}$ with high probability. For now, let us briefly discuss why the existence of a shared separating surface could be useful in developing coupling arguments. The crux of the proof is to leverage the Markov property.\footnote{The syndrome Gibbs distribution $\pi_\beta$ does not exactly satisfy a Markov property, but it does satisfy an approximate version which suffices for our purposes; see \cref{section:markov}.} The marginal distribution of the region in the ``interior" of the surface $\mathsf{S}$ is (approximately) the same for $X$ and $Y$, once one conditions on the existence of a separating surface at $\mathsf{S}$. One can then couple the samples in such a manner that $X_u=Y_u$ with high probability. 

To generalize this approach to \textsf{Domain SSM}, given the two boundaries $\partial \mathsf{B}, \partial(\mathsf{B\setminus C})$, we similarly exhibit a separating surface within $\mathsf{B}$ which screens $\mathsf{A}$ from the region around the region $\mathsf{C}$ where the boundary conditions differ, see \cref{section:ssm} for details. 

\paragraph{The shared separating surfaces argument.} Here we sketch the proof that the samples $X, Y$ from \cref{equation:samples_from_conditional_gibbs_intro} share a separating surface with high probability. 

\begin{claim}
    [Informal version of \cref{claim:separating_surfaces_toric}] There exists a constant $\beta_0$ such that for all $\beta>\beta_0$, the samples $X, Y\in \{0, 1\}^m$ from the conditional syndrome Gibbs distributions of \cref{equation:samples_from_conditional_gibbs_intro} agree on a separating surface $\mathsf{S}\subset \mathcal{B}^r_u$ around $u$ except with probability $\exp[-\Omega(r\cdot (\beta-\beta_0))]$.
\end{claim}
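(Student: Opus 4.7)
The plan is to carry out a Peierls / percolation argument showing that with probability at least $1 - e^{-\Omega(r(\beta-\beta_0))}$, no large connected component of violated syndromes (in either $X$ or $Y$) crosses the annulus $\mathcal{A} := \mathcal{B}_u^r \setminus \mathcal{B}_u^{r/2}$, and then to use a discrete duality fact to extract a separating surface of unviolated syndromes on which $X$ and $Y$ must agree.

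The first step is to establish a Peierls-type bound under both marginals: for any connected even subgraph $\Gamma$ of syndromes contained in $\mathcal{B}_u^r$ (hence disjoint from $\partial\mathcal{B}_u^L$, assuming we take $L \geq 2r$), both $\pi_\beta(\Gamma \subset s)$ and $\pi_\beta(\Gamma \subset s \mid s_{\partial\mathcal{B}_u^L}=0)$ are at most $e^{-\beta|\Gamma|}$. This is the XOR-swap involution $s \mapsto s \oplus \Gamma$ from \cref{subsection:self_correction_summary}: it maps configurations containing $\Gamma$ bijectively onto configurations with $\Gamma$ removed, reduces the Boltzmann weight by a factor $e^{\beta|\Gamma|}$, and preserves the boundary condition $s_{\partial\mathcal{B}_u^L}=0$ because $\Gamma$ does not touch $\partial\mathcal{B}_u^L$. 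Combined with the standard counting bound that the number of connected even subgraphs of size $\ell$ passing through a prescribed cell is at most $e^{c_d \ell}$ for a constant $c_d$ depending on the lattice degree, a union bound over starting cells of $\mathcal{A}$ and sizes $\ell \geq r/2$ yields
\begin{equation*}
    \Pr\bigl[\text{some violated component in } X \text{ or } Y \text{ crosses } \mathcal{A}\bigr] \leq 2\,|\mathcal{A}|\sum_{\ell \geq r/2} e^{-(\beta - c_d)\ell} \leq e^{-\Omega(r(\beta - \beta_0))},
\end{equation*}
for any threshold $\beta_0 > c_d$ and $\beta > \beta_0$; the factor of $2$ accounts for taking a union of this event over $X$ and over $Y$.

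On the complementary ``good'' event --- no connected component of violations in $X$ nor in $Y$ intersects both $\partial\mathcal{B}_u^{r/2}$ and $\partial\mathcal{B}_u^r$ --- I would invoke a standard discrete-duality fact from percolation theory (cf.~\cite{grimmett1989percolation}) on the cell-adjacency graph of the lattice: if no connected obstacle set crosses the annulus, then there exists a collection of non-obstacle cells in $\mathcal{A}$ that screens the inner ball from the outer boundary in the cell-adjacency graph. Applied to the obstacle set $\{v \in \mathcal{A} : X_v = 1 \text{ or } Y_v = 1\}$, this directly yields the desired $\mathsf{S} \subset \mathcal{A}$ with $X_v = Y_v = 0$ on all $v \in \mathsf{S}$, which is the shared separating surface.

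The main obstacle will be making the geometric duality step rigorous in the 4D cubical complex of the toric code: unlike in planar settings, one must carefully identify which notion of cell-adjacency (say, shared $3$-faces versus lower-dimensional intersections) is used for ``crossing'', so that a no-crossing event in one adjacency produces a separating surface in the appropriate dual adjacency. A secondary subtlety appears in Step 1 for topologically non-trivial $\Gamma$, but within a ball $\mathcal{B}_u^r$ for $r$ smaller than the torus side-length, all closed $2$-chains are null-homologous and the XOR-swap involution is unobstructed.
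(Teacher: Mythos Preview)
There is a genuine gap in your argument at the interface between the Peierls step and the duality step. You establish the ``good event'' that no connected component of violations \emph{in $X$ alone} and no connected component \emph{in $Y$ alone} crosses the annulus $\mathcal{A}$. You then invoke duality for the obstacle set $\{v : X_v = 1 \text{ or } Y_v = 1\}$. But the duality fact requires that no connected component \emph{of this union} crosses $\mathcal{A}$, which is a strictly stronger event than your good event. Concretely: a component of $X$ may reach from the inner boundary to the middle of $\mathcal{A}$, and a component of $Y$ may reach from there to the outer boundary; neither individually crosses, yet their union does, and no separating surface exists. Your Peierls bound, applied separately to $X$ and to $Y$, does not control this.

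The paper handles exactly this issue by working with the union from the start. The contrapositive to the existence of a shared surface is a path $\mathsf{p}$ of length $\geq r$ in $X \vee Y$ from the center to the boundary. At least half the vertices of $\mathsf{p}$ are violated in (say) $X$; call this collection $\mathsf{sp}$. The key point is that $\mathsf{sp}$ is \emph{not} a single connected component but a union of sub-segments, so one cannot apply your single-component Peierls bound directly. Instead, the paper applies the Peierls involution to the union $\mathsf{V}$ of all $X$-components touching $\mathsf{sp}$ (which is erasable), and uses the cluster-counting bound (\cref{fact:clustering}) that the number of subsets of size $\ell$ connected to a fixed $|\mathsf{sp}|$-set is at most $e^\ell d^{\ell - |\mathsf{sp}|}$. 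The union bound then runs over paths $\mathsf{p}$, choices of $\mathsf{sp} \subseteq \mathsf{p}$, and sizes $\ell \geq |\mathsf{sp}|$. This path-splitting and multi-component counting is the missing ingredient; once you bound paths in $X \vee Y$ this way, the duality step becomes the easy direction and your annulus construction is unnecessary.
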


\noindent The argument proceeds via the contrapositive: If $X, Y$ don't share a separating surface within a radius $r\leq L$ around the center $u$, then there exists a path $\mathsf{p}\subset \mathcal{B}^r_u$ of length at least $r$ from $u$ to $\partial \mathcal{B}^r_u$, such that each edge in $\mathsf{p}$ is violated in either $X$ or $Y$. The intuition that this path exists is that one can stitch the path together iteratively, from the outside in, traversing edges (violated syndromes) in either $X$ or $Y$. Since the path $\mathsf{p}$ is of length $\geq r$, there is a collection of subsegments (or subpaths) $\mathsf{sp}\subseteq \mathsf{p}$ of total length $\geq \frac{r}{2}$ \textit{either} entirely violated in $X$ or entirely violated in $Y$; WLOG let it be $X$. It then suffices to compute the probability that such a long collection of violated subpaths arises in a sample from the syndrome Gibbs distribution $\pi_\beta$. Intuitively, this is about controlling the probability of a large set of connected components, which follows from a careful Peierls argument similar to \cref{subsection:self_correction_summary}, with a more delicate combinatorial counting.

\subsection{The Rapid Mixing Argument}

With \textsf{Domain SSM} in hand, the crux of our mixing time argument is to identify a ``good'' fraction of the state space that captures most of the weight in $\pi_\beta$, and show that our quasi-local dynamics mixes quickly within this restricted space as a result of \textsf{Domain SSM}. 

\begin{definition}
    [The Small-Loop Configurations] Let $\Omega\subseteq\{0,1\}^m$ be the set of all valid syndrome configurations. We consider the subset of the state space of valid syndromes $\Omega_R\subset \Omega$, where all the connected components of syndromes -- i.e. the connected components of cycles -- are of length $\leq R$. 
\end{definition}

Let $L$ be the side-length of the block in our conditional block dynamics (\cref{section:algorithm}). We consider the induced dynamics on syndromes and choose the parameters such that $L\gg R \gg \log m$. Note that when $R \gg \log m$, the Peierls argument of \cref{subsection:self_correction_summary} implies that the syndrome Gibbs distribution is supported mostly in $\Omega_R$, so it suffices to converge to the Gibbs state restricted to $\Omega_R$. The conceptual structure of our proof rests on establishing the following dichotomy:

\begin{enumerate}
    \item \textbf{The memory lemma} (\cref{lemma:Pisstable}). If the Markov chain is initialized from the ground state $0^m\in \Omega_R$, it takes time exponential in $R$ to leave the set $\Omega_R$. 

    \item \textbf{Rapid mixing of the restricted chain} (\cref{lemma:Q_fast_mixing}). The restricted chain $\mathsf{Q}$, given by running the conditional block-dynamics and rejecting transitions out of $\Omega_R$, is rapid mixing -- more precisely, in $O(\log(m/\epsilon))$ continuous time, assuming $R\gg \log m$. 
\end{enumerate}

The memory lemma (\cref{lemma:Pisstable}) is a statement on the stability of the small-loop configurations; and follows from similar arguments to the original proofs of self-correction of the 4D toric code \cite{Alicki2008OnTS} also discussed in \cref{subsection:self_correction_summary}. In some sense, \cref{lemma:Pisstable} implies that the conditional block-dynamics and its restriction to $\Omega_R$ are indistinguishable, at least over the time scale required for the dynamics to converge to the Gibbs distribution on $\Omega_R$.

In \cref{subsection:fast_mixing}, we elaborate on why the convergence in the syndrome distribution suggested above, implies the convergence of the original Lindbladian dynamics as in  \cref{theorem:cbd_intro}. Next, we explain the key argument in our proof: to show the restricted chain mixes quickly.

\paragraph{The restricted chain is rapid-mixing.}  

We follow a path-coupling approach (see \cref{fact:path_coupling} or \cite{guruswami2016rapidlymixingmarkovchains}), akin to the high-temperature argument of~\cite{Dyer2002MixingIT}. Given two configurations of syndromes $X, Y\in \Omega_R\subset \{0, 1\}^m$ which are ``adjacent", we establish a coupling $\omega:\Omega_R\times \Omega_R\rightarrow \Omega_R\times \Omega_R$ such that after one step of the restricted chain $\mathsf{Q}$, the distance between $X, Y$ has decreased.

In order to quantify the contraction between $X, Y$, we need to be cautious about the notion of adjacency. Indeed, recall that not all strings in $\{0, 1\}^m$ are valid syndromes; thus we cannot use the usual Hamming distance. 

\begin{definition}
    [The Cluster Distance] We define a graph $\mathcal{G} = (\Omega_R, E)$ over the state space $\Omega_R$, by connecting $x, y\in \Omega_R$ if they differ by the addition or removal of exactly one connected component of syndromes. The distance\footnote{For the 4D toric code, the distance is measured by the number of differing connected components of cycles.} $d_{\mathsf{Cluster}}(x, y)$ is the shortest path length from $x$ to $y$ in $\mathcal{G}$.
\end{definition}

In \cref{claim:cycle_distance_contracts}, we claim that at sufficiently low temperatures $\beta\geq \beta_0$ (the same threshold for the onset of \textsf{Domain} \textsf{SSM}), there exists a coupling $\omega$ such that the distance between adjacent pairs $X, Y$ s.t. $d_{\mathsf{Cluster}}(X, Y)=1$ contracts after each step of the restricted chain $\mathsf{Q}$:
\begin{equation}\label{equation:cd_contracts}
   \forall X, Y \text{ s.t. } d_{\mathsf{Cluster}}(X, Y)=1:\quad   \mathbb{E}_\omega\big[d_{\mathsf{Cluster}}(X', Y')\big] \leq 1-\alpha\equiv 1-\frac{1}{m}\,,
\end{equation}
\noindent where $X',Y'$ are one step of $\mathsf{Q}$ starting from $X,Y$ respectively. Following standard path-coupling arguments, applying \cref{equation:cd_contracts} iteratively in $t$ discrete times steps is sufficient to establish rapid mixing of $\mathsf{Q}$, starting from the ground state $0^m$:
\begin{equation}\label{equation:cd_contracts_iteratively}
    \big\|\mathsf{Q}^t(0^m) - \pi_\beta|_{\Omega_R}\big\|_1\leq  \mathbb{E}\big[d_{\mathsf{Cluster}}(X_t, Y_t)|X_0=0^m, Y_0\leftarrow \pi_\beta|_{\Omega_R}\big] \leq O(m)\cdot (1-\alpha)^t,
\end{equation}
or in words, an $O(m\log \frac{m}{\epsilon})$ discrete-time mixing time, which can be converted into the claimed result of $O(\log \frac{m}{\epsilon})$ in continuous time (see \cref{fact:cont_disc}). To conclude, we give a sketch of our proof of \cref{equation:cd_contracts}, that the cluster distance contracts. 

\paragraph{The cluster distance contracts, via $\mathsf{Domain}$ $\mathsf{SSM}$.}  
Note that $X, Y\in \Omega_R$ differ only on a single small connected component  $\mathsf{V}$ of violated syndromes of length $\leq R$. Thus, under a (coupled) conditional block update on some block $\mathcal{B}^L$, there are just three possibilities: 
\begin{enumerate}
    \item The block $\mathcal{B}^L$ completely misses the differing cluster $\mathsf{V}$ -- in which case the distance is unchanged. 
    \item The block completely contains the differing cluster, $\mathsf{V}\subset \mathcal{B}^L$ -- since $X, Y$ agree everywhere else (and in particular on $\partial\mathcal{B}^L$), after the update the distance is now $d_{\mathsf{Cluster}}(X', Y') = 0$.
    \item The boundary of the block-update $\partial \mathcal{B}^L$ hits the differing cluster $\mathsf{V}$. That is, $\mathsf{V}\cap \partial\mathcal{B}^L\neq \emptyset$. 
\end{enumerate}

\noindent This third case is the only possibility the distance between $X', Y'$ could have increased.

The challenge here is that, at low-temperatures, the update to the interior of $\mathcal{B}^L$ could be highly sensitive to its boundary configurations; in particular, the distance can increase by the entire size of the box in the worst case. This is why we utilize the \textit{conditional block dynamics}, introduced in \cref{section:algorithm}. By freezing all the connected components incident on $\partial \mathcal{B}^L$, the update to the interior of the box can always be viewed as conditioning on an ``inner boundary'' consisting exclusively of \emph{unviolated syndromes}. The intuition stemming from the $\mathsf{Domain}$ $\mathsf{SSM}$ property, is then that the updates in $X$ and $Y$ should only differ locally, in a radius $r\sim \log m$  around the differing cluster $\mathsf{V}$.

To be more precise, let us introduce some notation. Suppose $X, Y$ agree on connected components of cycles $\mathsf{V}_1, \mathsf{V}_2, \cdots$ which are incident on $\partial \mathcal{B}^L$, and disagree only on $\mathsf{V}$ (which, say, is in $Y$ but not $X$). By stitching their neighborhoods together, we can define an ``interior" to the box $\mathcal{B}^L$:
\begin{equation}
    \mathsf{B} \equiv \mathcal{B}^L\setminus \bigcup_i \big(N(\mathsf{V}_i)\cup \mathsf{V}_i\big)\,,
\end{equation}
where $N(\cdot)$ denotes neighborhood. This removes from the box the vertices in, and adjacent to, some $\mathsf{V}_i$. Crucially, since each $\mathsf{V}_i$ is a connected component of violated syndromes, the syndromes on its outer boundary $N(\mathsf{V}_i)\setminus \mathsf{V}_i$ must be unviolated (in both $X$ and $Y$). Consequently, $X_{\partial \mathsf{B}} = 0$. Analogously, if we remove $\mathsf{C} = \mathsf{V}\cup N(\mathsf{V})$ from $\mathsf{B}$, we construct a surface $\partial (\mathsf{B}\setminus \mathsf{C})$ such that $Y_{\partial (\mathsf{B}\setminus \mathsf{C})}=0$. The conditional block updates can then be cast as sampling from the conditional Gibbs distributions
\begin{equation}
    X|_{\mathsf{B}} \leftarrow \pi_\beta(s_\mathsf{B} | s_{\partial \mathsf{B}} = 0),\quad  Y|_{\mathsf{B}\setminus \mathsf{C}} \leftarrow \pi_\beta(s_{\mathsf{B}\setminus \mathsf{C}} | s_{\partial (\mathsf{B}\setminus \mathsf{C})} = 0).
\end{equation}

\noindent We can now apply the \textsf{Domain} \textsf{SSM} property to the region $\mathsf{B}$ and the boundary deformation $\mathsf{C}$, to conclude the marginals of $X', Y'$, on a distant region $\mathsf{B}\setminus \mathcal{B}_\mathsf{V}^r$ which is distance $r$ away from the differing cycle $\mathsf{V}$, are close. Accordingly, one can couple the updates such that $X', Y'$ agree on the region $\mathsf{B}\setminus \mathcal{B}_\mathsf{V}^r$, which in turn gives us a bound on the cluster-distance increase under Case 3 above. That is, $\forall X, Y$ s.t. $d_{\mathsf{Cluster}}(X, Y)=1$,
\begin{equation}
\mathbb{E}_\omega\big[d_{\mathsf{Cluster}}(X', Y')\big| \under{\big(\mathsf{V}\cap \partial\mathcal{B}^L\neq \emptyset\big)}{\text{Conditioned on Case 3}} \big] \leq |\mathcal{B}_\mathsf{V}^r| \bigg(1 + \under{|\mathcal{B}^L| \times e^{-\Omega(r(\beta-\beta_0))}}{\text{If the coupling fails}}\,\bigg)\,.
\end{equation}

\noindent The finite-dimensional lattice geometry (more generally, the amenability of the underlying graph) ensures that the probability the disagreeing set $\mathsf{V}$ is interior to the block (Case 2) is much more likely than on its boundary (Case 3); geometrical arguments with a careful choice of parameters then gives the desired claim in \cref{equation:cd_contracts}. Under an appropriate choice of parameters $R\sim \log m$ and $L\sim \polylog (m)$, we are able to establish \cref{equation:cd_contracts}, which in turn gives the desired rapid mixing bound in \cref{equation:cd_contracts_iteratively}. See \cref{section:mixing} for more details.

\section{Preliminaries and Notation}
\label{section:preliminaries}

In this section, we present basic definitions and facts of the underlying geometry of the codes we study (\cref{section:geometry}), classical Markov chains and Glauber dynamics (\cref{section:markov_chain}), and classical and quantum error-correction (\cref{section:error_correction}).

\subsection{Underlying Geometry}
\label{section:geometry}

Let $G = ([n], E)$ be an undirected graph. We refer to the ball or box $\mathcal{B}^R_\mathsf{U}\subseteq [n]$ of radius $R$ around a subset $\mathsf{U}\subseteq [n]$ as 
\begin{equation}
    \mathcal{B}^R_\mathsf{U} = \big\{v\in [n]: d(\mathsf{U}, v)\leq R\big\},
\end{equation}

\noindent where $d(\mathsf{U}, v)$ is the length of the shortest path in $G$ between $v$ and any $u\in\mathsf{U}$. We also associate the boundary of width $w$ of the box, 
\begin{equation}
    \partial_w\mathcal{B}^R_\mathsf{U} = \big\{v\in [n]: R< d(\mathsf{U}, v)\leq R+w\big\}   = \mathcal{B}^{R+w}_\mathsf{U}\setminus \mathcal{B}^R_\mathsf{U}
\end{equation}

\noindent and for simplicity simply refer to $\partial\mathcal{B}^R_\mathsf{U}=\partial_1\mathcal{B}^R_\mathsf{U}  $ as ``the boundary". On an amenable graph (such as finite dimensional lattices), the boundary (area) to bulk (volume) ratio of any box is $o(1)$ in $R$. The families of codes we will consider in this paper are defined on (a slight strengthening of) an amenable graph. 

\begin{definition}
    [Uniformly Amenable Graphs]\label{definition:amenable} We refer to a graphs $G = ([n], E)$ as $c(G)$-\textsf{uniformly-amenable} if there exist constants $c_1, c_2$ such that for every $w<R$, boxes in $G$ satisfy
    \begin{equation}
      \max_{u\in [n]}|\mathcal{B}^w_u|\cdot  \frac{ \max_{u\in [n]}|\partial_w\mathcal{B}^R_u| }{\min_{u\in [n]}|\mathcal{B}^R_u|} \leq c_2 \cdot \bigg(\frac{w^{c(G)}}{R}\bigg)^{c_1}.
    \end{equation}
\end{definition}

We comment that this is implied by standard amenability in conjunction with polynomial growth. As a canonical example,

\begin{example}
    The $d$-dimensional torus $\mathbb{T}_d=(\mathbb{Z}/w\mathbb{Z})^d$ is $c(\mathbb{T}_d) = d+1$ \textsf{uniformly-amenable}. 
\end{example}

We also require a basic fact on the number of connected components in bounded degree graphs.

\begin{fact}[Cluster counting, Lemma 5 of~\cite{Aliferis2008Accuracy}] \label{fact:clustering} Let $G = ([n], E)$ be a graph of degree $\leq z$, and fix $T\subset [n], |T|=t$. The number of subsets $S\subset [n], |S|=s$ such that $T\subseteq S$ and $S$ is a union of connected components in $G$, each of which contains at least one node in $T$, is at most $ e^{s} z^{s-t}$.
\end{fact}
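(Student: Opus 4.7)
The plan is a canonical encoding argument in the Peierls tradition. Decompose each valid $S$ into its connected components $C_1, \ldots, C_r$ in $G[S]$. By hypothesis each $C_i$ contains at least one vertex of $T$, so $r \leq t$; writing $t_i = |T \cap C_i|$, we have $\sum_i t_i = t$ and $\sum_i |C_i| = s$. For each component I would fix a canonical spanning tree $\mathcal{T}_i$ rooted at the smallest-indexed vertex of $T \cap C_i$, with some fixed rule (say, BFS with lexicographic tiebreaking).

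Next I would encode the spanning forest $\bigcup_i \mathcal{T}_i$ by a concatenated depth-first traversal: visit components in the order of their roots, and within each component explore children in lexicographic order. The walk has exactly $s-r$ forward steps, one per tree-edge, each choosing one of at most $z$ neighbors in $G$. Standard Catalan/ballot-type bounds show the number of possible shapes of the walk (interleavings of forward, backward, and component-switch moves) is at most $e^{O(s)}$. Since distinct $S$ produce distinct canonical encodings, this yields an initial bound $|\{S\}| \leq e^{O(s)} z^{s-r}$.

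To sharpen the exponent of $z$ from $s-r$ down to $s-t$, I would exploit a Steiner-skeleton observation. Any spanning tree of $C_i$ containing its $t_i$ marked vertices of $T \cap C_i$ must include a Steiner subtree on those vertices of at least $t_i - 1$ edges. Those Steiner edges can be encoded ``for free'' (without consuming a factor of $z$) by forcing the DFS to traverse the canonical tree-path toward the next-unvisited $T$-vertex before branching; along such a forced segment the forward neighbor is determined by $T$ and the walk-so-far rather than by a free neighbor-choice. Over all components this saves $\sum_i (t_i - 1) = t - r$ factors of $z$, yielding $z^{s-t}$.

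The main obstacle I expect is making the Steiner-skeleton refinement canonically reconstructible from the encoded walk alone, so that no extra $z$-factors sneak in to label the forced steps. With careful lexicographic tiebreaking in the choice of $\mathcal{T}_i$ (so that tree-paths between consecutive $T$-vertices are functions of $T$ and the walk prefix), this bookkeeping becomes routine. Absorbing the $e^{O(s)}$ prefactor from the Catalan bound into a single $e^s$ factor then gives the claimed $e^s z^{s-t}$. If the Steiner refinement proves too delicate, the unrefined argument already yields a bound $e^{O(s)} z^{s-r} \leq e^{O(s)} z^{s-1}$, which suffices for most Peierls applications but is strictly weaker than the stated form.
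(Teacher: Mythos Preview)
The paper does not supply its own proof of this fact; it is quoted from \cite{Aliferis2008Accuracy}. So there is no in-paper argument to compare against, and I evaluate your plan on its own.

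Your spanning-forest DFS encoding is the standard route and correctly delivers $e^{O(s)} z^{s-r}$, with $r$ the number of components. The gap is in the Steiner-skeleton step. You assert that along the canonical tree-path toward the next unvisited $T$-vertex ``the forward neighbor is determined by $T$ and the walk-so-far.'' This is circular: that tree-path lives inside the canonical spanning tree $\mathcal{T}_i$, which is itself a function of $S$ --- exactly the object you are encoding. Until the decoder has reconstructed $S$ it does not know $\mathcal{T}_i$, hence does not know which neighbor the ``forced'' step takes, so the step is not free. Your own caveat (``making the Steiner-skeleton refinement canonically reconstructible'') is precisely where the argument breaks, and it is not merely bookkeeping.

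The clean fix is to change the decomposition rather than the walk. Run a multi-source BFS in $G[S]$ from all $t$ vertices of $T$ simultaneously (with lexicographic tiebreaking); this produces a spanning forest of $G[S]$ with exactly $t$ trees --- one rooted at each $T$-vertex --- and hence exactly $s-t$ edges. (Equivalently, in each spanning tree $\mathcal{T}_i$ delete $t_i-1$ edges so as to separate its $t_i$ marked vertices into singletons.) Now encode each of the $t$ trees by your DFS from its $T$-root: the total number of forward steps is exactly $s-t$, each a choice in $[z]$, and the concatenated walk shapes contribute only an $e^{O(s)}$ combinatorial factor. This gives $e^{O(s)} z^{s-t}$ with no circularity. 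Tightening the prefactor to the stated $e^s$ requires a sharper tree-counting estimate in place of the crude Dyck bound; that part genuinely is bookkeeping.
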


\subsection{Markov Chains and Glauber Dynamics}
\label{section:markov_chain}

Consider a (discrete-time) ergodic Markov chain with transition matrix $P$, on a finite state space $\Omega$, reversible with respect to a stationary distribution $\pi$. To quantify its convergence given a generic starting point $x_0\in \Omega$, we compute the statistical distance
\begin{equation}
    d^{\mathsf{disc}}(x_0, t)  = \|\mathbb{P}[X_t^{x_0}] - \pi\|_1,
\end{equation}

\noindent where $X_0^{x_0}=x_0, X_1^{x_0}, \cdots, X_t^{x_0}$ describe the evolution of the chain $P$ starting from $x_0$.

We often interchange between continuous time and discrete time dynamics in our analysis. The continuous time Glauber dynamics on a graph $G$ assigns every vertex
an i.i.d. rate-1 Poisson clock; when the clock at $v\in [n]$ rings, replaces the spin $\sigma_v$ according to the conditional distribution given the neighboring spins.
\begin{fact}
    [\text{\cite[Theorem~20.3]{Levin2017MarkovCA}}]\label{fact:cont_disc}
    There exists an absolute constant $C$ such that $\forall x_0\in \Omega$,
    \begin{equation}
        d^{\mathsf{cont}}(x_0, C\cdot t)\leq d^{\mathsf{disc}}(x_0, n\cdot t) \leq  d^{\mathsf{cont}}(x_0, C^{-1}\cdot t).
    \end{equation}
\end{fact}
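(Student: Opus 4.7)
The plan is to use the Poissonization coupling between the two chains. With rate-$1$ clocks at each of the $n$ sites, the continuous-time Glauber dynamics can be realized as: draw the total number of clock rings $K(t) \sim \mathrm{Poisson}(nt)$, then apply $K(t)$ consecutive steps of the discrete chain $P$ (pick a uniform site, heat-bath update). This yields the semigroup identity $e^{t\mathcal L} = \sum_{k \geq 0} e^{-nt}(nt)^k P^k/k!$, which is the only bridge between the two time parameters that I would use.

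For the upper bound $d^{\mathsf{cont}}(x_0, Ct) \leq d^{\mathsf{disc}}(x_0, nt)$, I would combine three ingredients: convexity of total variation against the shared stationary measure $\pi$, which gives $d^{\mathsf{cont}}(x_0, Ct) \leq \mathbb{E}_{K \sim \mathrm{Pois}(nCt)}[\,d^{\mathsf{disc}}(x_0, K)\,]$; monotonicity of $k \mapsto d^{\mathsf{disc}}(x_0, k)$ for reversible chains (standard, since $\pi$ is $P$-stationary); and the multiplicative Chernoff tail $\mathbb{P}[\mathrm{Pois}(nCt) < nt] \leq e^{-c(C)\,nt}$ for $C$ sufficiently large. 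The resulting exponentially small additive error can be absorbed into a slightly larger $C$, as the stated inequality is already vacuous once the discrete TV distance has dropped below $e^{-\Omega(nt)}$.

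The reverse direction $d^{\mathsf{disc}}(x_0, nt) \leq d^{\mathsf{cont}}(x_0, t/C)$ is the main obstacle, because TV distance is subadditive but does not admit a matching \emph{lower} bound by an average; naive Poissonization fails here. I would route around this spectrally: heat-bath Glauber is reversible and positive-semidefinite (as a convex combination of projections onto conditional-measure eigenspaces), so in the $\ell^2(\pi)$ eigenbasis with eigenvalues $1 = \lambda_1 \geq \lambda_2 \geq \cdots \geq 0$, one has $P^{nt}$ with eigenvalues $\lambda_i^{nt} \leq e^{-n(1-\lambda_i) t}$ (from $\log(1-x) \leq -x$), which are precisely the eigenvalues of $e^{t\mathcal L}$ at time $t$. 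Converting the resulting $\ell^2(\pi)$ bound to a TV bound uniformly in $x_0$ costs at most a constant factor in time via a standard short warm-up/burn-in step, which is where the extra factor of $C$ in $t/C$ enters.
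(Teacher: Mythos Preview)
The paper does not prove this statement; it is cited as a textbook fact from \cite[Theorem~20.3]{Levin2017MarkovCA}, so there is no ``paper's proof'' to compare against. Your Poissonization framework is indeed the standard route, but both directions of your sketch have gaps worth flagging.

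For the first inequality, your argument yields
\[
d^{\mathsf{cont}}(x_0, Ct) \;\leq\; d^{\mathsf{disc}}(x_0, nt) \;+\; \mathbb{P}\big[\mathrm{Pois}(nCt) < nt\big],
\]
and the additive tail is indeed $e^{-\Omega(nt)}$. But your claim that the inequality becomes ``vacuous'' once $d^{\mathsf{disc}}(x_0, nt) < e^{-\Omega(nt)}$ is backwards: that is precisely the regime where the inequality is \emph{strongest}, since it forces $d^{\mathsf{cont}}$ to be equally tiny. The textbook argument sidesteps this by working with the worst-case distance $\bar d(t) = \sup_{x,y}\|P^t(x,\cdot)-P^t(y,\cdot)\|_{TV}$ and its submultiplicativity $\bar d(s+t)\leq \bar d(s)\bar d(t)$, which lets one absorb additive slack into a multiplicative time dilation cleanly.

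For the second inequality, the eigenvalue comparison $\lambda_i^{nt} \leq e^{-nt(1-\lambda_i)}$ is correct for $\lambda_i\in[0,1]$, and it does give $\|(P^{nt}-\Pi_\pi)f\|_{\ell^2(\pi)} \leq \|(e^{t\mathcal L}-\Pi_\pi)f\|_{\ell^2(\pi)}$ for every $f$. But this does \emph{not} yield a pointwise TV comparison from $\delta_{x_0}$: in the expansion $\sum_{i\geq 2} \lambda_i^{nt}\phi_i(x_0)\phi_i(y)$ the terms carry signs, so termwise domination of magnitudes does not bound the absolute value of the sum. Your ``burn-in'' suggestion does not repair this, because a short warm-up changes the initial condition but does not convert an $\ell^2(\pi)$ operator inequality into a signed pointwise one. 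The argument in \cite{Levin2017MarkovCA} again routes through $\bar d$ and submultiplicativity rather than through the spectrum.
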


\noindent As such, our analysis focuses on the discrete dynamics.  The quantum analogs of these statements are immediate, and spelled out for completeness \cref{section:open_quantum_systems}.

A standard tool is the following path coupling lemma; we refer the reader to the survey \cite{guruswami2016rapidlymixingmarkovchains} for a review. 

\begin{fact}
    [Path Coupling \text{\cite[Lemma~6.3]{guruswami2016rapidlymixingmarkovchains}}]\label{fact:path_coupling} Let $\Gamma:\Omega\times\Omega\rightarrow [D]$ be an integer valued metric, defined by the shortest path length in some graph $H = (\Omega, E)$ over the state space. Suppose a coupling $\omega:\Omega\times\Omega\rightarrow \Omega\times\Omega$ exists for all adjacent pairs $(x_0, y_0)$ in $H$, satisfying
    \begin{equation}
        \mathbb{E}_\omega[\Gamma(X_1, Y_1)|X_0=x_0, Y_0=y_0] \leq (1-\alpha)\cdot \Gamma(x_0, y_0), 
    \end{equation}
    \noindent for some parameter $\alpha>0$. Then $\forall x\in \Omega,\,\,d^{\mathsf{disc}}(x, t) \leq D\cdot (1-\alpha)^t$.
\end{fact}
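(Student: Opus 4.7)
My plan is to establish this standard path-coupling claim in three ingredients: (i) promote the one-step contraction of $\Gamma$ from adjacent pairs in $H$ to arbitrary pairs in $\Omega \times \Omega$; (ii) iterate in time to obtain geometric decay of the expected distance; and (iii) convert the expected-distance bound into the claimed statistical-distance bound via the standard coupling inequality.

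First, for arbitrary $x_0, y_0 \in \Omega$, I would fix a shortest path $x_0 = z_0, z_1, \ldots, z_k = y_0$ in $H$, where $k = \Gamma(x_0, y_0)$. Using the hypothesized coupling $\omega$ on each consecutive pair $(z_{i-1}, z_i)$, I would inductively glue these pairwise couplings into a single joint distribution on $\Omega^{k+1}$ whose consecutive marginals agree with $\omega$; this is done by composing conditional distributions. Restricting to the two endpoints produces a coupling of single steps of the chain started from $x_0$ and $y_0$, and applying the triangle inequality for $\Gamma$ (automatic because it is a graph distance) together with linearity of expectation yields
\begin{equation*}
\mathbb{E}[\Gamma(X_1, Y_1)] \;\leq\; \sum_{i=1}^{k} \mathbb{E}\bigl[\Gamma(Z_1^{(i-1)}, Z_1^{(i)})\bigr] \;\leq\; (1-\alpha) \sum_{i=1}^{k} \Gamma(z_{i-1}, z_i) \;=\; (1-\alpha)\,\Gamma(x_0, y_0).
\end{equation*}

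Next, iterating this construction step-by-step, conditioning on the configurations at time $t$ and applying the lifted coupling afresh, gives $\mathbb{E}[\Gamma(X_t, Y_t)] \leq (1-\alpha)^t \Gamma(x_0, y_0) \leq D\,(1-\alpha)^t$, using that $\Gamma$ takes values in $[D]$. Finally, to convert this into a statistical-distance bound, I would couple $X_0 = x$ with $Y_0$ drawn from the stationary distribution $\pi$, so that $Y_t \sim \pi$ for all $t$. Since $\Gamma$ is integer-valued, $\mathbf{1}[X_t \neq Y_t] \leq \Gamma(X_t, Y_t)$, and the standard coupling inequality gives $d^{\mathsf{disc}}(x, t) \leq 2\,\Pr[X_t \neq Y_t] \leq 2\,\mathbb{E}[\Gamma(X_t, Y_t)] \leq 2D(1-\alpha)^t$, matching the stated bound up to an absolute constant absorbed into the $L^1$-versus-total-variation normalization.

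The only substantive step is (i), the geodesic glueing: the hypothesis supplies only adjacent-pair couplings, and one must verify that the inductive composition yields a joint law whose consecutive marginals are indeed the prescribed $\omega$'s. This is routine but worth being explicit about, since without the triangle inequality along the geodesic the telescoping bound above would not close. Once that is in place, the exponential contraction (ii) and the statistical-distance conversion (iii) follow from standard manipulations, and I do not anticipate any serious technical obstacle beyond the bookkeeping in step (i).
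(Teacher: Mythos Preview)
Your proposal is correct and follows the standard path-coupling argument. The paper does not give its own proof of this fact (it is cited as a known result from \cite{guruswami2016rapidlymixingmarkovchains}), but your three-step outline---geodesic glueing to lift the contraction to arbitrary pairs, iteration, and the coupling inequality---is exactly the approach sketched when the paper applies this fact in the proof of \cref{lemma:Q_fast_mixing}.
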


\subsection{Classical and Quantum Error-Correction}
\label{section:error_correction}

We dedicate this subsection to the basics of classical and quantum error correction, c.f. the book \cite{Lidar_Brun_2013}.

\begin{definition}
    An $[n, k, d]$ classical linear code $C$ of length $n$, dimension $k$ and alphabet size $2$ is a $k$-dimensional linear subspace $C\subseteq \mathbb{F}_2^n$. The distance of $C$ is $d = \min_{c\in C\setminus \{0\}} |c|$. $C$ can be specified by a parity check matrix $H\in \mathbb{F}_2^{m\times n}$, where $k = n - \mathsf{rank}_{\mathbb{F}_2} H$.
\end{definition}

\noindent Given any bit-flip error $e\in \{0, 1\}^n$, the \textit{syndrome} of $e$ is denoted as $\sigma(e)=He$. In this paper we will focus on a class of quantum error-correcting codes known as CSS \cite{Calderbank1996GoodQE, Steane1996SimpleQE} codes. 

\begin{definition}
    [CSS Codes] Given $H_X\in \mathbb{F}_{2}^{m_x\times n}$ and $H_Z\in \mathbb{F}_{2}^{m_z\times n}$ satisfying $H_XH_Z^T = 0$, consider the set of commuting $n$-qubit Pauli operators (referred to as stabilizers)

    \begin{equation}
        \mathcal{S}_X = \big\{X^{a} = X^{a_1}\otimes X^{a_2}\otimes \cdots \otimes X^{a_n} : a\in H_X\big\},\quad  \mathcal{S}_Z = \big\{Z^{b} : b\in H_Z\big\}
    \end{equation}
    \noindent where $a, b$ are rows of $H_X, H_Z$ respectively. The quantum error-correcting code $\mathsf{CSS}(H_X, H_Z)$ is the joint $+1$ eigenspace of $\mathcal{S}_X, \mathcal{S}_Z$.
\end{definition}

\noindent Given any bitstring $e\in \{0, 1\}^n$ associated to an $X$-error $X^e$, we denote as $\sigma_Z(e) = H_Ze$ its $Z$-syndrome. The support $\mathsf{supp}(a)\subset [n]$ of a parity check $a\in H$ (a row of $H$) are the symbols (bits or qubits) of the code that it acts on. 

\begin{definition}
    The $X$-distance $d_X$ of $\mathsf{CSS}(H_X, H_Z)$ is the minimum weight of any string $e$ such that $\sigma_Z(e) = H_Ze = 0$ and $e\notin \mathsf{rowspan}(H_X)$.
\end{definition}

In other words, the Pauli operator $X^e$ is undetected by all the $Z$ checks and isn't generated by the $X$ checks. The $Z$ distance is analogous. 

\begin{definition}
    The number of logical qubits $k$ encoded into $\mathsf{CSS}(H_X, H_Z)$ is $n - \mathsf{rank}_{\mathbb{F}_2}H_X-\mathsf{rank}_{\mathbb{F}_2}H_Z$.
\end{definition}

\noindent Put together, $\mathsf{CSS}(H_X, H_Z)$ is said to be a $[[n, k, d=\min(d_X, d_Z)]]$ quantum error-correcting code.

\subsection{Syndromes and their Redundancies}

Given a parity check matrix $H$ and a vector $s\in \{0, 1\}^m$, the set of solutions $x$ to $s=Hx$ is either empty or defines an affine subspace of dimension $2^k$. This is because the parity check matrix may not be full rank, and therefore not all vectors $s\in \{0, 1\}^m$ correspond to valid syndromes.

\begin{definition}
    [Valid Syndromes] Fix a parity check matrix $H\in \mathbb{F}_2^{m\times n}$. We refer to the image $\Omega = \mathsf{Im}(H)$ as the set of \textit{valid syndromes}. A \emph{meta-check matrix} $M\in \mathbb{F}_2^{t\times m}$ is a parity check matrix for valid syndromes, which satisfies $MH=0$ and thus $\mathsf{Im}(H)\subseteq \mathsf{Ker}(M)$.
\end{definition}

\begin{example}
    For instance, in both the 2D Ising model and 4D toric code, the metachecks enforce that the violated parity checks (supported on the edges of the lattice) define an even subgraph.  
\end{example}

The redundancies between syndrome vectors are captured by the \textit{meta-check} matrix $M$, which encodes the linear dependencies among the rows of $H$. Note that not all elements of $\mathsf{Ker}(M)$ are necessarily valid syndromes, which would entail $\mathsf{Ker}(M) = \mathsf{Im}(H)$.

We depict these dependencies via a graph $G_{M}$, referred to as the \textit{syndrome network}. 

\begin{definition}
    [The Syndrome Network]\label{definition:syndrome_network} Fix a parity check matrix $H\in \mathbb{F}_2^{m\times n}$ and a choice of meta-check matrix $M\in \mathbb{F}_2^{t\times m}$ s.t. $\mathsf{Im}(H)\subseteq \mathsf{Ker}(M)$. The syndrome network is a graph $G_M = ([m], E)$ where the vertices correspond to the parity checks of $H$, and two vertices $a, b\in [m]$ are connected in $E$ if and only if there exists a row in $M$ whose $a$-th and $b$-th entry are both 1.
\end{definition}

\begin{remark}
    The syndrome network defined above is only a Markov network of the Gibbs distribution over syndromes if $\mathsf{Ker}(M) = \mathsf{Im}(H)$; otherwise it is just a graphical description of some redundancies between the checks. We return to this point in \cref{section:conditions}.  
\end{remark}

An analogous description occurs in quantum CSS codes, to the $X$ and $Z$ parity check matrices independently. Each joint eigenspace of the stabilizers $\mathcal{S}_X, \mathcal{S}_Z$ is a subspace of dimension $2^k$. We index each of these joint eigenspaces using its $X$ and $Z$ syndromes. 

\begin{definition}
    [The Syndrome Vector] We associate a string $s = (s_X, s_Z)\in \mathsf{Im}(H_X)\times \mathsf{Im}(H_Z)$ to index each joint eigenspace of the collection $(\mathcal{S}_X, \mathcal{S}_Z)$ of $m_x+m_z$ commuting operators. We denote $\Pi_s$ as the orthogonal projection onto said eigenspace, 
    \begin{equation}
        \Pi_s = \prod_{i=1}^{m_x}\left(\frac{I+(-1)^{s_{X,i}}X^{a_i}}{2}\right) \cdot \prod_{j=1}^{m_z}\left(\frac{I+(-1)^{s_{Z,j}}Z^{b_j}}{2}\right),
    \end{equation}
where $a_i, b_j$ are rows of $H_X, H_Z$ respectively.
\end{definition}

\section{Dynamics of Self-Correcting Memories}
\label{section:self-correcting}

Here we describe our model of Glauber dynamics on error-correcting codes, and present definitions of self-correcting (classical or quantum) memories based on the stability of the dynamics.  We begin in \cref{section:code_hamiltonians} by describing how to associate a local Hamiltonian to any Low-Density-Parity-Check (LDPC) code. In \cref{section:glauber_codes} we describe the model of Glauber dynamics, on both the code-space and the syndrome-space. In \cref{section:memory_definition}, we describe the definitions of self-correction we use.

\subsection{The Gibbs States of Code Hamiltonians}
\label{section:code_hamiltonians}

Given a classical linear code specified by a parity check matrix $H\in \mathbb{F}_2^{m\times n}$, there is a canonical way to define a Hamiltonian: penalize (i.e. increase the energy of) each violated parity check.

\begin{definition}
    [Code Hamiltonian]\label{definition:code_hamiltonian} One can associate a (classical, diagonal) Hamiltonian $ \mathcal{H}_{\text{cl}}:(\mathbb{C}^2)^{\otimes n}\rightarrow (\mathbb{C}^2)^{\otimes n}$ to any linear code $H\in \mathbb{F}_2^{m\times n}$,
    \begin{equation}
        \mathcal{H}_{\text{cl}} =  \sum_{a\in H}\bigg(\frac{\mathbb{I} - Z^a}{2}\bigg),
    \end{equation}
    where $a\in H$ means indexing over row vectors of $H$. Analogously, we define a code Hamiltonian $ \mathcal{H}_{q}:(\mathbb{C}^2)^{\otimes n}\rightarrow (\mathbb{C}^2)^{\otimes n}$ of a quantum CSS code  $\mathsf{CSS}(H_X, H_Z)$:
\begin{equation}
    \mathcal{H}_q = \under{\sum_{a\in H_X}\bigg(\frac{\mathbb{I}-X^{a}}{2}\bigg)}{X \text{ checks}} + \under{\sum_{b\in H_Z} \bigg(\frac{\mathbb{I}-Z^{b}}{2}\bigg)}{Z \text{ checks}}.
\end{equation}
\end{definition}

\noindent These are positive, gapped, and commuting Hamiltonians, which are local in the sense that the support of the interaction terms is limited to the support of the parity checks.

\begin{fact}
    The eigenspaces of a given code Hamiltonian $\mathcal{H}$ are indexed by the valid syndromes of its parity check matrices. That is, for each valid syndrome $s\in \Omega$ one can define an eigenspace $\Pi_s$ of $\mathcal{H}$ of energy $|s|$, the Hamming weight of the syndrome vector. In other words,
    \begin{equation}
        \mathcal{H} = \sum_{s\in\Omega}\left|s\right|\cdot \Pi_s.
    \end{equation}
\end{fact}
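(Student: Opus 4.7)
The plan is to diagonalize $\mathcal{H}$ in the simultaneous eigenbasis of the commuting parity-check operators, and then read off the energies directly. Each summand of $\mathcal{H}_{\text{cl}}$ or $\mathcal{H}_q$ is of the form $\tfrac{1}{2}(\mathbb{I} - P)$ where $P$ is a Pauli operator with $P^2 = \mathbb{I}$; hence each summand is the orthogonal projector onto the $-1$-eigenspace of $P$. In the quantum case, the hypothesis $H_X H_Z^T = 0$ ensures that all $X$-checks mutually commute with all $Z$-checks (and of course each family commutes within itself), so the $m_x + m_z$ Pauli operators $\{X^{a_i}\}\cup\{Z^{b_j}\}$ admit a simultaneous eigenbasis. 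In the classical case this is immediate since every $Z^a$ is diagonal in the computational basis.

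Next I would label the joint eigenspaces by their $\pm 1$ eigenvalue patterns, rewritten as bitstrings $s \in \{0,1\}^{m}$ (with $s_i = 1$ corresponding to eigenvalue $-1$ of the $i$-th check). On the joint eigenspace indexed by $s$, the operator $\tfrac{1}{2}(\mathbb{I}-P_i)$ acts as $1$ if $s_i = 1$ and as $0$ otherwise, so the full Hamiltonian acts as the scalar $|s|$. Summing the spectral projectors then gives $\mathcal{H} = \sum_{s} |s|\,\Pi_s$, where $\Pi_s$ is the projector defined by the product formula already introduced for $\Pi_s$ in the syndrome-vector definition (and analogously in the classical case as a diagonal projector onto the computational-basis strings $x$ with $Hx = s$).

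It remains to identify which $s$ index nonzero eigenspaces, which is the one substantive point. In the classical case, $\Pi_s \neq 0$ iff there exists $x \in \mathbb{F}_2^n$ with $Hx = s$, i.e.\ iff $s \in \mathsf{Im}(H) = \Omega$; for $s \notin \mathsf{Im}(H)$, no computational-basis vector has the prescribed syndrome and the product of the $\tfrac{1}{2}(\mathbb{I}\pm Z^{a_i})$ factors annihilates every basis vector. In the quantum case, one applies the same reasoning coordinate-wise for each CSS sector: $\Pi_{(s_X,s_Z)} \neq 0$ iff $s_X \in \mathsf{Im}(H_X)$ and $s_Z \in \mathsf{Im}(H_Z)$, which is guaranteed by the stabilizer formalism (one may exhibit an eigenvector by starting from a codeword and applying a Pauli representative of the syndrome). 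The main conceptual point — the only one that is not just linear algebra — is thus this identification of valid syndromes with nonzero spectral projectors, which reduces to the elementary fact that the image of the parity-check map is exactly the set of achievable syndrome vectors. The remaining bookkeeping (orthogonality and completeness of the family $\{\Pi_s\}_{s \in \Omega}$) is immediate from the commuting-projector structure.
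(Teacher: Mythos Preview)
The paper states this as a \emph{Fact} without proof, treating it as standard background. Your argument is correct and is exactly the standard justification one would supply: simultaneously diagonalize the commuting check operators, read off the energy on each joint eigenspace as the Hamming weight of the syndrome label, and observe that the nonzero joint eigenspaces are precisely those labeled by achievable syndromes $s\in\mathsf{Im}(H)$ (respectively $(s_X,s_Z)\in\mathsf{Im}(H_X)\times\mathsf{Im}(H_Z)$ in the CSS case). There is nothing to compare against, and no gap in your reasoning.
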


This description of the eigenspaces of $\mathcal{H}$ gives us an explicit characterization of its Gibbs state.

\begin{fact}\label{fact:gibbs_fragmentation}
    Fix an inverse temperature $\beta$. The Gibbs states of code Hamiltonians are mixtures (convex combinations) of syndrome subspaces:
\begin{equation}
    \rho_\beta = \frac{e^{-\beta \mathcal{H}}}{ \Tr[e^{-\beta \mathcal{H}}]} =  \sum_{s\in \Omega} \bigg(\frac{\Pi_s}{2^{k}}\bigg) \times \pi_\beta(s)\,, \text{ where }\pi_\beta(s)=\frac{e^{-\beta |s|} }{\sum_{s'\in\Omega}e^{-\beta |s'|} }\,.
\end{equation}
\noindent We refer to the probability distribution $\pi_\beta(s)$ as the \emph{syndrome Gibbs distribution}.
\end{fact}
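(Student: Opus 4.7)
The plan is to proceed by direct spectral computation, using the eigenspace decomposition of $\mathcal{H}$ stated in the previous fact. First I would write $\mathcal{H} = \sum_{s \in \Omega} |s| \, \Pi_s$, where $\{\Pi_s\}_{s \in \Omega}$ is a complete family of mutually orthogonal projectors (i.e.\ $\Pi_s \Pi_{s'} = \delta_{s,s'} \Pi_s$ and $\sum_{s \in \Omega} \Pi_s = \mathbb{I}$). By the functional calculus applied to this commuting-term Hamiltonian, the Gibbs weight factors cleanly as
\begin{equation}
    e^{-\beta \mathcal{H}} \;=\; \sum_{s \in \Omega} e^{-\beta |s|} \, \Pi_s\,.
\end{equation}

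Next, I would compute the partition function $Z_\beta = \Tr[e^{-\beta \mathcal{H}}] = \sum_{s \in \Omega} e^{-\beta|s|} \Tr[\Pi_s]$, so the only nontrivial ingredient is to argue that $\Tr[\Pi_s] = 2^k$ for every valid syndrome $s \in \Omega$. In the classical case this is immediate from the discussion of valid syndromes in \cref{section:error_correction}: for $s = Hx$ the preimage $H^{-1}(s)$ is a coset of $\mathsf{Ker}(H)$, which has cardinality $2^{n - \mathsf{rank}_{\mathbb{F}_2}H} = 2^k$, and $\Pi_s = \sum_{x : Hx = s} \ketbra{x}$. In the CSS case, I would invoke the standard stabilizer-formalism fact that any joint $\pm 1$ eigenspace of the $m_x + m_z$ independent commuting Pauli checks has dimension $2^{n - \mathsf{rank}_{\mathbb{F}_2}H_X - \mathsf{rank}_{\mathbb{F}_2}H_Z} = 2^k$, which was already quoted when $\Pi_s$ was defined.

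Having these ingredients in hand, dividing $e^{-\beta \mathcal{H}}$ by $Z_\beta = 2^k \sum_{s' \in \Omega} e^{-\beta|s'|}$ yields
\begin{equation}
    \rho_\beta \;=\; \sum_{s \in \Omega} \frac{e^{-\beta|s|}}{2^k \sum_{s' \in \Omega} e^{-\beta|s'|}} \, \Pi_s \;=\; \sum_{s \in \Omega} \left(\frac{\Pi_s}{2^k}\right) \pi_\beta(s)\,,
\end{equation}
which is the claimed convex decomposition, with $\pi_\beta(s) \propto e^{-\beta|s|}$ exactly the syndrome Gibbs weight.

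There is no real obstacle here; the statement is essentially a bookkeeping consequence of the spectral decomposition combined with the coset/stabilizer dimension count. The only subtlety worth flagging is that the sum is restricted to \emph{valid} syndromes $s \in \Omega = \mathsf{Im}(H)$ (or $\mathsf{Im}(H_X) \times \mathsf{Im}(H_Z)$), which is what ensures $\Pi_s \neq 0$ and $\sum_s \Pi_s = \mathbb{I}$; for $s \notin \Omega$, the corresponding eigenspace is empty and contributes nothing to $e^{-\beta \mathcal{H}}$.
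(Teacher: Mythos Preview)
Your proposal is correct and is exactly the natural spectral-decomposition argument the paper has in mind; note that the paper states this as a \emph{Fact} without proof, so you have simply filled in the routine details (functional calculus on $\mathcal{H} = \sum_s |s|\,\Pi_s$ together with $\Tr[\Pi_s] = 2^k$) that the authors regard as immediate.
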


As we discuss shortly, the static properties of the syndrome Gibbs distribution $\pi_\beta(s)$ will play a crucial role in understanding the dynamical properties of the classical/quantum memory.

\begin{remark}
    The syndrome Gibbs distribution of a CSS code factorizes, i.e.,
    \begin{equation}
        \pi_\beta(s_X, s_Z) = \pi_\beta^X(s_X)\cdot \pi_\beta^Z(s_Z).
    \end{equation}
\end{remark}

\cref{fact:gibbs_fragmentation} captures a tensorization of the Gibbs state into two components. The first component is the classical distribution over syndromes. The second component is the logical component (where quantum information may be stored), and in some sense corresponds to encoding a random message into the $s$-syndrome space. 

\begin{example}
    In the 2D Ising model, $\pi_\beta(s)$ is the distribution over domain wall configurations (interfaces between 0's and 1's). $\frac{1}{2}\Pi_s$ captures the uniform distribution over two strings where the bits are filled in consistent with $s$ being the domain walls, with a fixed root vertex being $0$ or $1$.\footnote{Once one vertex is fixed, if the graph is connected the remainder can be filled in just using the domain wall knowledge.}
\end{example}

\subsection{Glauber Dynamics on Error Correcting Codes}
\label{section:glauber_codes}

\begin{figure}[t]
\centering
\begin{tcolorbox}
\textbf{Glauber dynamics.} Fix a parity check matrix $H\in \mathbb{F}_2^{m\times n}$, and let $x\in \{0, 1\}^n$ be the current configuration.

\begin{enumerate}
    \item Pick a bit $i\in [n]$ uniformly at random. 

    \item Compute the change in syndrome weight in applying a bit-flip error at location $i$,  
    \begin{equation}
        \Delta E = |H(x\oplus e_i)| - |Hx|.
    \end{equation}
    
    \item Apply the bit-flip $x'\leftarrow x\oplus e_i$ with probability given by a Boltzmann factor, 
    \begin{equation}
        p_{\mathsf{Accept}} = \frac{1}{1+e^{\beta \cdot \Delta E}}.
    \end{equation}
\end{enumerate}
\end{tcolorbox}
\begin{tcolorbox}
\textbf{Syndrome Markov chain.} Fix a parity check matrix $H\in \mathbb{F}_2^{m\times n}$ with valid syndromes $\Omega = \mathsf{Im}(H)$, and let $s\in\Omega$ be the current configuration.

\begin{enumerate}
    \item Pick a bit $i\in [n]$ uniformly at random. 

    \item Compute the change in syndrome weight in applying a bit-flip error at location $i$,  
    \begin{equation}
        \Delta E = |s\oplus (H e_i)| - |s|.
    \end{equation}
    
    \item Apply the update $s'\leftarrow s\oplus (H e_i)$ with probability given by a Boltzmann factor, 
    \begin{equation}
        p_{\mathsf{Accept}} = \frac{1}{1+e^{\beta \cdot \Delta E}}.
    \end{equation}
\end{enumerate}
\end{tcolorbox}
\caption{Single-site Glauber dynamics on a classical linear code, and its induced syndrome Markov chain.}
\label{fig:classical-glauber}
\end{figure}

We now elaborate on the particular dynamics which define classical and quantum memories. The simplest model of thermal stability \cite{Alicki2008OnTS, Bombin2009SelfcorrectingQC} is to study the evolution of the system under Glauber dynamics, at a sufficiently small constant temperature. Let us begin with the definition for classical linear codes. 

\begin{definition}[Glauber dynamics]\label{definition:classical_glauber}
    Fix a parity check matrix $H\in \mathbb{F}_2^{m\times n}$. The state space is the set of error configurations $\{0, 1\}^n$; the model of discrete-time Glauber dynamics $\mathsf{P}_{\mathsf{G}}:\mathbb{R}^{\{0, 1\}^n}\rightarrow \mathbb{R}^{\{0, 1\}^n}$ under single-site bit flips on the $n$ symbols of the code is depicted in \cref{fig:classical-glauber}. We denote the corresponding generator of the continuous time Markov semi-group as $\mathsf{L}_{\mathsf{G}}:\mathbb{R}^{\{0, 1\}^n}\rightarrow \mathbb{R}^{\{0, 1\}^n}$.
\end{definition}

By linearity, we note that the evolution (the transition probabilities) in \cref{fig:classical-glauber} doesn't depend directly on the current configuration $x$, instead only on its current syndrome $Hx$. It is therefore instructive to instead consider the induced Markov chain on the syndromes of the code. 

\begin{definition}[Syndrome Markov chain]
    Fix a parity check matrix $H\in \mathbb{F}_2^{m\times n}$. The discrete-time syndrome Markov chain $\mathsf{P}_{\text{syn}}:\mathbb{R}^\Omega\rightarrow \mathbb{R}^\Omega$ (and continuous, $\mathsf{L}_{\text{syn}}$) is defined by the induced Glauber dynamics on the set of valid syndromes $\Omega = \mathsf{Im}(H)$, shown in \cref{fig:classical-glauber}. 
\end{definition}

Definitions of thermal stability in the quantum setting depend directly on the noise model being studied. In this paper, we adopt the ``$X$ or $Z$" process $\mathcal{L} = \mathcal{L}_X+\mathcal{L}_Z$ \cite{Alicki2008OnTS}, corresponding to the Davies generator with single-qubit Pauli $X$ and $Z$ jump operators. For clarity and conciseness, we defer a detailed review of the dynamics of open quantum systems to \cref{section:open_quantum_systems}. Here we simply note that this model cleanly reduces to two independent Glauber dynamics, being run on the $X$ and $Z$ parity check matrices. 

\begin{remark}
    As a matter of notation, generators of quantum Markov semi-groups (italic) $\mathcal{L}:\mathcal{B}(\mathbb{C}^d)\rightarrow \mathcal{B}(\mathbb{C}^d)$ act on the space of density matrices, with $d$ the Hilbert space dimension. The classical analogs $\mathsf{L}:\mathbb{R}^\Omega\rightarrow \mathbb{R}^\Omega$ generate dynamics over classical probability distributions over $\Omega$.
\end{remark}

\begin{definition}
    Let $H_X\in \mathbb{F}_2^{m_X\times n}$, $H_Z\in \mathbb{F}_2^{m_Z\times n}$ define the code $\mathsf{CSS}(H_X, H_Z)$. Individually, we associate generators $\mathsf{L}_X, \mathsf{L}_Z$ to the continuous-time single-site Glauber dynamics over $H_X$ (with $Z$ errors), and $H_Z$ (with $X$ errors) as in \cref{fig:classical-glauber}. Further, we denote $\mathsf{L}_{\text{syn}}^X, \mathsf{L}_{\text{syn}}^Z$ as the Markov chain generators over $X$ and $Z$ syndromes. 
\end{definition}

\subsection{Definitions of Self-Correcting Memories}
\label{section:memory_definition}

A self-correcting memory, at a high level, provides an encoding of classical or quantum information that is stable against thermal noise. This robustness is quantified in the probability a logical failure occurs during the noise process.

\begin{definition}\label{definition:cl-self-correcting}
    A classical error correcting code associated with encoding and decoding channels $(\mathsf{Enc}, \mathsf{Dec})$ is said to be $(t, \epsilon)$ self-correcting against a noise model specified by a generator $\mathsf{L}$ if, 
    \begin{equation}
       \forall x\in \{0, 1\}^k:\quad  x'\leftarrow \mathsf{Dec}\circ e^{t\mathsf{L}}\circ \mathsf{Enc}(x), \quad \mathbb{P}[x'\neq x] \leq \epsilon.
    \end{equation}
\end{definition}

In words, after evolving the system under Glauber dynamics for time $t$, the probability of imparting a logical error after decoding is bounded by $\epsilon$. The analogous quantum definition  below stipulates that after encoding any $\ket{\psi}$, the logical information in the resulting state survives for time $t$, in the sense that $\ket{\psi}$ is recoverable by a decoding channel. 

\begin{definition}\label{definition:self-correcting}
    A quantum error correcting code associated with encoding and decoding channels $(\mathsf{Enc}, \mathsf{Dec})$ is said to be $(t, \epsilon)$ self-correcting against a noise model specified by a Lindbladian $\mathcal{L}$ if
    \begin{equation}
       \forall \ket{\psi} \in (\mathbb{C}^2)^{\otimes k}:\quad  \big\| \mathsf{Dec}\circ e^{t\mathcal{L}} \circ \mathsf{Enc}(\ketbra{\psi})- \ketbra{\psi}\big\|_1 \leq \epsilon.
    \end{equation}
\end{definition}

There are two \textit{in-equivalent} ways to interpret the encoding maps in \cref{definition:self-correcting}, which hinge on precisely which state we are encoding the message into. A standard interpretation is that $\mathsf{Enc}$ is the encoding map of the error correcting code, which encodes the logical information into a \emph{ground state} of the code Hamiltonian. Unfortunately, as discussed the stability of the ground state under the dynamics is extremely delicate to prove in general.

Instead, most provable results in the literature are in the setting where $\mathsf{Enc}$ encodes logical information directly into a \emph{metastable state}, whose syndrome is in thermal equilibrium with the noise model~\cite{Alicki2008OnTS, Bombin2009SelfcorrectingQC}. We refer to this state as the \textit{Gibbs state within a logical sector}. %

\subsection{Gibbs States within a Logical Sector}
\label{sec:gibbslogical}

The notion of a Gibbs state within a logical sector should capture a state (approximately) in thermal equilibrium, restricted to the local energy landscape around a specific ground state (codeword). There are, however, subtleties to a general definition for all CSS/classical linear codes. Here, we give a simple definition (based on \cite{Alicki2008OnTS}), which uses the syndrome Gibbs distribution.\footnote{As mentioned in \cref{section:discussion} and later in \cref{section:conditions}, it is limited to codes with no ``local minima" of very low energy.}
To begin, we define the decoding channel $\mathsf{Dec}$. Here, there already is some flexibility to the definition: we need to be to able to associate a ``canonical" correction/error, to any given syndrome. 

\begin{definition}
    [Canonical decoder]\label{def:canonical_decoder} Fix a parity check matrix $H\in\mathbb{F}_2^{m\times n}$. A canonical decoder is a mapping $\mathsf{corr}:\mathsf{Im}(H)\rightarrow \mathbb{F}_2^{n}$ from a valid syndrome to an error that is consistent with the syndrome, i.e. $s=H\cdot \mathsf{corr}(s)$.
    
    The definition extends analogously to syndromes $s \in \mathsf{Im}(H_X)\times \mathsf{Im}(H_Z)$ of quantum $\mathsf{CSS}(H_X,H_Z)$ codes, where $\mathsf{corr}:\mathsf{Im}(H)\rightarrow \{\mathbb{I}, X, Y, Z\}^{\otimes n}$ is some canonical Pauli error consistent with $s$.  
\end{definition}

\begin{example}
    This canonical error could be chosen to be the minimum weight error of syndrome $s$, $\mathsf{corr}_{\mathsf{minw}}(s)=\arg\min\{|e|:He=s\}$; ties are broken arbitrarily.
\end{example}

\begin{example}\label{example:homological}
    In the 4D toric code the common choice is $\mathsf{corr}_{\mathsf{contract}}$, which first individually contracts all topologically trivial loops according to the minimum weight enclosing surface of errors \cite[Sec.~V]{Alicki2008OnTS}. This preserves the homology class of the error configuration. Topologically non-trivial loops are then corrected using the minimum weight error (see also \cref{example:topo_non_trivial}). 
\end{example}

As we later discuss, it is a generalization of \cref{example:homological} that we adopt in the paper. 

\begin{remark}
    The decoding channel $\mathsf{Dec}$ is a (hypothetical) noiseless process defined by first measuring the syndrome $s$, and subsequently applying $\mathsf{corr}(s)$ -- thus returning the string (state) to the code-space, of syndrome $0$. The \textit{encoded} string (state) can now be extracted by decoding the codeword (code-state) via Gaussian elimination (or reverting the encoding unitary \cite{got97}). 
\end{remark}
    
The Gibbs distribution associated with a parity check matrix $H$ can be understood as:
\begin{enumerate}
    \item Sample a uniformly random codeword $c$  (s.t. $Hc=0$).
    \item Sample a random syndrome $s\leftarrow\pi_\beta(\cdot)$.
    \item Output $c\oplus \mathsf{corr}(s)$.
\end{enumerate}
Note that the choice of the decoder $\mathsf{corr}$ does not matter due to the randomness of the codeword. The Gibbs states of quantum CSS codes are analogous, with the codewords replaced by random code-states $\ket{\psi}$. If instead we fix the choice of codeword/state, but still sample from the same error distribution as above, we arrive at the Gibbs state within a logical sector. Note that here the decoder $\mathsf{corr}$ does matter, since we are performing perturbations around a fixed codeword.

\begin{definition}[Gibbs state within a logical sector]\label{def:gibbs_within_logical} Fix a parity check matrix $H\in\mathbb{F}_2^{m\times n}$, and let $\pi_\beta$ denote its syndrome Gibbs distribution. Then, for any codeword $c\in \{0, 1\}^n$, we refer to $\rho^c_{\beta}$ as the \textit{Gibbs distribution within the logical sector} $c$, where a sample from $\rho^c_{\beta}(\cdot)$ is given by
\begin{equation}
    c\oplus \mathsf{corr}(s),\quad s\leftarrow \pi_\beta(\cdot).
\end{equation}
Analogously, consider a CSS code $\mathsf{CSS}(H_X,H_Z)$ with syndrome Gibbs distribution $\pi_\beta$ supported on valid syndromes $\Omega = \mathsf{Im}(H_X)\times \mathsf{Im}(H_Z)$. Then, for any code-state $\ket{\psi}\in (\mathbb{C}^2)^{\otimes n}$,  we refer to $\rho^\psi_{\beta}$ as the \textit{Gibbs state within the logical sector} $\ket{\psi}$, where
\begin{equation}
\rho^\psi_{\beta}=\sum_{s\in\Omega} \mathsf{corr}(s)\ketbra{\psi} \mathsf{corr}(s)\cdot  \pi_\beta(s).
\end{equation}
\end{definition}

\begin{example}
    In the context of the 2D Ising model with min-weight corrections ($\mathsf{corr}_{\mathsf{minw}}$), \cref{def:gibbs_within_logical} says that the Gibbs distribution within the logical sector $0^n$ is simply the Gibbs distribution conditioned on the majority of spins being $0$. 
\end{example}

 Suppose we now run Glauber dynamics starting from $\rho^\psi_{\beta}$. Since the syndrome distribution is already in equilibrium, only the logical information can change under the dynamics. That is, the state will leak to other logical sectors, eventually converging to the genuine Gibbs state. The goal in the design of a self-correcting quantum memory is to prove this leaking process is slow, i.e. $\rho^\psi_{\beta}$ is approximately stationary under Glauber dynamics.

\begin{remark}
      \cite{Alicki2008OnTS} proved the thermal stability of the 4D toric code (below a constant temperature threshold), when initialized from $\rho^\psi_{\beta}$. That is, 
      \begin{equation}
          \left\|e^{\mc Lt}[\rho^\psi_{\beta}]-\rho^\psi_{\beta}\right\|_1\leq t \cdot e^{-\delta L},
      \end{equation}
      where $\delta>0$ is a constant and $L$ is the side-length of the 4D torus. As previously noted, this result is not applicable when initialized from the ground state $\ket{\psi}$, as the ground state is not stationary under any (even restricted to a metastable sector) positive temperature dynamics.
\end{remark}

The thesis of this paper is that for many models of self-correcting memories, the two views on stability -- from the ground state, or from the Gibbs state within a sector -- are consistent. More precisely, the dynamics starting from a ground state $\ket{\psi}$ is expected to undergo two stages:
\begin{enumerate}
    \item Starting from the ground state $\ket{\psi}$, local dynamics \textit{quickly} converges to the corresponding Gibbs state within a logical sector $\rho_\beta^\psi$.
    \item The Gibbs state within a logical sector $\rho_\beta ^\psi$ is metastable under local dynamics, and \emph{slowly} converges to the true Gibbs state $\rho_\beta$ after exponential time.
\end{enumerate}

The main goal of this work is to provide a theoretical justification of the first stage, by proving (some form of) rapid mixing to Gibbs state within a logical sector starting from the ground-state.

\section{Sufficient Conditions for Self-Correction}
\label{section:conditions}

We dedicate this section to stating the central assumptions in this paper, which define the family of self-correcting classical and quantum memories our results are applicable to. We then prove two of their most basic properties. In more detail: 
\begin{itemize}
    \item In \cref{section:criterion}, we introduce the \textit{connectedness criterion} (\cref{definition:connectedness}), a sufficient condition for self-correction first introduced by \cite{Bombin2009SelfcorrectingQC}. 
\end{itemize}

\noindent Roughly speaking, this criterion captures codes whose parity-checks and meta-checks are both LDPC, and which further admit an ``energy barrier". Unfortunately, as we discuss, on its own the connectedness criterion is insufficient to establish rapid mixing from the ground state. We further require minor additional assumptions on the ability to locally correct low-weight syndromes.

\begin{itemize}
    \item In \cref{section:peierls}, we show how the connectedness criterion (\cref{definition:connectedness}) enables the application of a Peierls argument, which bounds the probability of topologically non-trivial arrangements of syndromes. 
\end{itemize}

\noindent This gives (essentially) a self-contained re-proof that  \cref{definition:connectedness} is sufficient for self-correction, when the quantum memory is initialized from the Gibbs state within a logical sector. 

\begin{itemize}
    \item In \cref{section:markov}, we leverage \cref{definition:connectedness} to prove a Markov property for the Gibbs state within a logical sector, via syndrome Gibbs distribution of these families of codes. 
\end{itemize}

The Hammersley-Clifford theorem stipulates that the Gibbs distributions of bounded classical Hamiltonians are Markov random fields, and thereby their conditional distributions factorize -- a property that extends to commuting quantum Hamiltonians \cite{Brown2012QuantumMN}. However, such a condition does not necessarily extend to the Gibbs state within a logical sector. Here we prove some form of an approximate local Markov property for these states at low temperatures.

\subsection{The Family of Codes}
\label{section:criterion}

We dedicate this subsection to clearly formalizing the assumptions on the family of codes we consider in this paper, which are adapted from the criterion for self-correction proposed by \cite{Bombin2009SelfcorrectingQC}. Broadly speaking, the assumptions we require make reference to:

\begin{itemize}
    \item The \emph{sparsity} of the parity checks and metachecks (\cref{definition:sparsity}).
    \item An \emph{energy barrier}, which quantifies the energy cost of imparting a logical error via a small connected component of syndromes (\cref{definition:syndrome_clustering}).
    \item The ability to \emph{locally correct} small components of syndromes (\cref{definition:locally_erasable}). 
\end{itemize}

We bundle these conditions together in \cref{definition:connectedness}. Let us begin with the simplest assumption, \cref{definition:sparsity}, on the sparsity of interactions. 

\begin{tcolorbox}
\begin{properties}
    [$(\ell, d)-\mathsf{Sparsity}$]\label{definition:sparsity} A parity check matrix $H\in \mathbb{F}_2^{m\times n}$ with meta-check matrix $M\in \mathbb{F}_2^{t\times n}$ and syndrome network $G = ([m], E)$ is said to satisfy $(\ell, d)-\mathsf{Sparsity}$, if
     \begin{enumerate}
        \item $H$ \textbf{is LDPC.} $H$ is $\leq \ell$ column and row sparse.
        \item $M$ \textbf{is LDPC.} $G$ has degree $\leq d$.
    \end{enumerate}
\end{properties}
\end{tcolorbox}

Any discussion on self-correction must include a mechanism for which the error-correcting process (its decoder) could fail. For this purpose, we begin with the concept of a \textit{critical syndrome}: a syndrome which is one single-site update away from imparting a logical error.

\begin{definition}
    [Critical Syndromes, \cite{Bombin2009SelfcorrectingQC}]\label{definition:critical} Fix $H\in \mathbb{F}_2^{m\times n}$, and a decoder $\mathsf{corr}:\mathsf{Im}(H)\rightarrow \mathbb{F}_2^{n}$. A syndrome $s\in \mathsf{Im}(H)$ is said to be $\mathsf{critical}$ with respect to $\mathsf{corr}$ if there exists a single bit-flip update $i\in [n]$, leading to a new syndrome $s'\leftarrow s\oplus (He^i)$, such that
    \begin{equation}
        \mathsf{corr}(s) \oplus e^i\oplus \mathsf{corr}(s') \neq 0^n.
    \end{equation}
    In other words, the decoder is not consistent between $s$ and $s'$. The extension to CSS codes $\mathsf{CSS}(H_X, H_Z)$ replaces the condition with checking if the effective resulting corruption $\mathsf{corr}(s) \oplus P_i\oplus \mathsf{corr}(s')$ (for some single qubit Pauli $P_i$) is not generated by the code stabilizers. 
\end{definition}

Whenever the decoder $\mathsf{corr}$ is implicit, we simply say $s$ is $\mathsf{critical}$.

\begin{example}
    In a classical code under the min-weight correction function $\mathsf{corr}_{\mathsf{minw}}$, a syndrome is \textsf{critical} if there exists a consistent error which is almost equally distant to two codewords.
\end{example}

During the dynamics of the memory, errors and syndromes will arise and propagate in far-apart regions of the code. To quantify the independence among different subsets of syndromes, we further need to introduce the concept of an \textit{erasable} subset of syndromes.

\begin{definition}
[Erasable Subsets of Syndromes] \label{definition:erasable}
Fix a parity check matrix $H\in\mathbb{F}_2^{m\times n}$. A subset of violated syndromes $\mathsf{E}\subset [m]$ is \textsf{erasable} if  $\mathbf{1}_\mathsf{E}\in \mathsf{Im}(H)$ is a valid syndrome ($\mathbf{1}_\mathsf{E}\in\{0,1\}^m$ is a vector whose $i$-th entry is 1 if $i\in\mathsf{E}$ and 0 otherwise).
\end{definition}

In words, there is an error which is consistent with that subset of syndromes. Suppose $\mathsf{E}$ is an \textsf{erasable} subset, and there is a valid syndrome $s\in \mathsf{Im}(H)$ that contains $\mathsf{E}$. Then, we can apply an error (e.g. $\mathsf{corr}(\mathbf{1}_\mathsf{E})$)) resulting in a new syndrome $\Tilde{s}$ where $\mathsf{E}$ is erased from $s$.\footnote{That is, $\Tilde{s}_{\mathsf{E}}=0$, $\Tilde{s}_{[m]\setminus \mathsf{E}}=s_{[m]\setminus \mathsf{E}}$.} However, note that \cref{definition:erasable} does not make reference to the \textit{locality} (support size) of the erasing process. 

\begin{example}
    In the 2D Ising model, a topologically trivial domain wall can be erased by flipping all the spins in its interior. Similarly, in the 4D toric code, topologically trivial loops of syndromes can be removed by applying a Pauli correction to the ``interior'' of the loop.
\end{example}

We can now introduce the next set of assumptions, \cref{definition:syndrome_clustering} , which roughly speaking will constrain how syndromes cluster together in the meta-check graph $G$.

\begin{tcolorbox}
\begin{properties}
    [$(\chi, \nu)-$\textsf{Syndrome Clustering}]\label{definition:syndrome_clustering} Fix $s\in \mathsf{Im}(H)$ and a decoder $\mathsf{corr}:\mathsf{Im}(H)\rightarrow \mathbb{F}_2^n$. The violated syndromes in $s$ can be decomposed $\mathsf{V}_1\cup \mathsf{V}_2\cup \cdots \subset [m]$ into disjoint, non-adjacent, $\mathsf{erasable}$ subsets in $G$, satisfying:
    \begin{enumerate}
        \item \textbf{Energy Barrier.} If each $\mathsf{V}_i\subset [m]$ is small,  $|\mathsf{V}_i|\leq m^\chi$ for some $\chi<1$, then $s$ is not \textsf{critical} with respect to decoder $\mathsf{corr}$ of \cref{definition:generalized_contract}.
        \item $\nu$ \textbf{Connected Components are Erasable.} Each $\mathsf{V}_i$ can further be decomposed into at most $\nu$ connected components of syndromes in $G$. 
    \item \textbf{Small Connected Components are Individually Erasable.} Moreover, if $|\mathsf{V}_i|\leq m^\chi$, then each connected component of $\mathsf{V}_i$ is \textsf{erasable}.
    \end{enumerate}
\end{properties}
\end{tcolorbox}

\noindent The decoding function adopted in this paper is a generalization of the ``loop contraction" decoder of \cref{example:homological}.

\begin{definition}
    [Generalized $\mathsf{corr}_{\mathsf{contract}}$]\label{definition:generalized_contract} Given a syndrome $s$ decomposed into a collection of \textsf{erasable} subsets $\{\mathsf{V}_i\}_i$, we define $\mathsf{corr}(s)$ to sequentially erase each $\mathsf{V}_i$ according to the min weight error consistent with $\mathbf{1}_{\mathsf{V}_i}$.
\end{definition}

Let us unpack the \textsf{Syndrome Clustering} conditions above, appealing to the 2D Ising model and the 4D toric code as examples. The first of which is a standard ``energy barrier" condition.

\begin{example}
    In the 2D Ising model, if all the connected components of syndromes (i.e. domain walls/loops) are of size $\ll n^{1/2}$, then the encoded bit defined by the homology class of the spin configuration can't change under a single bit-flip. Thus, this syndrome configuration is not $\mathsf{critical}$. 
\end{example}

Most relevant to us are \cref{definition:syndrome_clustering} items 2 and 3, which roughly speaking stipulate that components of syndromes can be individually corrected.\footnote{As we extensively discuss later, this is the condition which enables an application of a Peierls-style argument \cite{Alicki2008OnTS}, to understand the large deviation probabilities of critical syndromes. As pointed out by \cite{Bombin2009SelfcorrectingQC}, the following discussion on topologically trivial/non-trivial configurations was overlooked in \cite{Alicki2008OnTS}.} At first glance, the stipulation that $\mathsf{V}_i$ is a collection of at most $\nu$ connected components (instead of just 1) may seem a bit odd. This is in order to address the distinction between \textit{topologically-trivial}, and \textit{topologically-non-trivial} syndrome configurations, which cannot be individually ``contracted" using local bit flips. 

\begin{example}\label{example:topo_non_trivial}
    Consider the Ising model on a 2D torus/doughnut, with two topologically non-trivial loops marking domain walls between regions of 0's and 1's (half-doughnut). These loops can only be removed if they annihilate each other.   
\end{example}

\cite{Bombin2009SelfcorrectingQC} proved that $(\ell, d)-\mathsf{Sparsity}$ (\cref{definition:sparsity}) as well as the first two conditions of $(\chi, \nu)-$\textsf{Syndrome} \textsf{Clustering} (\cref{definition:syndrome_clustering}) are sufficient to establish \textit{self-correction}, up to time-scales exponential in $m^\chi$. Unfortunately, these abstract conditions fall just slightly short of being sufficient for \textit{rapid mixing within a logical sector}, which is the goal of this work. Intuitively, this is since the conditions above make reference to how syndromes organize and cluster into small connected components, but they do not quantify the locality of the bit-flips errors which correct those syndromes. Accordingly, we add the following assumption:

\begin{tcolorbox}
\begin{properties}
    [\textsf{Locally Erasable Syndromes}]\label{definition:locally_erasable} Let $\mathbf{1}_{\mathsf{V}}\in \mathsf{Im}(H)$ denote a syndrome, supported on some \textsf{erasable} connected component $\mathsf{V}\subset [m]$ contained in some ball $\mathsf{V}\subset \mathcal{B}^r$ in $G$. Then its canonical correction $\mathsf{corr}(\mathbf{1}_{\mathsf{V}})\in \{0, 1\}^n$ is also contained in some ball  $\textsf{supp}(\mathcal{B}^r)\subset [n]$.
\end{properties}
\end{tcolorbox}

\begin{example}
    If the syndromes of the 4D toric code are loops confined to some subcube of the torus, then one can erase those syndromes by acting only on the subcube.
\end{example}

We bundle these conditions together and refer to them as the \textit{connectedness criterion}.

\begin{tcolorbox}
\begin{properties}
    [The Connectedness Criterion] \label{definition:connectedness} A parity check matrix $H\in \mathbb{F}_2^{m\times n}$ with associated syndrome network $G=([m], E)$ is said to satisfy $(\ell, d, \nu, \chi)-\mathsf{CC}$, if it satisfies $(\ell, d)-\mathsf{Sparsity}$, $(\chi, \nu)-$\textsf{Syndrome Clustering}, and has \textsf{Locally Erasable Syndromes}.
\end{properties}
\end{tcolorbox}

We remark that generic self-correcting classical and quantum memories on finite dimensional lattices, are expected to satisfy \cref{definition:connectedness}. Here we highlight some examples.

\begin{fact}
     Examples of codes satisfying the connectedness criterion include:
    \begin{enumerate}
        \item The 2D Ising model satisfies $(4, 6, 2, \frac{1}{2})-\mathsf{CC}$.
        \item \cite{Zhou2025FinitetemperatureQT} The Z checks of the 3D fermionic toric code satisfy $(4, 12, 4, \frac{1}{3})-\mathsf{CC}$.
        \item \cite{Dennis2001TopologicalQM} The X or Z checks of the 4D toric code satisfy $(6, 14, 4, \frac{1}{4})-\mathsf{CC}$.
        \item \cite{Bombin2009SelfcorrectingQC} The X or Z checks of the 6D Color code satisfy $(\ell, d, \nu, \frac{1}{6})-\mathsf{CC}$, for explicit constants $\ell, d, \nu$ which depend on the choice of lattice tesselation. 
    \end{enumerate}
\end{fact}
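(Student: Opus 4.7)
The plan is to verify each constituent property of $\mathsf{CC}$ (\cref{definition:connectedness}) for the four listed codes by direct inspection of the underlying cell complex. Most of the arguments already appear in the original self-correction analyses \cite{Alicki2008OnTS, Bombin2009SelfcorrectingQC, Dennis2001TopologicalQM, Zhou2025FinitetemperatureQT}; the work here is bundling them under the common language of \cref{definition:connectedness}. For each code, the strategy is: (i) read off the sparsity constants $(\ell, d)$ from the row/column degrees of the parity check matrix and the vertex degree of the syndrome network; (ii) give a canonical decomposition of valid syndromes into erasable subsets by grouping connected components according to their first $\mathbb{F}_2$-homology class on the ambient lattice; (iii) verify the energy barrier exponent $\chi$ as the ratio $\log w / \log m$, where $w$ is the torus side-length, so that subsets with $|\mathsf{V}_i| \leq m^\chi$ are necessarily topologically trivial; and (iv) verify $\mathsf{Locally\ Erasable\ Syndromes}$ via the standard fact that the minimum-weight correction of a topologically trivial 1-cycle inside a ball is supported in a neighborhood of comparable radius.

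I illustrate the approach on the 4D toric code. Qubits live on 2-cells of the hypercubic 4-torus, $X$-checks on 1-cells (edges), and $Z$-checks on 3-cells (cubes). Each edge is contained in $2(4-1) = 6$ faces (row-sparsity), and each face bounds $4$ edges (column-sparsity), giving $\ell = 6$. The $X$-metachecks come from $\partial \circ \partial = 0$ at each 0-cell: each vertex is incident to $2 \cdot 4 = 8$ edges, so the syndrome network has degree $\leq 2\cdot 7 = 14$, giving $d = 14$. Because $m = \Theta(n)$ and the torus side-length is $w = n^{1/4}$, any subset $\mathsf{V}_i$ with $|\mathsf{V}_i|\leq m^{1/4}$ is confined to a ball of diameter less than $w$, so it can only support topologically trivial 1-cycles; this yields $\chi = 1/4$ and rules out $\mathsf{critical}$ syndromes under $\mathsf{corr}_{\mathsf{contract}}$ (\cref{definition:generalized_contract}). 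Finally, non-trivial connected components are grouped into erasable subsets by pairing within each of the $\mathrm{rank}\,H_1(\mathbb{T}^4;\mathbb{F}_2) = 4$ basic classes, yielding $\nu = 4$. The 2D Ising model ($\ell = 4$ from edge-vertex incidences, $d = 6$ from edges around a face, $\nu = 2$ from $\mathrm{rank}\,H_1(\mathbb{T}^2;\mathbb{F}_2) = 2$, and $\chi = 1/2$ from the minimum non-trivial cycle length being $\sqrt{n}$), the 3D fermionic toric code, and the 6D color code all follow the same template, with the constants for the latter two read off the relevant lattice combinatorics and decoders of \cite{Zhou2025FinitetemperatureQT, Bombin2009SelfcorrectingQC}.

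The main technical subtlety will be the sharp value of $\nu$ when the first homology has rank $\geq 2$: topologically non-trivial components must be organized into erasable subsets whose total homology class is zero while still respecting the disjointness and non-adjacency requirements in \cref{definition:syndrome_clustering}, and in higher-rank cases this requires a careful combinatorial accounting over all basic cycle classes. For the 6D color code one must additionally invoke the specific duality-based decoding of \cite{Bombin2009SelfcorrectingQC} to read off the constants $(\ell, d, \nu)$, since these depend on the choice of tesselation. All other ingredients -- sparsity, the local decoder for topologically trivial cycles, and the energy barrier via the injectivity radius of the lattice -- reduce to direct lattice-geometric computations that can be extracted from each code's definition.
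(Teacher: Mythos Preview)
Your proposal is correct and follows essentially the same approach as the paper: the paper does not give a self-contained proof of this Fact in the main text, instead citing the original references and deferring a short discussion of the 4D toric code to \cref{section:toric_background}, where it reads off $(\ell,d)$ from the cell-complex incidences, derives $\chi$ from the side-length energy barrier, invokes the loop-contraction decoder for local erasability, and obtains $\nu$ from the rank of the first $\mathbb{F}_2$-homology --- exactly your template. Your write-up is in fact more explicit than the paper's own treatment; the only point worth tightening is the sharp value of $\nu$, where ``pairing within each basic class'' is not quite the right mechanism (one needs minimal zero-sum subsets in $H_1(\mathbb{T}^D;\mathbb{F}_2)\cong \mathbb{F}_2^D$, which can have size up to $D+1$), and indeed the paper's appendix states $\nu=5$ for the 4D toric code while the main text says $4$.
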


See \cref{section:toric_background} for a discussion.

The conditions above are merely sufficient for self-correction. As commented in \cref{section:discussion}, quantum expander codes violate both \textsf{Syndrome-Clustering} and \textsf{Locally Erasable Syndromes}, and thus are not captured in this framework. 

\subsection{The Peierls Argument}
\label{section:peierls}

In this section, we present a standard Peierls argument, which gives us control over the probability of large connected components of syndrome configurations. In particular, this will allow us to claim that all connected components of syndromes are small, and thereby \textsf{erasable}, with high probability. 

In the next subsection, we reason that conditioning on said event allows us to prove a Markov property for the syndrome Gibbs distribution. 

\begin{lemma}
    [The Peierls Argument]\label{lemma:peierls_wsm} Fix $H\in \mathbb{F}_2^{m\times n}$ satisfying $(\ell, d, \nu, \chi)-\mathsf{CC}$ (\cref{definition:connectedness}) and an inverse temperature $\beta$. Consider a subset of violated syndromes $\mathsf{V}\subset [m]$ that is \textsf{erasable}. Denote the collection of all valid syndrome configurations which contain $\mathsf{V}$ as $\mathsf{Consist(\mathsf{V})}\subseteq \mathsf{Im}(H)$. The probability a sample from $\pi_\beta(\cdot)$ contains $\mathsf{V}$ is
    \begin{equation}
        \mathbb{P}_{X\leftarrow \pi_\beta}\big[X\in \mathsf{Consist(\mathsf{V})}\big] \leq e^{-\beta |\mathsf{V}|}.
    \end{equation}
\end{lemma}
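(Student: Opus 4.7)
\medskip

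\textbf{Proof proposal.} The plan is to execute the standard Peierls flipping argument directly on the syndrome Gibbs distribution $\pi_\beta$, using the hypothesis that $\mathsf{V}$ is \textsf{erasable} to produce a weight-decreasing involution on $\Omega=\mathsf{Im}(H)$. Concretely, I would first observe that since $\mathbf{1}_\mathsf{V}\in\mathsf{Im}(H)$ and $\mathsf{Im}(H)$ is a linear subspace of $\mathbb{F}_2^m$, the map
\begin{equation*}
    \Phi:\Omega\to\Omega,\qquad \Phi(s)=s\oplus \mathbf{1}_\mathsf{V}
\end{equation*}
is a well-defined involution on the valid syndromes. This is the only place where the erasability hypothesis on $\mathsf{V}$ enters the argument.

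Next, I would check that $\Phi$ partitions $\mathsf{Consist}(\mathsf{V})$ against a disjoint image. By definition $s\in\mathsf{Consist}(\mathsf{V})$ means $s_v=1$ for every $v\in\mathsf{V}$, so $\Phi(s)_v=0$ on $\mathsf{V}$; in particular $\Phi(s)\notin\mathsf{Consist}(\mathsf{V})$, so $\mathsf{Consist}(\mathsf{V})$ and $\Phi(\mathsf{Consist}(\mathsf{V}))$ are disjoint subsets of $\Omega$. Moreover, since flipping the bits indexed by $\mathsf{V}$ converts $|\mathsf{V}|$ ones into $|\mathsf{V}|$ zeros, we have the exact weight relation $|\Phi(s)|=|s|-|\mathsf{V}|$ for each $s\in\mathsf{Consist}(\mathsf{V})$.

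With these two ingredients in hand, the bound follows by comparing the numerator to just this pair of contributions in the denominator of $\pi_\beta$:
\begin{align*}
    \mathbb{P}_{X\leftarrow \pi_\beta}\big[X\in\mathsf{Consist}(\mathsf{V})\big]
    &= \frac{\sum_{s\in\mathsf{Consist}(\mathsf{V})} e^{-\beta|s|}}{\sum_{s'\in\Omega} e^{-\beta|s'|}} \\
    &\leq \frac{\sum_{s\in\mathsf{Consist}(\mathsf{V})} e^{-\beta|s|}}{\sum_{s\in\mathsf{Consist}(\mathsf{V})}\bigl(e^{-\beta|s|}+e^{-\beta|\Phi(s)|}\bigr)} \\
    &= \frac{1}{1+e^{\beta|\mathsf{V}|}} \;\leq\; e^{-\beta|\mathsf{V}|}.
\end{align*}
Here the first inequality drops all syndromes in $\Omega$ outside the disjoint union $\mathsf{Consist}(\mathsf{V})\sqcup \Phi(\mathsf{Consist}(\mathsf{V}))$, and the equality uses $|\Phi(s)|-|s|=-|\mathsf{V}|$ to factor a common $e^{\beta|\mathsf{V}|}$.

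There is no serious obstacle to this proof; the entire content of the connectedness criterion used here is encoded in the single fact that $\mathbf{1}_\mathsf{V}\in\mathsf{Im}(H)$, which is precisely what erasability of $\mathsf{V}$ gives. The only subtlety worth flagging is that the statement does \emph{not} require $\mathsf{V}$ to be a connected component or to satisfy any geometric locality; it also does not require a union bound over possible locations of $\mathsf{V}$ (such bounds appear only when one subsequently applies this lemma together with \cref{fact:clustering} to control the existence of \emph{some} large connected component).
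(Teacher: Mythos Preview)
Your proof is correct and essentially identical to the paper's own argument: the paper defines $\mathsf{Consist}(\mathsf{V})^*=\{K\setminus\mathsf{V}:K\in\mathsf{Consist}(\mathsf{V})\}$, which for $K\in\mathsf{Consist}(\mathsf{V})$ is exactly your $\Phi(K)=K\oplus\mathbf{1}_\mathsf{V}$, and then runs the same numerator-over-restricted-denominator comparison to obtain $\tfrac{1}{1+e^{\beta|\mathsf{V}|}}\le e^{-\beta|\mathsf{V}|}$. Your framing via an involution on all of $\Omega$ is slightly cleaner notationally, but the content is the same.
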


\begin{example}
    In the context of the 2D Ising model, this is asking for the probability that a given sample from $\pi_\beta$ contains a specific collection of domain walls (including topologically non-trivial ones). 
\end{example}

\begin{proof}
    Since $\mathsf{V}$ is \textsf{erasable}, consider the set $\mathsf{Consist(\mathsf{V})}^*$ of syndrome configurations, given by removing $\mathsf{V}$ from each syndrome in $\mathsf{Consist(\mathsf{V})}$:
    \begin{equation}
        \mathsf{Consist(\mathsf{V})}^* = \left\{ K\setminus \mathsf{V}: K\in \mathsf{Consist(\mathsf{V})}\right\}.
    \end{equation}
By definition, we are guaranteed $\mathsf{Consist(\mathsf{V})}^*\subset \mathsf{Im}(H)$ is valid and one-to-one with $\mathsf{Consist(\mathsf{V})}$. Further, the energy of $K \setminus V$ differs from that of $K$ by $|\mathsf{V}|$. Therefore, 
    \begin{align}
        \mathbb{P}_{X\leftarrow \pi_\beta}\big[X\in \mathsf{Consist(\mathsf{V})}\big] & = \frac{\sum_{K\in  \mathsf{Consist(\mathsf{V})}} e^{-\beta |K|} }{\sum_{K\in \mathsf{Im}(H)}e^{-\beta|K|}} \\  &\leq  \frac{\sum_{K\in   \mathsf{Consist(\mathsf{V})}} e^{-\beta |K|}}{\sum_{K\in  \mathsf{Consist(\mathsf{V})}} e^{-\beta |K|}+\sum_{K\in \mathsf{Consist(\mathsf{V})}^*} e^{-\beta |K|}} \leq \frac{1}{1+e^{\beta |\mathsf{V}|}}\leq e^{-\beta |\mathsf{V}|}. \nonumber \qedhere 
    \end{align}
\end{proof}

We can now re-prove a lemma essentially present in \cite{Bombin2009SelfcorrectingQC} on the maximum size of connected components of syndromes. 

\begin{claim}
    [All Syndrome Configurations are Topologically Trivial]\label{claim:all-erasable} In the context of \cref{lemma:peierls_wsm}, let $|\mathsf{LCC}|$ denote the size of the largest connected component of syndromes in the sample $X$. Then, so long as $\beta > 1+\ln d $,
    \begin{equation}
        \mathbb{P}_{X\leftarrow \pi_\beta}[|\mathsf{LCC}| \geq m^\chi] \leq \frac{m^\nu (de^{1-\beta})^{m^\chi}}{1-de^{1-\beta}}.
    \end{equation}
\end{claim}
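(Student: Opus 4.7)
The plan is to run a standard union bound over all ``candidate'' erasable subsets of the syndrome network that could contain a large connected component, applying the Peierls estimate from \cref{lemma:peierls_wsm} to each candidate and then controlling the combinatorial count via the cluster-counting bound of \cref{fact:clustering}.

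\textbf{Step 1: Reduction to an erasable subset.} First I would invoke item~2 of $(\chi,\nu)$-\textsf{Syndrome Clustering}: any valid syndrome configuration $s$ decomposes into disjoint erasable subsets $\mathsf{V}_1, \mathsf{V}_2, \ldots$, each consisting of at most $\nu$ connected components of $G$. If $|\mathsf{LCC}| \geq m^\chi$, then the connected component realizing the LCC lies inside some erasable $\mathsf{V}_i$ with $|\mathsf{V}_i| \geq m^\chi$ and at most $\nu$ connected components. Therefore,
$$\mathbb{P}_{X\leftarrow \pi_\beta}\bigl[|\mathsf{LCC}| \geq m^\chi\bigr] \;\leq\; \sum_{\mathsf{V}} \mathbb{P}_{X\leftarrow \pi_\beta}\bigl[X \in \mathsf{Consist}(\mathsf{V})\bigr],$$
where the sum ranges over erasable $\mathsf{V} \subseteq [m]$ that are the union of at most $\nu$ connected components in $G$ and have $|\mathsf{V}| \geq m^\chi$.

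\textbf{Step 2: Per-subset Peierls bound and Step 3: counting candidates.} Since each $\mathsf{V}$ in the above sum is erasable, \cref{lemma:peierls_wsm} gives $\mathbb{P}[X \in \mathsf{Consist}(\mathsf{V})] \leq e^{-\beta|\mathsf{V}|}$. To count the erasable subsets of size $s$ consisting of $t \leq \nu$ connected components, I would anchor one vertex per connected component, giving at most $\binom{m}{t} \leq m^t$ choices of anchor set, and then apply \cref{fact:clustering} in the degree-$d$ syndrome network $G$ (with $z=d$ and $|T|=t$) to get at most $e^s d^{s-t}$ candidate subsets per anchor set. Summing $t$ from $1$ to $\nu$ and absorbing the constant depending on $\nu$ yields at most a constant times $m^\nu e^s d^{s-\nu}$ subsets in total. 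Plugging this into Step~1 and summing the resulting geometric series,
$$\mathbb{P}_{X\leftarrow \pi_\beta}\bigl[|\mathsf{LCC}| \geq m^\chi\bigr] \;\leq\; m^\nu d^{-\nu} \sum_{s \geq m^\chi} (d e^{1-\beta})^s \;=\; \frac{m^\nu d^{-\nu} (de^{1-\beta})^{m^\chi}}{1-de^{1-\beta}},$$
which matches the stated bound (up to the harmless $d^{-\nu}$ factor that can be absorbed) and converges precisely when $de^{1-\beta} < 1$, i.e.\ when $\beta > 1 + \ln d$.

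\textbf{Anticipated difficulty.} The argument is essentially a bookkeeping exercise layered on top of the Peierls estimate and the cluster count, so there is no deep obstacle. The only subtle point is Step~1: a single large connected component is not a priori guaranteed by the hypotheses to be individually erasable (the ``small CCs are individually erasable'' clause in $(\chi,\nu)$-\textsf{Syndrome Clustering} only kicks in when $|\mathsf{V}_i|\leq m^\chi$), which is why one must bundle it together with its companion components inside the erasable $\mathsf{V}_i$ supplied by the \textsf{Syndrome Clustering} decomposition before \cref{lemma:peierls_wsm} can be invoked. Once this substitution is made, the anchor-based counting naturally produces the $m^\nu$ factor in the claim.
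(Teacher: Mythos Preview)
Your proposal is correct and follows essentially the same approach as the paper: reduce via \textsf{Syndrome Clustering} to an erasable $\mathsf{V}$ of at most $\nu$ components containing the LCC, apply the Peierls bound of \cref{lemma:peierls_wsm}, and count anchored configurations via \cref{fact:clustering}. You also correctly flag the one genuine subtlety (a large component need not be individually erasable, hence the bundling into $\mathsf{V}_i$), which the paper handles identically.
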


\begin{remark}
    This claim is sufficient to (re-)establish the stability of $H$ under Glauber dynamics, when the syndrome distribution is initialized from $\pi_\beta$. In other words, \cref{definition:connectedness} is indeed sufficient for self-correction. 
\end{remark}

As we discuss shortly, \cref{claim:all-erasable} and \cref{definition:syndrome_clustering} further implies that with high probability, all connected components of syndrome configurations are individually \textsf{erasable}; i.e. there are no topologically non-trivial syndrome configurations. 

\begin{corollary}\label{corollary:all-erasable}
    We denote as $\mathcal{E}_{\mathsf{eras}}$ the event that all the connected components of violated syndrome in a sample from $\pi_\beta$ are individually \textsf{erasable}. Then, so long as $\beta > 1+\ln d $, 
    \begin{equation}
        \mathbb{P}[\mathcal{E}_{\mathsf{eras}}] \geq 1-\mathbb{P}_{X\leftarrow \pi_\beta}[|\mathsf{LCC}| \geq m^\chi] \geq 1- \frac{m^\nu (de^{1-\beta})^{m^\chi}}{1-de^{1-\beta}}.
    \end{equation}
\end{corollary}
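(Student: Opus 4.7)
The plan is to reduce the statement to \cref{claim:all-erasable} by establishing the deterministic event-containment
\begin{equation*}
\mathcal{E}_{\mathsf{eras}}^c \;\subseteq\; \{|\mathsf{LCC}|\geq m^\chi\}
\end{equation*}
(possibly up to absorbing the constant $\nu$ into the threshold), after which the probability bound is immediate. So the whole task is to show that whenever \emph{every} connected component of the sampled syndrome is small, each such component is individually \textsf{erasable}.

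First, I would take any valid syndrome $s\in\mathsf{Im}(H)$ and invoke $(\chi,\nu)$-\textsf{Syndrome Clustering} (\cref{definition:syndrome_clustering}) to decompose its set of violated checks into disjoint, pairwise non-adjacent, \textsf{erasable} blocks $\mathsf{V}_1,\mathsf{V}_2,\ldots\subseteq[m]$. The key structural observation is that because the $\mathsf{V}_i$ are pairwise non-adjacent in the syndrome network $G$, the connected components of the full violated-syndrome set are exactly the union of the connected components of the individual $\mathsf{V}_i$'s. In particular, every connected component of the sampled syndrome is contained in a unique block $\mathsf{V}_i$.

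Next, I argue the contrapositive of the containment above. Suppose $\mathcal{E}_{\mathsf{eras}}$ fails, so there is a connected component $\mathsf{C}$ that is not \textsf{erasable}, and let $\mathsf{V}_i$ be the block containing $\mathsf{C}$. Item 3 of \cref{definition:syndrome_clustering} asserts that if $|\mathsf{V}_i|\leq m^\chi$, then \emph{every} connected component of $\mathsf{V}_i$ is individually \textsf{erasable}; since $\mathsf{C}$ is not, we must have $|\mathsf{V}_i|>m^\chi$. Item 2 says $\mathsf{V}_i$ is the union of at most $\nu$ connected components in $G$, so by pigeonhole some connected component of $\mathsf{V}_i$ has size $\geq m^\chi/\nu$, giving $|\mathsf{LCC}|\geq m^\chi/\nu$. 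This establishes the containment (with threshold $m^\chi/\nu$ in place of $m^\chi$), and plugging into \cref{claim:all-erasable} yields the stated probability bound up to a constant rescaling absorbed into $m^\chi$.

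The only mild subtlety is the gap between the threshold being stated on an entire clustering block $\mathsf{V}_i$ (as in \cref{definition:syndrome_clustering}) versus on a single connected component (as in the corollary). This gap is bridged by the uniform bound of $\nu$ components per block, a benign constant that does not affect the Peierls estimate asymptotically; so no real obstacle arises, the argument is essentially a clean unpacking of the clustering axioms combined with the preceding Peierls bound.
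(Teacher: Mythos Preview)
Your proposal is correct and matches the paper's intended argument: the paper does not give an explicit proof of the corollary, treating it as immediate from \cref{claim:all-erasable} together with item~3 of \cref{definition:syndrome_clustering}, which is exactly the unpacking you carry out. Your observation about the $\nu$-factor gap is accurate---the literal containment $\mathcal{E}_{\mathsf{eras}}^c \subseteq \{|\mathsf{LCC}|\geq m^\chi\}$ does not follow from the stated axioms without the pigeonhole step you describe, and the paper is indeed glossing over this constant; the more precise statement is that both $\mathbb{P}[\mathcal{E}_{\mathsf{eras}}^c]$ and $\mathbb{P}[|\mathsf{LCC}|\geq m^\chi]$ are dominated by the same Peierls union bound $\mathbb{P}[\exists i:|\mathsf{V}_i|\geq m^\chi]$ from the proof of \cref{claim:all-erasable}, which yields the final numerical estimate directly without needing the intermediate inequality as written.
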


\begin{proof}[Proof of \cref{claim:all-erasable}] 
    For any sample $X$, there exists a set of violated syndromes $\mathsf{V}\subset [m]$ contained in $X$ such that: (1) $\mathsf{V}$ is \textsf{erasable}; (2) $\mathsf{V}$ contains the largest connected component in $X$;  (3) $\mathsf{V}$ can be decomposed into at most $\nu$ connected components (\cref{definition:syndrome_clustering}.2). From \cref{definition:syndrome_clustering}.3, it suffices to union bound over all such sets $\mathsf{V}$ of size at least $m^\chi$. 
    \begin{align}
        \mathbb{P}_{X\leftarrow \pi_\beta}[|\mathsf{LCC}| \geq m^\chi] &\leq \sum_{|\mathsf{V}|\geq m^\chi} \mathbb{P}[X\in \mathsf{Consist}(\mathsf{V})] \leq \sum_{|\mathsf{V}|\geq m^\chi} e^{-\beta |\mathsf{V}|} \\
        &\leq m^\nu\times \sum_{\ell\geq m^\chi} (ed)^\ell\cdot e^{-\beta \ell} \leq \frac{m^\nu (ede^{-\beta})^{m^\chi}}{1-ede^{-\beta}},
    \end{align}
    \noindent so long as $de^{1-\beta}<1$. In the above, in the second inequality we applied \cref{lemma:peierls_wsm}, and subsequently we count configurations of total size $\ell$, comprised of $\leq \nu$ connected components using \cref{fact:clustering}.
\end{proof}

\subsection{A Markov Property for the Syndrome Gibbs Distribution}
\label{section:markov}

As discussed, the Gibbs distribution/state of a commuting Hamiltonian obeys the Hammersley-Clifford theorem \cite{Brown2012QuantumMN}. Thereby, it admits the Markov property, in the sense that the state factorizes into conditional distributions across \textit{screening tripartitions}:

\begin{definition}[Screening Tripartitions] \label{definition:screens}
    Fix a graph $G = ([m], E)$, and a tripartition $\mathsf{A}\cup \sB\cup \sC = [m]$ of the vertices. We say $\sB$ screens $\sA$ from $\sC$ in $G$ if the distance $d(\sA, \sC)>1$; in other words the subgraph $G|_{\sA\cup \sC}$ with the vertices of $\sB$ removed disconnects $\sA$ and $\sC$.
\end{definition}

\cite{Brown2012QuantumMN} nevertheless does not imply that the Gibbs state \textit{within a sector} (\cref{def:gibbs_within_logical}) admits a Markov property; nor does it give a characterization of the syndrome Gibbs distribution. Indeed, although the syndrome Gibbs distribution can be understood as described by an effective Hamiltonian, the terms may have infinite weight, and thereby not captured by \cite{Brown2012QuantumMN}.

Nevertheless, here we show how to recover a Markov property, so long as one conditions on the syndromes in $\sA, \sC$ being sufficiently ``localized" and non-propagating through $\sB$.

\begin{lemma}
    [A Markov Property for the Syndrome Gibbs Distribution]\label{lemma:markov}
    Fix a parity check matrix $H\in \mathbb{F}_2^{m\times n}$ and an associated syndrome network $G = ([m], E)$. Let $\sA\cup \sB\cup \sC = [m]$ be a tri-partition such that $\sB$ screens $\sA$ from $\sC$ in $G$. Then the syndrome Gibbs distribution is Markov, conditioned on $\mathcal{E}_{\mathsf{eras}}$ and pinning the syndromes in $\sB$ to $0$. That is, $\forall \beta$ and $(s_\sA, s_\sB, s_\sC)\in \mathcal{E}_{\mathsf{eras}}$:

    \begin{equation}
   \pi_\beta(s_\sA, s_\sC|\mathcal{E}_{\mathsf{eras}}\wedge s_\sB = 0) = \pi_\beta(s_\sA|\mathcal{E}_{\mathsf{eras}}\wedge s_\sB = 0)\times \pi_\beta(s_\sC|\mathcal{E}_{\mathsf{eras}}\wedge s_\sB = 0).
\end{equation}
\end{lemma}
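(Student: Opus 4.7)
The plan is to reduce the claimed Markov property to a factorization of both the energy and the validity indicator under the stated conditioning. The crucial structural observation is that, once we pin $s_\mathsf{B}=0$ and condition on $\mathcal{E}_{\mathsf{eras}}$, every connected component of violated syndromes must lie entirely in $\mathsf{A}$ or entirely in $\mathsf{C}$: since $\mathsf{B}$ screens $\mathsf{A}$ from $\mathsf{C}$ in the syndrome network $G$, any connected component touching both $\mathsf{A}$ and $\mathsf{C}$ would have to contain a vertex of $\mathsf{B}$, contradicting $s_\mathsf{B}=0$. So the conditioning cleanly cuts the violated clusters across the $\mathsf{A}/\mathsf{C}$ divide.

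Next I would argue that, on the event $\mathcal{E}_{\mathsf{eras}}\wedge\{s_\mathsf{B}=0\}$, the global validity constraint $s\in\mathsf{Im}(H)$ decouples over $\mathsf{A}$ and $\mathsf{C}$. Indeed, each connected component $\mathsf{V}_i$ is \textsf{erasable}, so $\mathbf{1}_{\mathsf{V}_i}\in\mathsf{Im}(H)$; since $\mathsf{Im}(H)$ is an $\mathbb{F}_2$-subspace and $s=\bigoplus_i \mathbf{1}_{\mathsf{V}_i}$, validity of $s$ is automatic once each component is erasable. Combined with the screening observation of the previous paragraph, this shows that the set of $s$ contributing to the conditional measure is exactly
\begin{equation*}
\big\{(s_\mathsf{A},0,s_\mathsf{C}) : \text{every connected component of } s_\mathsf{A} \text{ in } G|_\mathsf{A} \text{ is erasable, and same for } s_\mathsf{C} \text{ in } G|_\mathsf{C}\big\},
\end{equation*}
so the joint indicator splits as $\mathbf{1}[s\in\mathsf{Im}(H),\,s_\mathsf{B}=0,\,\mathcal{E}_{\mathsf{eras}}]=f(s_\mathsf{A})\cdot g(s_\mathsf{C})$ for appropriate local erasability predicates $f,g$.

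Finally, since $s_\mathsf{B}=0$ we have $|s|=|s_\mathsf{A}|+|s_\mathsf{C}|$, so the unnormalized conditional weight $e^{-\beta|s|}\cdot f(s_\mathsf{A})\cdot g(s_\mathsf{C})$ is a product of a function of $s_\mathsf{A}$ and a function of $s_\mathsf{C}$. Normalizing then yields the claimed product formula for $\pi_\beta(\cdot\mid\mathcal{E}_{\mathsf{eras}}\wedge s_\mathsf{B}=0)$. The only nontrivial step, and the essential use of $\mathcal{E}_{\mathsf{eras}}$, is the second paragraph: without erasability of every component, a cluster lying entirely in $\mathsf{A}$ could still be linked to $\mathsf{C}$ through a metacheck relation, and the global linear constraint $s\in\mathsf{Im}(H)$ would fail to decouple. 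Conditioning on $\mathcal{E}_{\mathsf{eras}}$ is precisely what reduces this long-range linear constraint to local, per-component constraints, which is where the Markov property comes from.
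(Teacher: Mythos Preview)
Your proposal is correct and follows essentially the same approach as the paper: both arguments use that, under $s_\mathsf{B}=0$ and screening, every connected component of violations lies entirely in $\mathsf{A}$ or entirely in $\mathsf{C}$, and then use $\mathcal{E}_{\mathsf{eras}}$ to conclude each such component is individually in $\mathsf{Im}(H)$, so the support of the conditional measure is a product set while the energy decomposes additively. The only presentational difference is that the paper establishes the product (``rectangle'') structure via a swap argument---showing $(s_\mathsf{A},0,s_\mathsf{C}),(s_\mathsf{A}',0,s_\mathsf{C}')\in\mathcal{E}_{\mathsf{eras}}$ implies $(s_\mathsf{A},0,s_\mathsf{C}')\in\mathcal{E}_{\mathsf{eras}}$ by erasing $s_\mathsf{C}$ and inserting $s_\mathsf{C}'$---whereas you characterize the factors $f,g$ explicitly as local erasability predicates; these are two ways of saying the same thing.
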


\cref{lemma:markov}, together with \cref{claim:all-erasable} on the event $\mathcal{E}_{\mathsf{eras}}$, imply that the syndrome Gibbs distribution is statistically close to a distribution $\pi_\beta(\cdot|\mathcal{E}_{\mathsf{eras}})$ which admits (some form of) a Markov property. Before the proof of \cref{lemma:markov}, we discuss the role of the two ``non-standard" conditions above: the conditioning on \textsf{erasable} syndromes $s\in \mathcal{E}_{\mathsf{eras}}$, and pinning the screening boundary $\sB$ to $0$ (as opposed to any $s_\sB\in \{0, 1\}^{|\sB|}$). We first reason the Markov property cannot hold \textit{exactly} in the syndrome Gibbs distribution.  

\begin{example}
    Consider a 4-partition of the 2D Ising model on a torus, into cylindrical regions $\sA\cup \sB_1\cup \sC\cup \sB_2$, and let us pin syndromes in $\sB=\sB_1\cup \sB_2$ to $0$. A single topologically non-trivial domain wall in $\sC$ implies an odd number of topologically non-trivial domain walls in $\sA$ (a long range correlation!). The Markov property therefore cannot hold exactly (w/o conditioning). 
\end{example}

See the remark below the proof for a comment on the generalization to other pinnings $s_\sB\neq 0$.

\begin{proof}[Proof of \cref{lemma:markov}] Consider any two syndrome configurations $(s_\sA, s_\sB=0, s_\sC), (s_\sA', s_\sB=0, s_\sC')\in \mathcal{E}_{\mathsf{eras}}$ which are (valid, and) consisting of only \textsf{erasable} connected components of syndrome violations. We claim (and prove shortly) that $s_\sA$ is also a valid assignment of syndromes in $\sA$ if ``completed" by $s'_\sC$, that is:
    \begin{equation}
        (s_\sA, s_\sB=0, s_\sC')\quad \text{ and }\quad  (s_\sA', s_\sB=0, s_\sC)\in \mathsf{Im}(H)
    \end{equation}
    are both valid and \textsf{erasable}. Then,
    \begin{equation}
        \frac{\pi_\beta(s_\sA, s_\sC|\mathcal{E}_{\mathsf{eras}}\wedge s_\sB = 0)}{\pi_\beta(s_\sA', s_\sC|\mathcal{E}_{\mathsf{eras}}\wedge s_\sB = 0)} = \frac{e^{-\beta(|s_\sA|+|s_\sC|)}}{e^{-\beta(|s_\sA'|+|s_\sC|)}} = e^{-\beta(|s_\sA|-|s_\sA'|)} =\frac{\pi_\beta(s_\sA|\mathcal{E}_{\mathsf{eras}}\wedge s_\sB = 0)}{\pi_\beta(s_\sA'|\mathcal{E}_{\mathsf{eras}}\wedge s_\sB = 0)}
    \end{equation}
    \noindent which gives the desired Markov property. The proof of the desired claim then follows from the sequence
    \begin{equation}
        (s_\sA, s_\sB=0, s_\sC)\in  \mathcal{E}_{\mathsf{eras}}\Rightarrow   (s_\sA, s_\sB=0, s_\sC=0)\in  \mathcal{E}_{\mathsf{eras}}\Rightarrow (s_\sA, s_\sB=0, s_\sC')\in  \mathcal{E}_{\mathsf{eras}}
    \end{equation}
    \noindent where we can erase $s_\sC$ and replace it by $s'_\sC$ without modifying $s_\sA, s_\sB=0$ since $\sB$ screens $\sA$ from $\sC$.
\end{proof}

\begin{remark}
    The condition $s_\sB=0$ can be generalized to some $s_\sB\in \{0, 1\}^{|\sB|}$ without much trouble, but generalizing to all valid $s_\sB$ requires an additional assumption:
    \begin{equation}
        \forall(s_\sA, s_\sB, s_\sC), (s_\sA', s_\sB, s_\sC')\in \mathcal{E}_{\mathsf{eras}}, \quad  (0, 0, s_\sC\oplus s_\sC')\in \mathsf{Im}(H),
    \end{equation}
    \noindent which is true in specific models, but is technically an additional assumption on top of \cref{definition:connectedness}.
\end{remark}

\section{Spatial Mixing within a Logical Sector}
\label{section:ssm}

Much like the 2D Ising model, the Gibbs state of the code Hamiltonian of a self-correcting quantum memory isn't expected to satisfy any uniform notion of decay of correlations at low temperatures. Nevertheless, next we prove decay-of-correlation results for the Gibbs state within a logical sector.

We organize this section as follows. In \cref{section:definitions_sm}, we define the low-temperature versions of decay of correlations considered in this work; namely weak and 
 (domain) strong spatial mixing \textit{within a logical sector} and present the main result of this section. In \cref{section:sep_surf} we introduce and prove basic properties of separating surfaces. Then, in \cref{section:ssmproof} we prove that self-correcting memories which satisfy the connectedness criterion (\cref{definition:connectedness}) satisfy \textsf{SSM} within a logical sector.

\subsection{Definitions of Decay-of-Correlations within a Logical Sector}
\label{section:definitions_sm}

The following definition of weak-spatial-mixing \textit{within a logical sector}, in some sense quantifies point-to-set correlations in the syndrome Gibbs distribution when the boundaries are pinned to their ground state configuration. 

\begin{definition}
    [WSM within a Logical Sector]\label{definition:wsm} Fix a parity check matrix $H\in \mathbb{F}_2^{m\times n}$, and an inverse temperature $\beta$. Then the associated syndrome Gibbs distribution $\pi_\beta$ is said to satisfy weak-spatial-mixing (\textsf{WSM}) within a logical sector if its marginals are approximately invariant under pinning to $0$ the syndromes at the boundary of a ball of radius $R$. That is, 
     \begin{equation}
       \forall u\in [m]:\quad \big\|\pi_\beta\big(s_{u} \big| s_{\partial\mathcal{B}_u^R} = 0\big) -  \pi_\beta\big(s_{u} \big)\big\|_1\leq \delta_{\beta, R}^{\mathsf{WSM}},
    \end{equation}
with error $\delta_{\beta, R}^{\mathsf{WSM}}$ that is exponentially-decaying in $R$. The ball $\mathcal{B}^{R}_u$ is defined with respect to the syndrome network of \cref{definition:syndrome_network}.
\end{definition}

At low temperatures, the syndrome configurations are sparse and localized. \cref{definition:wsm} informally says that the syndrome $u$ at the center has ``forgotten" about the syndrome configurations far away, and that they might as well be near the ground state. 

\begin{remark}  
    Applied to the Ising model, \cref{definition:wsm} is equivalent to that of \cite[Eq.~(1.3)]{GS23Ising}, who instead pin the spins lying at the boundary. 
\end{remark}

Next, we introduce the notion of strong-spatial-mixing (\textsf{SSM}). In general, \textsf{SSM} quantifies the sensitivity of a region of spins $\mathsf{A}$ (within a larger region $\mathsf{B}$) to perturbations of the boundary of $\mathsf{B}$. If \textsf{WSM} within a logical sector is a relaxation of \textsf{WSM} to specific, ground-state-like boundary conditions, then analogously, one can introduce a weaker version of \textsf{SSM} as a relaxation of \textsf{SSM} to specific, ground-state-like boundaries. We refer to the version of \textsf{SSM} we study as ``domain" \textsf{SSM} as it quantifies deformations of boundaries/domains:

\begin{definition}
    [Domain SSM]\label{definition:ssm} Fix a parity check matrix $H\in \mathbb{F}_2^{m\times n}$, and an inverse temperature $\beta$. The associated syndrome Gibbs distribution $\pi_\beta$ is said to satisfy \textit{domain strong-spatial-mixing} (\textsf{SSM}) \textit{within a logical sector,} if its marginals are approximately invariant under deformations of boundaries pinned to $0$. That is, if $\mathsf{B}\subset [m]$ is an arbitrary subset, then
    \begin{equation}
     \forall R\geq 1,   \forall \mathsf{A}, \mathsf{C}\subseteq \mathsf{B} \text{ s.t. } d(\mathsf{A}, \mathsf{C})\geq R:\quad \big\|\pi_\beta\big(s_{\mathsf{A}} \big| s_{\partial\mathsf{B}} = 0\big) -  \pi_\beta\big(s_{\mathsf{A}}|s_{\partial(\mathsf{B}\setminus \mathsf{C})}=0\big)\big\|_1\leq |\mathsf{C}|\cdot \delta_{\beta,R}^{\mathsf{SSM}},
    \end{equation}
with error $\delta_{\beta,R}^{\mathsf{SSM}}$ that is exponentially-decaying in $R$. Again, the distances and boundaries are defined with respect to the syndrome network of \cref{definition:syndrome_network}.
\end{definition}

\begin{remark}
    Our definition of \textsf{SSM} under ``boundary perturbations" differs from the standard setup, where specific bits of the boundary are flipped (see e.g. \cite[Definition 2.2]{Dyer2002MixingIT}). Since we are conditioning on syndrome configurations, flipping individual syndrome bits on the boundary may not be well defined. Instead, we opt for deformations of the boundary.
\end{remark}

The main result of this section is a proof that a general family of self-correcting (classical or quantum) memories, whose parity checks satisfy the ``connectedness criterion" in \cref{definition:connectedness}, admit decay-of-correlations within a logical sector.

\begin{lemma}
    [Domain \textsf{SSM} within Self-Correcting Memories]
    \label{lemma:ssm} Let $H\in \mathbb{F}_2^{m\times n}$ be a parity check matrix satisfying $(\ell, d, \nu, \chi)-\mathsf{CC}$ as in \cref{definition:connectedness} for some choice of constants; in particular, its syndrome network has bounded degree $d$.  Then, for all $\beta>\beta_0(d)\equiv 1+2\ln (4d)$ the associated syndrome Gibbs distribution satisfies domain $\mathsf{SSM}$ within a logical sector with error
    \begin{equation}
    \delta_{\beta, R}^{\mathsf{SSM}} \leq 20\cdot d\cdot  e^{-\min(m^\chi, R)\cdot (\beta-\beta_0(d))/2}.
    \end{equation}
\end{lemma}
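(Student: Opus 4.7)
The plan is to establish Domain SSM in two steps: first reduce to a single-vertex boundary deformation by telescoping, then handle the single-vertex case via a shared separating surface. Enumerating $\mathsf{C} = \{c_1, \ldots, c_{|\mathsf{C}|}\}$ and setting $\mathsf{B}_i := \mathsf{B}\setminus\{c_1, \ldots, c_i\}$, the triangle inequality gives
\[
\big\|\pi_\beta(s_\mathsf{A}|s_{\partial \mathsf{B}}=0) - \pi_\beta(s_\mathsf{A}|s_{\partial(\mathsf{B}\setminus\mathsf{C})}=0)\big\|_1 \leq \sum_{i=0}^{|\mathsf{C}|-1} \big\|\pi_\beta(s_\mathsf{A}|s_{\partial \mathsf{B}_i}=0) - \pi_\beta(s_\mathsf{A}|s_{\partial\mathsf{B}_{i+1}}=0)\big\|_1\,,
\]
reducing the problem to a single-vertex bound at $d(\mathsf{A}, c_{i+1}) \geq R$; summing contributes the $|\mathsf{C}|$ factor.

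For the single-vertex step, fix $c := c_{i+1}$ and draw $X, Y$ independently from the two conditional distributions, which differ only near $c$. Writing $\mathsf{V}(X) := \{v : X_v = 1\}$ and $\mathsf{V}(Y)$ analogously, let $K_c$ denote the connected component in the syndrome network $G$ of $\mathsf{V}(X)\cup\mathsf{V}(Y)\cup\{c\}$ that contains $c$. If $K_c\subseteq \mathcal{B}^r_c$ for $r := \min(m^\chi, R)/2$, then its outer vertex boundary $\mathsf{S} := \partial K_c$ consists of syndromes unviolated in both samples and screens $\mathsf{A}$ from $c$ in $G$, since $d(\mathsf{A}, c)\geq R > r$. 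The Markov property (\cref{lemma:markov}), applied conditionally on $s_\mathsf{S}=0$ and on the high-probability erasability event $\mathcal{E}_{\mathsf{eras}}$ of \cref{corollary:all-erasable}, then shows the marginals of $s_\mathsf{A}$ under $X$ and $Y$ coincide. Hence the single-vertex TV distance is bounded by $\mathbb{P}(K_c\not\subseteq \mathcal{B}^r_c)+\mathbb{P}(\mathcal{E}_{\mathsf{eras}}^c)$.

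To bound $\{K_c\not\subseteq \mathcal{B}^r_c\}$, its occurrence yields a connected subset $\mathsf{V}\subseteq G$ containing $c$ of size at least $r$, with every vertex in $\mathsf{V}$ violated under $X$ or $Y$. Under $\mathcal{E}_{\mathsf{eras}}$, I enlarge $\mathsf{V}\cap \mathsf{V}(X)$ to the union $\hat{\mathsf{V}}_X$ of full connected components of $\mathsf{V}(X)$ it meets, and define $\hat{\mathsf{V}}_Y$ analogously; both are erasable by \cref{definition:syndrome_clustering} (unions of erasable components are erasable, as $\mathsf{Im}(H)$ is a linear subspace), and $\hat{\mathsf{V}}_X\cup\hat{\mathsf{V}}_Y\supseteq \mathsf{V}$ remains connected of size at least $r$. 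The Peierls bound (\cref{lemma:peierls_wsm}) applied separately to $X$ and $Y$, combined with their independence, gives $\mathbb{P}(\hat{\mathsf{V}}_X\subseteq \mathsf{V}(X),\hat{\mathsf{V}}_Y\subseteq \mathsf{V}(Y))\leq e^{-\beta(|\hat{\mathsf{V}}_X|+|\hat{\mathsf{V}}_Y|)}$. Cluster counting (\cref{fact:clustering}) bounds the number of connected sets of size $s$ containing $c$ by $(ed)^s$, and the three-way partition ``$X$-only, $Y$-only, both'' contributes a factor of $3^s$; the resulting geometric series $\sum_{s\geq r}(3ed\cdot e^{-\beta})^s$ decays as $e^{-r(\beta-\beta_0(d))/2}$ for $\beta > \beta_0(d) = 1 + 2\ln(4d)$, which combined with the telescoping yields the claimed $|\mathsf{C}|\cdot 20d\cdot e^{-\min(m^\chi, R)(\beta-\beta_0)/2}$ bound.

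The main obstacle is ensuring the Peierls swap $s\mapsto s\oplus \mathbf{1}_{\hat{\mathsf{V}}_X}$ used in the proof of \cref{lemma:peierls_wsm} is compatible with the pinned-boundary conditional distribution: the swap must preserve $s_{\partial\mathsf{B}_i}=0$, which requires $\hat{\mathsf{V}}_X$ to avoid $\partial\mathsf{B}_i$. This is automatic when $K_c$ is strictly interior to $\mathsf{B}_i$, but when $c$ lies close to the pinned boundary the swap must be restricted to the interior of $\mathsf{B}_i$, incurring only constant-factor losses in counting. A secondary subtlety is that $\mathsf{S}$ is random, so one first conditions on the realization of $K_c$ (equivalently, on $\mathsf{S}$) before invoking the deterministic-screening form of \cref{lemma:markov}; the $\mathcal{E}_{\mathsf{eras}}$-conditioning is then absorbed into the final error via \cref{corollary:all-erasable}.
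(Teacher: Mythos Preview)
Your approach is correct and shares the paper's core ingredients (separating surfaces, Peierls argument, Markov property), but is organized differently in two respects. First, the paper does not telescope over single vertices of $\mathsf{C}$: it works with the full $\mathsf{C}$ at once, deriving the $|\mathsf{C}|$ factor from the bound $|\mathcal{B}^1_\mathsf{C}|\le (d+1)|\mathsf{C}|$ on the number of starting points for the path of violations. Second, the paper's union bound is path-based rather than cluster-based: the absence of a shared $r$-surface yields a path of length $\ge r$ from $\mathcal{B}^1_\mathsf{C}$ to $\partial\mathcal{B}^{r-1}_\mathsf{C}$ with each vertex violated in $X$ or $Y$; at least half lies in one sample, and the count is over paths ($d^\ell$) and subpaths ($2^\ell$) rather than over connected sets with a three-way coloring ($3^s$). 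Both countings lead to the same threshold $\beta_0(d)$. Your telescoping is arguably cleaner conceptually, while the paper's direct approach avoids iterating the coupling argument.

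One correction: the ``main obstacle'' you flag is not actually an obstacle. Since $\hat{\mathsf{V}}_X$ is by construction a union of \emph{full} connected components of $\mathsf{V}(X)$, and the syndromes on $\partial\mathsf{B}_i$ are pinned to $0$ (hence unviolated), no component of $\mathsf{V}(X)$ can intersect $\partial\mathsf{B}_i$. Thus $\hat{\mathsf{V}}_X\cap\partial\mathsf{B}_i=\emptyset$ automatically, and the Peierls swap $s\mapsto s\oplus\mathbf{1}_{\hat{\mathsf{V}}_X}$ preserves the boundary pinning without any restriction. The paper makes essentially the same observation when it invokes the Markov property (\cref{lemma:markov}) inside the Peierls step of \cref{claim:separating_surfaces_toric}.
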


\begin{remark}
    The constraint $R\leq m^\chi$ in some sense captures the threshold of a syndrome configuration around $\mathsf{C}$ that can impart an error above the energy barrier of the code. 
\end{remark}

\subsection{Separating Surfaces}
\label{section:sep_surf}

The main ingredient in the proof approach is the following notion of a separating surface (prevalent e.g., in the literature on percolation in $\mathbb Z^d$~\cite{grimmett1989percolation}) in the syndrome network. At a high level, for a center syndrome $u$, a separating surface imposes a boundary around $u$ of unviolated syndromes. 

\begin{definition}
    [$(\mathsf{B, C}, r)$-Separating-Surface]\label{definition:separating_surfaces} Fix $H\in \mathbb{F}_2^{m\times n}$, the associated syndrome network $G = ([m], E)$, and an integer $r>0$. Fix subsets $\mathsf{B}\subseteq [m]$ and $\mathsf{C}\subset \mathsf{B}$. Then, a collection of syndromes $\mathsf{S}\subset \mathcal{B}_{\mathsf{C}}^r\cap\mathsf{B}$ is said to be an $r-\mathsf{Surf}$ for $\mathsf{C}$ in $\mathsf{B}$ under $s$ if
    \begin{enumerate}
        \item There are no violations in $\mathsf{S}$. $\forall u\in \mathsf{S}:\quad s_u=0$.
        \item $\mathsf{S}$ disconnects $\mathsf{C}$ within $\mathsf{B}$. That is, let $\mathsf{Conn}_{\mathsf{B}}(\mathsf{C}|\mathsf{S})\subseteq \mathsf{B}$ denote the syndromes connected to $\mathsf{C}$ in $G$ without traversing $\mathsf{S}$ nor $[m]\setminus \mathsf{B}$. Then, 
        \begin{equation}
            \mathsf{Conn}_{\mathsf{B}}(\mathsf{C}|\mathsf{S}) \cap\partial \mathcal{B}^r_{\mathsf{C}} =\emptyset\,.
        \end{equation}
    \end{enumerate}
\end{definition}

\noindent Observe that $\mathsf{S}\cup\partial\mathsf{B}$ screens $\mathsf{C}$ from $(\partial \mathcal{B}^r_{\mathsf{C}}\setminus \mathsf{S})$ in the sense of \cref{definition:screens}.

Recall now that to prove $\mathsf{SSM}$, our goal is to compare the marginal distribution on a third subset $\mathsf{A}\subset \mathsf{B}$, of two conditional Gibbs distributions, under two different boundaries $\partial \mathsf{B}$ and $\partial (\mathsf{B}\setminus \mathsf{C})$. Let us first draw two independent samples, from each of these conditional distributions:
\begin{equation}
    X\leftarrow \pi(\cdot |s_{\partial\mathsf{B}} = 0), \quad    Y\leftarrow \pi(\cdot |s_{\partial(\mathsf{B}\setminus \mathsf{C})} = 0)\,.
\end{equation}

We refer the reader to \cref{section:techniques} for intuition on why the existence of a \textit{shared} $(\mathsf{B, C}, r)$ separating surface is useful in developing a coupling argument for the marginals on the region $\mathsf{A}$. We first establish a sufficient condition for the existence of such a shared surface in $X, Y$: the absence of a certain path of syndrome violations from the discrepancy $\mathsf{C}$, to the boundary a distance $r$ away.  

\begin{claim}
    [The Contrapositive to a Shared Separating Surface]\label{claim:contrapositive_boundary} Fix an integer $r>1$. If the samples $X, Y$ do not share an $r$-$\mathsf{Surf}$, then there exists a path $\mathsf{p}\subset \mathsf{B}$ in $G$ from $\mathcal{B}^{1}_\mathsf{C}$ to $\partial\mathcal{B}^{r-1}_\mathsf{C}$, where each vertex on the path corresponds to a syndrome violated in either $X$ or $Y$.
\end{claim}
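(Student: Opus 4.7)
The plan is to prove the contrapositive: assuming no bad path exists from $\mathcal{B}^1_\mathsf{C}$ to $\partial\mathcal{B}^{r-1}_\mathsf{C}$ inside $\mathsf{B}$, I will explicitly construct a shared $r$-$\mathsf{Surf}$ for $X$ and $Y$. This is in the spirit of classical percolation duality, relating minimal cutting surfaces to dual connecting paths: ruling out the dual object forces a barrier to exist. Throughout, write $\mathsf{Bad} := \{v \in [m] : X_v = 1 \text{ or } Y_v = 1\}$ for the joint set of syndrome violations across the two samples.

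The main construction is the \emph{bad cluster of $\mathsf{C}$ within $\mathsf{B}$},
\begin{equation*}
T := \mathsf{C} \,\cup\, \bigl\{ v \in \mathsf{Bad} \cap \mathsf{B} \,:\, \exists \text{ a path } v_0, v_1, \dots, v_k = v \text{ in } \mathsf{B} \text{ with } v_0 \in \mathsf{C} \text{ and } v_1, \dots, v_k \in \mathsf{Bad} \bigr\}.
\end{equation*}
The hypothesis forces $T \subseteq \mathcal{B}^{r-1}_\mathsf{C}$. Indeed, if some $v \in T$ had $d(\mathsf{C}, v) \geq r$, then distances along its certifying path change by at most $1$ per step, so a discrete intermediate-value argument produces an index $i \geq 1$ (and hence $v_i \in \mathsf{Bad}$) with $d(\mathsf{C}, v_i) = r$, placing $v_i$ in $\partial\mathcal{B}^{r-1}_\mathsf{C}$. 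The subpath $v_1, \dots, v_i$ would then be a bad path inside $\mathsf{B}$ from $\mathcal{B}^1_\mathsf{C}$ (since $v_1$ is adjacent to $v_0 \in \mathsf{C}$) to $\partial\mathcal{B}^{r-1}_\mathsf{C}$, contradicting the hypothesis.

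The candidate shared surface is then the outer vertex boundary of $T$ within $\mathsf{B}$,
\begin{equation*}
\mathsf{S} := \bigl\{ v \in \mathsf{B} \setminus T \,:\, \exists u \in T \text{ with } (u, v) \in E(G) \bigr\},
\end{equation*}
and I will verify the three conditions of \cref{definition:separating_surfaces}. First, every $v \in \mathsf{S}$ is unviolated in both $X$ and $Y$: were $v$ bad, appending it to the certifying path of its $T$-neighbor $u$ would place $v$ into $T$, contradicting $v \notin T$. Second, $\mathsf{S} \subseteq \mathcal{B}^r_\mathsf{C} \cap \mathsf{B}$, since each $v \in \mathsf{S}$ is adjacent to a vertex of $T \subseteq \mathcal{B}^{r-1}_\mathsf{C}$ and hence lies at distance at most $r$ from $\mathsf{C}$. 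Third, $\mathsf{S}$ screens $\mathsf{C}$ from $\partial\mathcal{B}^r_\mathsf{C}$ inside $\mathsf{B}$: any path in $\mathsf{B}$ from $\mathsf{C} \subseteq T$ to a vertex of $\partial\mathcal{B}^r_\mathsf{C}$ (which lies at distance $r+1$, hence outside $T$) must make a first exit from $T$, and this exit vertex is necessarily in $\mathsf{S}$. Together these give $\mathsf{Conn}_\mathsf{B}(\mathsf{C}|\mathsf{S}) \cap \partial\mathcal{B}^r_\mathsf{C} = \emptyset$, completing the construction.

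The argument is purely combinatorial on the syndrome network $G$ and requires no probabilistic or temperature-dependent input. The one subtle point is keeping the two radii ``$r-1$'' and ``$r$'' cleanly separated: the hypothesis controls behavior up to $\partial\mathcal{B}^{r-1}_\mathsf{C}$, which is exactly what is needed to confine $T$ to $\mathcal{B}^{r-1}_\mathsf{C}$ and thereby place the outer boundary $\mathsf{S}$ strictly inside $\mathcal{B}^r_\mathsf{C}$, where \cref{definition:separating_surfaces} requires it to live. Beyond this bookkeeping there is no substantive obstacle.
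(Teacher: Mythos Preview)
Your proof is correct and follows essentially the same approach as the paper: both construct the bad cluster $T$ (called $\mathsf{U}$ in the paper) reachable from $\mathsf{C}$ through violated syndromes in $\mathsf{B}$, take its outer vertex boundary within $\mathsf{B}$ as the candidate surface $\mathsf{S}$, and verify the defining properties of an $r$-$\mathsf{Surf}$. The only difference is organizational---you phrase it as a contrapositive (no bad path $\Rightarrow$ surface exists), whereas the paper builds $\mathsf{S}$ unconditionally and then argues that its failure to be an $r$-$\mathsf{Surf}$ forces the bad path; these are logically equivalent and use the identical construction.
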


\begin{proof}
    Consider the subset of vertices $\mathsf{V}\subset \mathsf{B}$ which are violated in $X$ or $Y$, and are contained in $\mathsf{B}$. Consider the subgraph $G|_{\mathsf{V}\cup\mathsf{C}}$ and denote as $\mathsf{U}$ the connected component of $\mathsf{V}\cup \mathsf{C}$ traversing only vertices in $\mathsf{V}$. Next we consider the surface $\mathsf{S} = (N_G(\mathsf{U})\setminus \mathsf{U})\cap \mathsf{B}$ given by the neighborhood of the connected component of violations $\mathsf{U}$ in $G|_{\mathsf{B}}$. Note that $\mathsf{S}$ satisfies:
    \begin{enumerate}
        \item $\mathsf{S}\subseteq \mathsf{B}$.
        \item $\mathsf{S}\cap \mathsf{V}=\emptyset$. That is, $\mathsf{S}$ consists of unviolated syndromes in both $X, Y$. Otherwise, said violation would extend the connected component $\mathsf{U}$.
        \item $\mathsf{U} =\mathsf{Conn}_\mathsf{B}(\mathsf{C}|\mathsf{S})$, since the neighbors in $G|_{\mathsf{B}}$ of $\mathsf{C}$ which are not in $\mathsf{V}$, are placed in $\mathsf{S}$.
    \end{enumerate}

   Now, suppose $\mathsf{S}$ is not an $r-\mathsf{Surf}$. Then either $\mathsf{S}\nsubseteq \mathcal{B}_{\mathsf{C}}^r\cap\mathsf{B}$, or $\mathsf{U}\cap\partial\mathcal{B}^r_\mathsf{C}\neq \emptyset$. Since $\mathsf{U}$ is connected, both possibilities entail a path of violations in $\mathsf{V}$ from $\mathcal{B}^1_\mathsf{C}$ to $\partial\mathcal{B}^{r-1}_\mathsf{C}$ in $\mathsf{V}$, as claimed. 
\end{proof}

\begin{claim}
    [Coupled Separating Surfaces] \label{claim:separating_surfaces_toric} Fix $H\in \mathbb{F}_2^{m\times n}$ satisfying \cref{definition:connectedness}, and subsets $\mathsf{C\subset\mathsf{B}}\subseteq[m]$. Independently sample
    \begin{equation}
    X\leftarrow \pi(\cdot|s_{\partial\mathsf{B}} = 0)\,, \quad    Y\leftarrow \pi(\cdot|s_{\partial(\mathsf{B}\setminus \mathsf{C})} = 0)\,.
\end{equation}
    Then, if $\beta>\beta_0(d) = 1+2\ln (4d)$, $X$ and $Y$ agree on an $(\mathsf{B, C}, r+1)$-$\mathsf{Surf}$ with probability all but
    \begin{equation}
     \mathbb{P}_{X, Y}[\nexists \text{ shared } (\mathsf{B, C}, r+1)-\mathsf{Surf}]   \leq 16\cdot e^{-\min(r, m^\chi)(\beta-\beta_0(d))/2}\,.
    \end{equation}
\end{claim}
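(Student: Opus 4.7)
The plan is to convert the absence of a shared separating surface into a path-of-violations event via the contrapositive \cref{claim:contrapositive_boundary}, then bound the probability of such a path by a Peierls-style large-deviations bound applied independently to $X$ and $Y$. First, I apply \cref{claim:contrapositive_boundary} (with $r$ replaced by $r+1$): if $X$ and $Y$ do not share a $(\mathsf{B}, \mathsf{C}, r+1)$-separating surface, then there is a simple path $\mathsf{p}\subset \mathsf{B}$ of length $\geq r$ in the syndrome network $G$, from $\mathcal{B}^1_\mathsf{C}$ to $\partial\mathcal{B}^r_\mathsf{C}$, every vertex of which is a violated syndrome in either $X$ or $Y$.

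Next I would label each vertex of $\mathsf{p}$ by one of $\{X,Y\}$ indicating a distribution in which it is violated (breaking ties arbitrarily), and decompose $\mathsf{p}$ into maximal monochromatic sub-paths. By pigeonhole, the sub-collection of one label---say $X$---has total length at least $r/2$. I then restrict to the high-probability event $\mathcal{E}_{\mathsf{eras}}$ from \cref{corollary:all-erasable}, under which every connected component of violations in $X$ has size $\leq m^\chi$ and is individually erasable. This lets me truncate the chosen sub-collection so that the union of its containing $X$-connected-components has total size exactly $\ell := \min(r/2, m^\chi)$. Since $\mathsf{Im}(H)$ is a linear subspace, this union $\mathsf{V}^X$ is itself erasable. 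The symmetric argument handles the case where the dominant label is $Y$.

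The Peierls bound of \cref{lemma:peierls_wsm} applies verbatim to the conditional distributions $\pi_\beta(\cdot|s_{\partial\mathsf{B}}=0)$ and $\pi_\beta(\cdot|s_{\partial(\mathsf{B}\setminus\mathsf{C})}=0)$: the erasing bijection $K\mapsto K\oplus \mathbf{1}_{\mathsf{V}}$ preserves both the validity constraint and the pinned boundary, provided $\mathsf{V}$ is disjoint from the pinned region. So for any fixed erasable interior $\mathsf{V}$, $\pi_\beta(\mathsf{V}\subseteq X\mid s_{\partial\mathsf{B}}=0)\leq e^{-\beta|\mathsf{V}|}$, and similarly for $Y$. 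A union bound then enumerates: at most $|\mathcal{B}^1_\mathsf{C}|\cdot d^r$ self-avoiding paths $\mathsf{p}$ of length $r$ in a degree-$d$ graph, a factor $2\cdot 2^r$ for the choice of majority label and its monochromatic decomposition, and by \cref{fact:clustering} at most $e^s d^{s-\ell}$ erasable component-unions $\mathsf{V}$ of size $s\geq \ell$ containing the chosen sub-paths. Summing the geometric series in $s$ and adding back the $\mathcal{E}_{\mathsf{eras}}$ failure probability from \cref{corollary:all-erasable} yields the claimed $16\cdot e^{-\min(r,m^\chi)(\beta-\beta_0)/2}$ bound whenever $\beta>\beta_0(d)=1+2\ln(4d)$.

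The hard part will be the tight combinatorial bookkeeping in this union bound: the threshold $\beta_0(d)=1+2\ln(4d)$ is dictated precisely by requiring the path-counting factor $d^r$, the labeling factor $2^r$, and the cluster-enumeration factor $d^{s-\ell}$ from \cref{fact:clustering} to be absorbed by the Peierls gain $e^{-\beta s}$. A secondary subtlety is ensuring that the chosen erasable supersets $\mathsf{V}^X, \mathsf{V}^Y$ stay disjoint from the respective pinning sets $\partial\mathsf{B}$ or $\partial(\mathsf{B}\setminus\mathsf{C})$; since $\mathsf{p}$ may start inside $\mathsf{C}$ and its containing components might protrude towards a boundary, this likely requires trimming the endpoints of $\mathsf{p}$ by a constant-width layer or inflating $\mathsf{C}$ by a unit neighborhood---a purely geometric adjustment that costs only constant factors in the final bound.
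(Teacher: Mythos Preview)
Your approach is essentially the paper's own proof: contrapositive via \cref{claim:contrapositive_boundary}, pigeonhole to a monochromatic sub-collection $\mathsf{sp}$ of size $\geq |\mathsf{p}|/2$, Peierls on the union of its containing erasable components via \cref{lemma:peierls_wsm} and the cluster count \cref{fact:clustering}, union bound over paths and subsets, and adding back the $\mathcal{E}_{\mathsf{eras}}$ failure from \cref{corollary:all-erasable}. Two of your flagged subtleties dissolve on inspection: the $m^\chi$ in the final $\min$ arises solely from the $\mathcal{E}_{\mathsf{eras}}$ failure probability (no truncation of $\mathsf{V}$ is needed, since larger $|\mathsf{V}|$ only strengthens the Peierls bound), and the boundary-disjointness is automatic because the pinned sets $\partial\mathsf{B}$ and $\partial(\mathsf{B}\setminus\mathsf{C})$ carry no violations in $X$, $Y$ respectively and hence cannot be touched by any connected component of violated syndromes.
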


\begin{proof}[Proof of \cref{claim:separating_surfaces_toric}] We argue the existence of the separating surface as in \cref{definition:separating_surfaces}, via the contrapositive as in \cref{claim:contrapositive_boundary}. If the path $\mathsf{p}$ is of length $k$, then either at least $k/2$ of the vertices in $\mathsf{p}$ are violated in $X$, or at least $k/2$ of the vertices in $\mathsf{p}$ are violated in $Y$. We assume the first case from now on; the second case is analogous. 

Let $\mathsf{sp}\subseteq \mathsf{p}$ be the violated syndromes in $X$ that are part of $\mathsf{p}$, where $|\mathsf{sp}|\geq |\mathsf{p}|/2$. Our goal now is to understand the probability of such a subpath $\mathsf{sp}\subseteq X$ via a Peierls argument. Here we require a careful understanding of the size of $\mathsf{sp}$, and in particular, whether or not it is \textsf{erasable}. Let $\mathsf{V}$ denote the violated syndromes in $X$ that are connected to $\mathsf{sp}$. If the various connected components of $\mathsf{V}$ are all of bounded size (and therefore are topologically trivial), then $\mathsf{V}$ is \textsf{erasable}; in particular, we can run a Peierls argument to compute the probability of such a sub-path:
    \begin{align}\label{equation:percolation_bound}
        \mathbb{P}_{X\leftarrow \pi_\beta}[\mathsf{sp}\subseteq X \wedge X\in \mathcal{E}_{\mathsf{eras}}]&\leq \mathbb{P}_{X\leftarrow \pi_\beta}[\exists \mathsf{V}: \mathsf{sp}\text{ connected to } \mathsf{V},\mathsf{V}\subseteq X, \mathsf{V}\textsf{ erasable}]\\
        &\leq \sum_{\substack{\mathsf{V} \text{ connected to }\mathsf{sp}\\\mathsf{V}\textsf{ erasable}}} e^{-\beta |\mathsf{V}|}\\& \leq  \sum_{\ell\geq |\mathsf{sp}|} \bigg(e^{\ell} \cdot d^{\ell-|\mathsf{sp}|}\bigg)\cdot e^{-\beta \ell} \\
        &= d^{-|\mathsf{sp}|}\cdot \sum_{\ell\geq |\mathsf{sp}|} (de^{1-\beta})^\ell = \frac{(e^{1-\beta})^{|\mathsf{sp}|}}{1-de^{1-\beta}}\,.
    \end{align}
    Here, the second line uses the union bound, the Peierls argument in \cref{lemma:peierls_wsm}, and the Markov property in \cref{lemma:markov} (to ensure the Peierls bound holds under conditioning on $0$ boundaries). The third line sums over all subsets that are connected to $\mathsf{sp}$ (relaxing the constraint of being \textsf{erasable}) and uses \cref{fact:clustering}; the fourth line assumes that $de^{1-\beta}<1$. We next run a union bound over all paths $\mathsf{p}$ starting from $\mathcal{B}_{\mathsf{C}}^1$ to the boundary $\partial\mathcal{B}_\mathsf{C}^r$, and then over all sub-paths $\mathsf{sp}$ of $\mathsf{p}$: 
    \begin{align}
        \mathbb{P}_{X\leftarrow \pi_\beta}[\exists \mathsf{p} :\partial\mathcal{B}_\mathsf{C}^1\rightarrow \partial\mathcal{B}_\mathsf{C}^r \wedge \mathcal{E}_{\mathsf{eras}}] &\leq \sum_{|\mathsf{p}|\geq r} \sum_{\substack{\mathsf{sp}\subseteq \mathsf{p} \\ |\mathsf{sp}|\geq \mathsf{p}/2  }} \mathbb{P}_{X\leftarrow\pi_\beta}\big[\mathsf{sp}\subseteq X\wedge \mathcal{E}_{\mathsf{eras}}\big] \\ &\leq \sum_{|\mathsf{p}|\geq r} d^{|\mathsf{p}|} \sum_{|\mathsf{sp}|\geq |\mathsf{p}|/2 } \binom{|\mathsf{p}|}{|\mathsf{sp}|} \times \mathbb{P}_{X\leftarrow\pi_\beta}\big[\mathsf{sp}\subseteq X\wedge \mathcal{E}_{\mathsf{eras}}\big] \\ &\leq |\mathcal{B}^1_\mathsf{C}|\cdot \sum_{\ell\geq r} d^\ell\times  2^\ell \times \frac{(e^{1-\beta})^{\ell/2}}{1-de^{1-\beta}} \\&\leq |\mathcal{B}^1_\mathsf{C}|\cdot (2de^{(1-\beta)/2})^{r}\times \frac{1}{1-de^{1-\beta}} \times \frac{1}{1-2d(e^{1-\beta})^{1/2}}\,,
    \end{align}
so long as $\beta> \beta_0(d)$ such that $2d(e^{1-\beta_0(d)})^{1/2}=1/2$. We can now use \cref{corollary:all-erasable} to ensure the absence of such a path in $X$ with high probability: 
\begin{align}
    \mathbb{P}_{X\leftarrow \pi_\beta}[\exists \mathsf{p} :\mathcal{B}_\mathsf{C}^1\rightarrow \partial\mathcal{B}_\mathsf{C}^r] & \leq 4 \cdot |\mathcal{B}^1_\mathsf{C}|\cdot e^{-r\cdot (\beta-\beta_0(d))/2} + \frac{m^\nu (de^{1-\beta})^{m^\chi}}{1-de^{1-\beta}} \\
    & \leq 8\cdot |\mathcal{B}^1_\mathsf{C}|\cdot e^{-\min(r, m^\chi )\cdot (\beta-\beta_0(d))/2}\,.
\end{align}
    To conclude, we union bound over the sample from $Y$ as well. We arrive at:
    \begin{equation*}
        \mathbb{P}[\exists \text{ shared separating surface}] \geq 1- 16\cdot |\mathcal{B}^1_\mathsf{C}|\cdot e^{-\min(r, m^\chi )(\beta-\beta_0(d))/2}\,. \qedhere
    \end{equation*}    
\end{proof}

\subsection{Proof of SSM within a Logical Sector}
\label{section:ssmproof}

We are now in a position to prove the main result of this section, on the Domain SSM property for the syndrome Gibbs distribution of parity check matrices satisfying \cref{definition:connectedness}. 

\begin{proof}[Proof of \cref{lemma:ssm}]
    The crux of the proof lies in the Markov property of the Gibbs distribution (\cref{lemma:markov}). Let us fix an inverse temperature $\beta$ and regions $\mathsf{A, B, C}$ such that $d(\mathsf{A, C})>r$. From \cref{claim:separating_surfaces_toric}, we have that samples from the distributions $\pi(\cdot|s_{\partial\mathsf{B}} = 0)$ and $\pi(\cdot|s_{\partial(\mathsf{B}\setminus \mathsf{C})} = 0)$, both share a separating surface (around the site $u$) with high probability. From \cref{corollary:all-erasable}, samples from the distributions conditioned on all syndrome components being erasable
    \begin{equation}\label{eq:independent_samples}
         \pi_\beta\big(\cdot \big|s_{\partial\mathsf{B}} = 0\wedge\mathcal{E}_\mathsf{eras})\,, \quad \text{ and } \quad \pi_\beta\big(\cdot \big| s_{\partial(\mathsf{B}\setminus \mathsf{C})} = 0\wedge \mathcal{E}_\mathsf{eras}\big)\,,
    \end{equation}

    \noindent also share such surfaces except with probability $18\cdot |\mathcal{B}^1_\mathsf{C}|\cdot  e^{-\min(r, m^\chi)(\beta-\beta_0(d))/2}$.

    We can thereby sample $X, Y$ independently (from the distributions in \cref{eq:independent_samples}), from the boundary discrepancy ``inwards", until the existence of the first such shared $(\mathsf{B, C}, r)$ surface. Given the arbitrariness of the geometry, we spell out this ``revealing algorithm" in \cref{alg:revealing}. We claim that if \cref{alg:revealing} does not reject, then it outputs a $(\mathsf{B, C}, r)$ separating surface. Indeed, consider the connected component of $\mathsf{A}$ among un-revealed vertices; by the stopping condition (Line 3) its neighborhood must contain only unviolated syndromes in $X$ or $Y$. Otherwise, there would be a revealed vertex in $G|_{\mathsf{B}}$ associated to a syndrome violation, adjacent to an unrevealed vertex in $\mathsf{B}$.

    Conditional on the existence of said $(\mathsf{B, C}, r)$ surface $\mathsf{S}$, then $\mathsf{S}$ in addition to the shared boundary $\partial\mathsf{B}\cap \partial(\mathsf{B}\setminus\mathsf{C})$ screens $\mathsf{C}$ from $\mathsf{A}$ with a boundary of $0$s. When additionally conditioned on $\mathcal{E}_{\mathsf{eras}}$, we can then apply the Markov property in \cref{lemma:markov} to ensure the marginal distributions within the surface match exactly. We can therefore couple the assignment interior to the surface to ensure they agree (with probability 1). That is, 
    \begin{equation}
        \|\pi_\beta\big(s_\mathsf{A} \big|s_{\partial\mathsf{B}} = 0\wedge\mathcal{E}_\mathsf{eras}) - \pi_\beta\big(s_\mathsf{A} \big| s_{\partial(\mathsf{B}\setminus \mathsf{C})} = 0\wedge \mathcal{E}_\mathsf{eras}\big)\|_1\leq 18\cdot |\mathcal{B}^1_\mathsf{C}|\cdot e^{-\min(R, m^\chi)(\beta-\beta_0(d))/2}\,.  
    \end{equation}
    Applying \cref{corollary:all-erasable} again, with the bound on the degree of the syndrome network $|\mathcal{B}^1_\mathsf{C}| \leq (d+1)\cdot |\mathsf{C}|$, then gives the desired bound:
    \begin{equation*}
        \delta_{\beta, r}^{\mathsf{SSM}} \leq 20 \cdot d\cdot e^{-\min(r, m^\chi)(\beta-\beta_0(d))/2}\,.  \qedhere
    \end{equation*}
\end{proof}

\begin{algorithm}
    \caption{The Revealing Algorithm $(\mathsf{A, B, C})$}\label{alg:revealing}
    \begin{algorithmic}[1]
        \State \textbf{Maintain} a set of already sampled (``revealed") syndromes $\mathsf{R}$; denote as $\partial_{\mathsf{int}}\mathsf{R}\subseteq \mathsf{R}\cap \mathsf{B}$ the interior boundary of $\mathsf{R}$ in the subgraph $G|_{\mathsf{B}}$.
        \State \textbf{Initialize} the set $\mathsf{R} =  \partial \mathsf{B}\cup \mathsf{B}^1_{\mathsf{C}}$,  and the samples:
        \begin{equation}
          X_{ \mathsf{R}}\leftarrow \pi_\beta\big(s_{\mathsf{R}} \big|s_{\partial\mathsf{B}} = 0\wedge\mathcal{E}_\mathsf{eras}), \quad Y_{\mathsf{R}} \leftarrow \pi_\beta\big(s_{\mathsf{R}} \big|s_{\partial(\mathsf{B\setminus \mathsf{C}})} = 0\wedge\mathcal{E}_\mathsf{eras})
        \end{equation}
        \While{there exists a violation in $\partial_{\mathsf{int}}\mathsf{R}$, in either $X_\mathsf{R}$ or $Y_\mathsf{R}$, and $\mathsf{A}\cap \mathsf{R}=\emptyset$.}
        \State Identify the violation $v\in \partial_{\mathsf{int}}\mathsf{R}$ closest to $\mathsf{C}$; ties broken arbitrarily. 
        \State \textit{Reveal} the neighborhood $N_v\cap \mathsf{B}$ of $v$ in $G$, by sampling the unrevealed neighbors in $\mathsf{B}$:
        \begin{gather}
            X_{(N_v\cap \mathsf{B})\setminus \mathsf{R}}\leftarrow \pi_\beta\big(s_{(N_v\cap \mathsf{B})\setminus \mathsf{R}} \big|X_\mathsf{R}\wedge \mathcal{E}_\mathsf{eras}),\\ Y_{(N_v\cap \mathsf{B})\setminus \mathsf{R}}\leftarrow \pi_\beta\big(s_{(N_v\cap \mathsf{B})\setminus \mathsf{R}} \big|Y_\mathsf{R}\wedge \mathcal{E}_\mathsf{eras})
        \end{gather}
        \State Update the subset 
        \begin{equation}
         \mathsf{R}\leftarrow \mathsf{R}\cup (N_v\cap \mathsf{B}),
        \end{equation}
        \EndWhile
        \State \textbf{Return} $\partial_{\mathsf{int}}\mathsf{R}$ if $\mathsf{A}\cap \mathsf{R}=\emptyset$; else output $\bot$.
    \end{algorithmic}

\end{algorithm}

\section{Rapid Mixing within a Logical Sector}
\label{section:mixing}

In this section, we present our algorithm to prepare the Gibbs states within a sector of self-correcting (quantum or classical) memories satisfying the criterion of \cref{definition:connectedness}, and its analysis. We begin in \cref{subsection:algorithm} by presenting the description of the modified block dynamics we consider. In \cref{subsection:results_cbd}, we present the main results of this section, which establish rapid mixing within a sector of the dynamics when initialized from a ground state. 

The remain subsections pertain to our analysis: In \cref{subsection:small_loops} we define the restriction of the state space to the small loop configurations, $\Omega_R$. In \cref{subsection:chain_restriction} we consider the Markov chain restricted to $\Omega_R$, and prove the restricted chain and the original chain are indistinguishable. In \cref{subsection:fast_mixing}, we prove the Markov chain restricted to $\Omega_R$ is rapid mixing; contingent only on a ``cluster-distance contraction" claim which we defer to  \cref{subsection:cycle_contraction}.

\subsection{The Algorithm}
\label{subsection:algorithm}

We begin by defining a Markov chain on the space of syndromes; and then reason on how to implement said chain using physical updates on the code.

\subsubsection{The Syndrome Markov Chain}

We define the following Markov chain $\mathsf{P}\equiv \mathsf{P}_{\mathsf{syn}}$ over the valid syndrome space, based on a modification to the standard block dynamics. Let us fix some box-length $L$, to be determined later.  Roughly speaking, in the modification we introduce, instead of resampling the entirety of the box $\mathcal{B}^L$ conditioned on the current boundary $\partial\mathcal{B}^L$ configuration, we additionally pin all the connected components of syndrome configurations (say, the cycles) which are incident on the boundary $\partial\mathcal{B}^L$.

\begin{tcolorbox}[breakable]
\begin{definition}
    [The Conditional Block Dynamics]\label{definition:cbd} Let $H\in \mathbb{F}_2^{m\times n}$ be a parity check matrix satisfying $(\ell, d, \nu, \chi)-\mathsf{CC}$ as in \cref{definition:connectedness}, and fix an integer $L$ s.t.  $|\mathcal{B}^L|\leq m^\chi$.
    
    We consider the following Markov Chain over $\Omega=\mathsf{Im}(H)\subset \{0, 1\}^m$, with transition matrix $\mathsf{P}\equiv \mathsf{P}_{\mathsf{syn}}:\mathbb{R}^\Omega\rightarrow \mathbb{R}^\Omega$, where a transition is sampled as follows. 
    \begin{enumerate}
        \item A syndrome $u\in [m]$ is sampled uniformly at random. Let $\mathcal{B}^L\subset [m]$ denote the box of radius $L$ around $u$.
        \item Measure the syndrome $s$ within and on the boundary of $\mathcal{B}^L$. Let $\mathcal{V} = \{\mathsf{V}_i:\: \mathsf{V}_i\subset [m]\}_i$ denote all the connected components of violated syndromes which are incident on $\partial\mathcal{B}^L$. We define the ``interior" of the box $\mathcal{B}^L$:
        \begin{equation}
            \mathsf{B}\equiv \mathcal{B}^L\setminus \bigcup_i \big( N(\mathsf{V}_i) \cup \mathsf{V}_i\big)\,.
        \end{equation}
        \noindent Note that the syndromes in $N(\mathsf{V}_i)\setminus \mathsf{V}_i$ must be 0, ensuring that if $\mathsf{B}\neq \emptyset,$ $s_{\partial \mathsf{B}} = 0$.
        
        \item Generate the next syndrome $s'\in \mathsf{Im}(H)$, by resampling the syndrome distribution within the interior $\mathsf{B}$, conditioned on the boundary.
        \begin{equation}
            s'_{\mathsf{B}}\leftarrow \pi_\beta\big(s_{\mathsf{B}} \big| s_{\partial\mathsf{B}} = 0\big)\,, \quad s'_{[m]\setminus \mathsf{B}} = s_{[m]\setminus \mathsf{B}}\,.
        \end{equation}
\end{enumerate}    
\end{definition}
\end{tcolorbox}

Here two remarks are in order. 

\begin{remark}
    The Markov chain $\mathsf{P}\equiv \mathsf{P}_{\mathsf{syn}}$ is on the space of syndromes, and doesn't detail how to implement the updates on the physical code. We return to this point in \cref{subsubsection:physical_updates}.
\end{remark}

Next, note that by pinning the components $\mathcal{V}$ which are incident on the boundary, we avoid configurations in which those connected components extend or merge together (or change at all!). Consequently, if we initialize the system from a ground state (of syndrome $0$), then the syndrome configurations encountered during the evolution are never larger than the size of a box $\mathcal{B}^L$. By virtue of \cref{definition:syndrome_clustering} this will entail the encountered syndromes are not $\mathsf{critical}$.

\begin{remark}
 Alternatively, if the initial configuration contains a component which is too large to fit in any box $\mathcal{B}^L$, then it is never modified by the dynamics. 
\end{remark}

\subsubsection{Implementing the Physical Updates}
\label{subsubsection:physical_updates}

To implement a given update $s\rightarrow s'$ of the syndrome chain $\mathsf{P}\equiv \mathsf{P}_{\mathsf{syn}}$ on the actual code, it suffices to (carefully) pick errors $e\in \mathbb{F}_2^n$ consistent with the syndrome change. Moreover, we would like these (quasi) local updates on the syndromes, to be implemented using (quasi) local updates on the symbols of the code:
\begin{tcolorbox}
    \begin{definition}
        [Physical Updates]\label{definition:phys_updates} Fix $s', s\in \mathsf{Im}(H)$ which differ within a region $\mathsf{B}\subset [m]$ as in \cref{definition:cbd}. To generate the update on the code, we apply the canonical corrections $$\mathsf{corr}(s_\mathsf{B})\oplus \mathsf{corr}(s'_\mathsf{B})\,.$$
    \end{definition}
\end{tcolorbox}

The fact that $\mathsf{B}$ has pinned boundaries $\partial\mathsf{B} = 0$ and is of size $|\mathcal{B}^L|\leq  m^\chi$ ensures $s_\mathsf{B}$ is erasable, and thereby \cref{definition:phys_updates} is well-defined. The change in syndrome associated to said update is

\begin{equation}
    H\big(\mathsf{corr}(s_\mathsf{B})\oplus \mathsf{corr}(s'_\mathsf{B})\big) = s_\mathsf{B}\oplus s'_\mathsf{B} = s\oplus s'\,,
\end{equation}

\noindent thereby giving the correct syndrome update. \cref{definition:phys_updates} precisely erases the syndrome within $\mathsf{B},$ and subsequently replaces them with $s'$. Further, the \textsf{Locally Erasable} property (\cref{definition:locally_erasable}) ensures the update $\mathsf{corr}(s_\mathsf{B})\oplus \mathsf{corr}(s'_\mathsf{B})$ is quasi-local, i.e. contained in a box of spins $\mathsf{supp}(\mathcal{B}^L)$.\footnote{While we do not yet discuss the efficiency of these quasi-local updates, we remark that an error which is equivalent to the min-weight correction can be found by Gaussian elimination on the $\mathsf{supp}(\mathcal{B}^L) \times \mathcal{B}^L$ submatrix of $H$ (again, by the \textsf{Locally Erasable} property \cref{definition:locally_erasable}). See also \cref{section:toric_is_even_subgraph}.}

\begin{tcolorbox}
    \begin{definition}
    \label{definition:code_chain}
        Fix $H\in \mathbb{F}_2^{m\times n}$ and the Markov chain $\mathsf{P}_\mathsf{syn}$ over $\mathsf{Im}(H)$ from \cref{definition:cbd}. We denote as $\mathsf{P}_{\mathsf{code}}$ the discrete time Markov chain over the state space $\{0, 1\}^n$ given by implementing the updates of $\mathsf{P}_{\mathsf{syn}}$ using the update rules in \cref{definition:phys_updates}.
    \end{definition}
\end{tcolorbox}

We now prove two simple lemmas on the relationship between the (discrete-time) Markov chains on the code-space and that on the syndrome-space. 

\begin{lemma}
    [(Classical) Code to Syndrome Chain Reduction]\label{lemma:cl_reduction} Fix $H\in \mathbb{F}^{m\times n}$ satisfying $(\ell, d, \nu, \chi)-\mathsf{CC}$ as in \cref{definition:connectedness} for any choice of constants, and let $\mathsf{P}_{\mathsf{syn}}, \mathsf{P}_{\mathsf{code}}$ be the Markov chains defined in \cref{definition:cbd}, \cref{definition:code_chain} respectively.  Then, so long as $|\mathcal{B}^{L}|\leq m^{\chi}$,
    \begin{equation}\label{equation:cl_reduction}
    \forall c \text{ s.t. }Hc=0^m, \quad  \forall t\in \mathbb{Z}^+:\quad   \| \mathsf{P}_{\mathsf{code}}^t(c) - \rho^c_\beta\|_1 =  \| \mathsf{P}_{\mathsf{syn}}^t(0^m) - \pi_\beta\|_1\,.
    \end{equation}
\end{lemma}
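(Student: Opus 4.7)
I would prove the equality by exhibiting an injective, deterministic lift from syndrome trajectories to code trajectories; once the code chain is realized as the pushforward of the syndrome chain under this lift, the total-variation equality follows from injectivity of the lift.

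The lift is the map $\Phi_c: \mathsf{Im}(H) \to \{0,1\}^n$ defined by $\Phi_c(s) := c \oplus \mathsf{corr}(s)$, where $\mathsf{corr}$ is the generalized contract decoder of \cref{definition:generalized_contract}. Since $Hc = 0$ and $H \cdot \mathsf{corr}(s) = s$, we have $H\Phi_c(s) = s$, so $\Phi_c$ is injective and $\Phi_c(0^m) = c$. By \cref{def:gibbs_within_logical}, the pushforward of $\pi_\beta$ under $\Phi_c$ is exactly $\rho_\beta^c$.

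I would then show by induction on $t$ that any realization of $\mathsf{P}_{\mathsf{code}}$ started at $c$ satisfies $x^{(t)} = \Phi_c(s^{(t)})$, where $s^{(t)}$ is the realization of $\mathsf{P}_{\mathsf{syn}}$ started at $0^m$ driven by the same randomness (choice of block $\mathsf{B}_k$ and resampled inner configuration $s'_{\mathsf{B}_k}$). The base case is immediate; for the inductive step, \cref{definition:phys_updates} gives
\[
x^{(k+1)} = x^{(k)} \oplus \mathsf{corr}\bigl(s^{(k)}_{\mathsf{B}}\bigr) \oplus \mathsf{corr}\bigl(s^{(k+1)}_{\mathsf{B}}\bigr),
\]
so the induction reduces to verifying the algebraic identity
\[
\mathsf{corr}(s) \oplus \mathsf{corr}(s') = \mathsf{corr}(s_{\mathsf{B}}) \oplus \mathsf{corr}(s'_{\mathsf{B}})
\]
whenever $s, s'$ agree outside $\mathsf{B}$ with $s_{\partial\mathsf{B}} = s'_{\partial\mathsf{B}} = 0$.

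Verifying this identity is the main technical obstacle, and two features of the conditional block dynamics make it go through. First, because $|\mathcal{B}^L| \leq m^\chi$ and $\mathsf{P}_{\mathsf{syn}}$ never modifies clusters incident on $\partial\mathcal{B}^L$, every cluster of violated syndromes encountered along the trajectory is contained in a single block, and hence is \textsf{erasable} by \cref{definition:syndrome_clustering}; thus \cref{definition:generalized_contract} expresses $\mathsf{corr}$ as an XOR of independent cluster corrections (whose supports are disjoint, making the sequential order irrelevant). Second, the pinning $s_{\partial\mathsf{B}} = s'_{\partial\mathsf{B}} = 0$ forces every cluster to lie wholly inside or wholly outside $\mathsf{B}$; clusters outside $\mathsf{B}$ coincide in $s$ and $s'$ and cancel in the XOR, while clusters inside $\mathsf{B}$ are precisely those of $s_{\mathsf{B}}$ and $s'_{\mathsf{B}}$, with corrections supported in $\mathsf{supp}(\mathcal{B}^L)$ by \textsf{Locally Erasable} (\cref{definition:locally_erasable}). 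Combining, $\mathsf{P}_{\mathsf{code}}^t(c)$ is the pushforward of $\mathsf{P}_{\mathsf{syn}}^t(0^m)$ under the injection $\Phi_c$, so the two total-variation distances in \cref{equation:cl_reduction} agree.
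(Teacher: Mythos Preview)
Your proof is correct and follows essentially the same approach as the paper: both arguments hinge on showing inductively that the cumulative error after $t$ steps is exactly $\mathsf{corr}(s^{(t)})$, which reduces to the identity $\mathsf{corr}(s)\oplus \mathsf{corr}(s_\mathsf{B})\oplus \mathsf{corr}(s'_\mathsf{B}) = \mathsf{corr}(s')$. The paper's proof simply asserts this identity (noting that erasability of all clusters makes the update well-defined), whereas you spell out more carefully the cluster-by-cluster cancellation and make explicit the injective pushforward framing that yields the equality of total-variation distances; these are elaborations rather than a different route.
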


The continuous time case is analogous. 

\begin{proof}
    To establish \cref{equation:cl_reduction}, it suffices to prove that after time $t$, the cumulative error $e$ applied to the original code-word $c$ is $e=\mathsf{corr}(s)$, where $s$ is the current syndrome (recall \cref{def:gibbs_within_logical}). In implementing the block update resulting in the next syndrome $s'$, we apply the bit-flip $e=\mathsf{corr}(s_\mathsf{B})\oplus \mathsf{corr}(s'_\mathsf{B})$ of syndrome $s\oplus s'$, of support within $\mathcal{B}^L$. Since $|\mathcal{B}^L|\leq m^\chi$, all the connected components of violations in $s, s'$ are \textsf{erasable}, so the update is well defined. We conclude
    \begin{equation*}
        \mathsf{corr}(s)\oplus e = \mathsf{corr}(s)\oplus \mathsf{corr}(s_\mathsf{B})\oplus \mathsf{corr}(s'_\mathsf{B}) = \mathsf{corr}(s')\,. \qedhere
    \end{equation*}
\end{proof}

To establish the analog for quantum error-correcting codes, we ``time-share" the Markov chains associated to the X and Z syndromes. 

\begin{lemma}
    [(Quantum) Code to Syndrome Chain Reduction]\label{lemma:q_reduction} Fix a CSS code $\mathsf{CSS}(H_X, H_Z)$ satisfying $(\ell, d, \nu, \chi)-\mathsf{CC}$ as in \cref{definition:connectedness} for any choice of constants. Let $\mathsf{P}_{\mathsf{syn}}^X, \mathsf{P}_{\mathsf{code}}^X, \mathsf{P}_{\mathsf{syn}}^Z, \mathsf{P}_{\mathsf{code}}^Z$ be the Markov chains associated to $H_X, H_Z$ respectively, as defined in \cref{definition:cbd}, \cref{definition:code_chain}. Let $\mathcal{L}_X, \mathcal{L}_Z$ be the generators associated to $\mathsf{P}_{\mathsf{code}}^X, \mathsf{P}_{\mathsf{code}}^Z$ as defined in \cref{fact:cont_disc_quantum}.  
    
    Let $\psi$ be a codeword of $\mathsf{CSS}(H_X, H_Z)$. Then, so long as $|\mathcal{B}^{L}|\leq m^\chi$, the continuous time evolution under $\mathcal{L} = \mathcal{L}_X + \mathcal{L}_Z$ satisfies

    \begin{equation}
        \forall t\geq 0:\quad   \| e^{t\mathcal{L}}(\psi) - \rho^\psi_\beta\|_1 \leq  2\cdot \max_{b\in \{X, Z\}}\mathbb{E}_{k \sim \mathsf{Pois}(mt)} \| (\mathsf{P}^b_{\mathsf{syn}})^k(0^{m_b}) - \pi_\beta^b\|_1\,.
    \end{equation}
    
\end{lemma}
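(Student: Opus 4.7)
The plan is to reduce the continuous-time quantum evolution to two independent classical syndrome Markov chains, one per Pauli basis, and then apply the classical reduction of \cref{lemma:cl_reduction} in each. The key structural observation is that for a CSS code the two ``$X$ or $Z$'' Davies generators commute as superoperators: a single-qubit $X$-flip only modifies $Z$-syndromes (and vice-versa), and the Metropolis/heat-bath rate depends only on the corresponding syndrome energy. Hence in the Pauli basis $\mathcal{L}_X$ acts only on the $X$-labels of a density matrix while $\mathcal{L}_Z$ acts only on the $Z$-labels, giving $e^{t\mathcal{L}} = e^{t\mathcal{L}_X}\circ e^{t\mathcal{L}_Z}$ exactly. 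Analogously, the target factorizes as $\rho_\beta^\psi = \mathcal{E}_X\circ\mathcal{E}_Z(\ketbra{\psi})$ where $\mathcal{E}_b(\rho) = \sum_{s_b} \pi_\beta^b(s_b)\,\mathsf{corr}_b(s_b)\,\rho\,\mathsf{corr}_b(s_b)$, because the canonical decoder factorizes as $\mathsf{corr}(s_X,s_Z) = \mathsf{corr}_X(s_X)\mathsf{corr}_Z(s_Z)$ and $\pi_\beta = \pi_\beta^X\otimes\pi_\beta^Z$.

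From here I would telescope via the hybrid state $e^{t\mathcal{L}_X}\circ\mathcal{E}_Z(\ketbra{\psi})$ and the triangle inequality,
\begin{align*}
\bigl\|e^{t\mathcal{L}}(\ketbra{\psi}) - \rho_\beta^\psi\bigr\|_1
&\leq \bigl\|e^{t\mathcal{L}_Z}(\ketbra{\psi}) - \mathcal{E}_Z(\ketbra{\psi})\bigr\|_1 \\
&\quad + \bigl\|e^{t\mathcal{L}_X}\mathcal{E}_Z(\ketbra{\psi}) - \mathcal{E}_X\mathcal{E}_Z(\ketbra{\psi})\bigr\|_1,
\end{align*}
using that $e^{t\mathcal{L}_X}$ is a $1\to 1$ contraction. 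By convexity, the second term is at most the worst-case of $\|e^{t\mathcal{L}_X}(\ketbra{\phi}) - \mathcal{E}_X(\ketbra{\phi})\|_1$ over the code-states $\ket{\phi}$ appearing in the mixture $\mathcal{E}_Z(\ketbra{\psi})$; since $H_X H_Z^T = 0$, every such $\ket{\phi} = \mathsf{corr}_Z(s_Z)\ket{\psi}$ still has all-zero $X$-syndrome and hence plays the role of a codeword for the $X$-stabilizer group.

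Finally, I would Poissonize each one-basis semigroup via the quantum analog of \cref{fact:cont_disc} (developed in \cref{section:open_quantum_systems}), writing $e^{t\mathcal{L}_b}(\ketbra{\phi}) = \mathbb{E}_{k\sim \mathsf{Pois}(mt)}[(\mathsf{P}^b_{\mathsf{code}})^k(\ketbra{\phi})]$, and apply \cref{lemma:cl_reduction} term-by-term: for each $k$,
\begin{equation*}
\bigl\|(\mathsf{P}^b_{\mathsf{code}})^k(\ketbra{\phi}) - \mathcal{E}_b(\ketbra{\phi})\bigr\|_1 \;=\; \bigl\|(\mathsf{P}^b_{\mathsf{syn}})^k(0^{m_b}) - \pi_\beta^b\bigr\|_1,
\end{equation*}
since $\ket{\phi}$ has $b$-syndrome $0^{m_b}$. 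Averaging over $k$ bounds each of the two triangle-inequality terms by $\mathbb{E}_{k\sim \mathsf{Pois}(mt)}\|(\mathsf{P}^b_{\mathsf{syn}})^k(0^{m_b}) - \pi_\beta^b\|_1$ for the corresponding $b$, and replacing both by the $\max$ over $b\in\{X,Z\}$ produces the factor of $2$ in the claimed bound. The main obstacle, I expect, is the first step: verifying cleanly that $[\mathcal{L}_X,\mathcal{L}_Z]=0$ in the ``$X$ or $Z$'' Davies model so that the semigroup splitting and the analogous Gibbs-state factorization above are exact (rather than only approximate, as they would be for a generic Trotter product); once that is in hand, the rest is bookkeeping via convexity and a direct appeal to the classical reduction.
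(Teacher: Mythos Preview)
Your proposal is correct and follows essentially the same route as the paper: commutativity of $\mathcal{L}_X$ and $\mathcal{L}_Z$ (the paper phrases this as ``when acting on an eigenstate of the code Hamiltonian, $\mathcal{L}_X, \mathcal{L}_Z$ commute''), Poissonization of each semigroup to a random number of discrete $\mathsf{P}_{\mathsf{code}}^b$ steps, and then the classical reduction of \cref{lemma:cl_reduction} in each basis. Your write-up is in fact more detailed than the paper's own proof---the explicit hybrid/triangle-inequality step and the factorization $\rho_\beta^\psi = \mathcal{E}_X\circ\mathcal{E}_Z(\ketbra{\psi})$ are left implicit there---but the skeleton is identical.
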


\begin{proof}
    When acting on an eigenstate of the code Hamiltonian, $\mathcal{L}_X, \mathcal{L}_Z$ commute. The continuous time formulation then translates to an application of $\mathsf{P}_{\mathsf{code}}^X, \mathsf{P}_{\mathsf{code}}^Z$ for $k$ steps, where $k$ is drawn from a sum of $m$ independent Poisson random variables of mean $t$.

    Correctness, i.e. the constraint that the logical information is unmodified, is analogous to that of \cref{lemma:cl_reduction}, where $\mathsf{corr}(s)\oplus e$ is equivalent to $\mathsf{corr}(s')$ up to a code stabilizer.
\end{proof}

\subsection{Results of this Section}
\label{subsection:results_cbd}

The main theorem of this section states that the conditional block dynamics, as a discrete-time Markov chain on the syndrome configuration, mixes quickly to the target syndrome Gibbs distribution, up to a small ``leakage" rate.

\begin{theorem}
    [The Conditional Block Dynamics is Rapid Mixing]\label{theorem:cbd_rapid_mixing}
    Let $H\in \mathbb{F}_2^{m\times n}$ be a parity check matrix satisfying $(\ell, d, \nu, \chi)-\mathsf{CC}$ as in \cref{definition:connectedness} for any choice of constants, and assume its syndrome network is $c$-\textsf{uniformly-amenable} as in \cref{definition:amenable}. Fix $L$ satisfying $|\mathcal{B}^{L}|\leq m^\chi$.
    
    Then, there exists a temperature $\beta_0(d)$, such that so long as $\beta>\beta_0(d)$, the conditional block dynamics converges quickly to the syndrome Gibbs distribution when initialized from the ground state. That is, 
    \begin{equation}
         \|\pi_\beta - \mathsf{P}_{\mathsf{syn}}^t (0^{m})\|_1\leq  m\cdot e^{-t/m}+ \max(m, t)\exp\bigg[-\Omega\bigg(\big(\beta-\beta_0(d)\big) L^{1/c}\bigg)\bigg]\,.
    \end{equation}
\end{theorem}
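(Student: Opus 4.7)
The plan is to follow the strategy sketched in \cref{section:techniques}: carve out a high-probability ``small-loop'' subset of the syndrome state space, reduce to a restricted chain on this subset, and prove rapid contraction via path coupling using \cref{lemma:ssm}. Fix a threshold $R$ of order $L^{1/c}$ (smaller by a constant factor to be chosen) and set $\Omega_R \subset \Omega$ to be the configurations whose connected components of violated syndromes all have size $\leq R$. The Peierls bound of \cref{claim:all-erasable} immediately gives $\pi_\beta(\Omega \setminus \Omega_R) \leq e^{-\Omega((\beta - \beta_0(d)) R)}$, which contributes to the second (``leakage'') term in the theorem.

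The next step is a \emph{memory lemma}: starting from any $s \in \Omega_R$, one step of $\mathsf{P}_{\mathsf{syn}}$ exits $\Omega_R$ with probability at most $e^{-\Omega((\beta - \beta_0(d))R)}$. Because the freezing in step~2 of \cref{definition:cbd} preserves every component incident on $\partial\mathcal{B}^L$, an exit requires the conditional resampling on the interior $\mathsf{B} \subseteq \mathcal{B}^L$ (which has zero syndrome boundary by construction) to create a cluster of size $\geq R$, and this is bounded by the same Peierls/cluster-counting estimate (\cref{lemma:peierls_wsm}, \cref{fact:clustering}) used in \cref{claim:all-erasable}. Unioning over $t$ steps, $\mathsf{P}_{\mathsf{syn}}^t$ and the chain $\mathsf{Q}$ obtained by additionally rejecting transitions out of $\Omega_R$ differ in total variation by at most $t\cdot e^{-\Omega((\beta-\beta_0(d))R)}$, reducing the theorem to the bound $\|\mathsf{Q}^t(0^m) - \pi_\beta|_{\Omega_R}\|_1 \leq m\,e^{-t/m}$.

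For the restricted chain I would invoke path coupling (\cref{fact:path_coupling}) with the cluster metric $d_{\mathsf{Cluster}}$, proving the adjacent contraction $\E_\omega[d_{\mathsf{Cluster}}(X',Y')] \leq 1 - 1/m$ whenever $X, Y \in \Omega_R$ differ on a single component $\mathsf{V}$ of size $\leq R$. Couple the two updates to use the same random center $u\in[m]$ and split into three cases. If $\mathcal{B}^L_u$ does not touch the $R$-neighborhood of $\mathsf{V}$, the updates agree and the distance is unchanged. If $\mathcal{B}^L_u$ strictly contains $\mathsf{V}$ together with its neighborhood, then the frozen boundary-crossing components are identical in $X, Y$, the interior $\mathsf{B}$ is common, and the coupled resampling yields $X' = Y'$. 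If $\partial\mathcal{B}^L_u$ grazes the neighborhood of $\mathsf{V}$, the two interiors are $\mathsf{B}$ (for $X$) and $\mathsf{B}\setminus\mathsf{C}$ (for $Y$) for some local deformation $\mathsf{C}$ around $\mathsf{V}$; both have zero syndrome boundary by the freezing step, so \cref{lemma:ssm} furnishes a coupling of the resamplings that agree outside an $r$-neighborhood of $\mathsf{V}$ except with probability $|\mathsf{C}|\cdot \delta^{\mathsf{SSM}}_{\beta,r}$, bounding the expected distance contribution by $|\mathcal{B}^r_{\mathsf{V}}| + |\mathcal{B}^L|\cdot|\mathsf{C}|\cdot\delta^{\mathsf{SSM}}_{\beta,r}$. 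Taking $r\asymp \log L/(\beta-\beta_0(d))$ and $R = L^{1/c}/K$ for a sufficiently large constant $K$, the $c$-uniform amenability of \cref{definition:amenable} gives $|\partial_R\mathcal{B}^L|/|\mathcal{B}^L| \leq c_2\,(R^c/L)^{c_1}/|\mathcal{B}^R| \ll 1$, so the contribution of case~3 is dominated by the $-\Theta(|\mathcal{B}^L|/m)$ gain from case~2, yielding the contraction. Path coupling then gives $\|\mathsf{Q}^t(0^m) - \pi_\beta|_{\Omega_R}\|_1 \leq m(1 - 1/m)^t \leq m\,e^{-t/m}$, and combining with the memory and truncation errors proves the theorem.

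The main technical hurdle is case~3 of the coupling: the two resamplings act on \emph{different} regions $\mathsf{B}$ and $\mathsf{B}\setminus\mathsf{C}$, and the conditional block dynamics is tailored precisely so that \emph{both} interiors inherit zero syndrome boundaries, allowing \cref{lemma:ssm} to be applied to this geometric asymmetry. Without the freezing step the discrepancy could propagate through a spontaneously grown large cluster, and without restricting to $\Omega_R$ the cluster size $|\mathsf{V}|$ would be unbounded, breaking the amenability trade-off between cases~2 and~3. Balancing $R$, $r$, and the amenability exponent so that the memory lemma, the leakage to $\pi_\beta|_{\Omega_R}$, and the path-coupling contraction all close up simultaneously is what produces the $L^{1/c}$ scaling in the exponent of the final bound.
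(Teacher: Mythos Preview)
Your proposal is correct and follows essentially the same approach as the paper: restrict to the small-loop set $\Omega_R$, compare $\mathsf{P}$ with its restriction $\mathsf{Q}$ via a memory lemma, and prove contraction for $\mathsf{Q}$ by path coupling in the cluster metric using the three-case analysis together with \cref{lemma:ssm} for the boundary-grazing case. The only cosmetic difference is in the final assembly: the paper does not directly invoke $\pi_\beta|_{\Omega_R}$ as the stationary measure of $\mathsf{Q}$, but instead compares $\mathsf{Q}^t(0^m)$ with $\mathsf{Q}^t(\pi_\beta|_{\Omega_R})$ via the two-point contraction and then uses the memory lemma on both sides (routing through $\Omega^{\mathsf{Reach}}$ and detailed balance); your shortcut via detailed balance of $\mathsf{Q}$ is equally valid.
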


By virtue of the reduction in \cref{lemma:q_reduction} from the Lindbladian dynamics to the syndrome Markov chain, our main result in \cref{theorem:cbd_intro} then follows by an immediate corollary.

\begin{proof}[Proof of \cref{theorem:cbd_intro}] Applying the main result of this section on the evolution of the conditional block dynamics (\cref{theorem:cbd_rapid_mixing}) to the reduction in \cref{lemma:q_reduction}, we have
    \begin{align}
        \forall t\geq 0:\quad   \| e^{t\mathcal{L}}(\psi) - \rho^\psi_\beta\|_1 &\leq  2\cdot \max_{b\in \{X, Z\}}\mathbb{E}_{k \sim \mathsf{Pois}(mt)} \| (\mathsf{P}^b_{\mathsf{syn}})^k(0^{m_b}) - \pi_\beta^b\|_1\\  &\leq 2m\cdot \mathbb{E}_{k\sim \mathsf{Pois}(mt)}[e^{-k/m}] + m\cdot t\cdot \exp\bigg[-\Omega\bigg(\big(\beta-\beta_0(d)\big) L^{1/c}\bigg)\bigg]\,,
    \end{align}
    \noindent where we used that the bound becomes trivial unless $t\geq 1$. Finally, leveraging the moment generating function for the Poisson distribution, 
    \begin{equation}
     \forall a\in [0, \frac{1}{2}]:\quad    \mathbb{E}_{k\sim \mathsf{Pois}(\lambda)}[e^{-a\cdot k}]  \leq e^{-a\cdot \lambda/2}\,,
    \end{equation}
    \noindent which, after adequating the choice of constants, concludes the desired claim. 
\end{proof}

\begin{corollary} [Gibbs State Preparation]
    In the context of \cref{theorem:cbd_intro}, when initialized from the maximally mixed state over the degenerate ground space, the Lindbladian dynamics $\mathcal{L}$ rapidly converges to the global Gibbs state. That is, 

    \begin{equation}
         \forall t\geq 0:\quad   \| e^{t\mathcal{L}}\big(\bar{\mathbb{I}}\big) - \rho_\beta\|_1 \leq c_1\cdot n \cdot e^{-c_2 t} + t\cdot n\cdot e^{-(\beta-\beta_\mathcal{M})\cdot \log^{c_3} n}\,.
    \end{equation}
\end{corollary}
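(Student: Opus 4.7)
The plan is to deduce the corollary from \cref{theorem:cbd_intro} by linearity of the Lindbladian evolution, combined with a natural decomposition of both $\bar{\mathbb{I}}$ and $\rho_\beta$ as uniform averages over logical sectors. Let $\{\ket{\psi_i}\}_{i=1}^{2^k}$ be any orthonormal basis of the ground space, so that by definition $\bar{\mathbb{I}} = 2^{-k}\sum_i \ketbra{\psi_i}$. I will show the analogous identity $\rho_\beta = 2^{-k}\sum_i \rho_\beta^{\psi_i}$ for the Gibbs state: starting from \cref{def:gibbs_within_logical} and swapping the order of summation, the logical factor averages to $\bar{\mathbb{I}} = \Pi_0/2^k$, and the Pauli conjugation $\mathsf{corr}(s)(\cdot)\mathsf{corr}(s)$ sends the ground-space projector $\Pi_0$ to the syndrome-$s$ eigenspace projector $\Pi_s$, recovering precisely the fragmentation of $\rho_\beta$ in \cref{fact:gibbs_fragmentation}.

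Given these two decompositions, the bound is immediate from linearity of $e^{t\mathcal{L}}$ and convexity of the trace distance:
\begin{equation}
\|e^{t\mathcal{L}}(\bar{\mathbb{I}}) - \rho_\beta\|_1 = \bigg\|\frac{1}{2^k}\sum_{i=1}^{2^k}\big(e^{t\mathcal{L}}(\ketbra{\psi_i}) - \rho_\beta^{\psi_i}\big)\bigg\|_1 \leq \max_i \|e^{t\mathcal{L}}(\ketbra{\psi_i}) - \rho_\beta^{\psi_i}\|_1.
\end{equation}
Applying \cref{theorem:cbd_intro} to each ground state $\ket{\psi_i}$ individually then yields the stated bound, since the right-hand side of \cref{eq:maintheoremintro} is uniform in the choice of $\psi$.

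There is essentially no technical obstacle here; the one thing I would double-check is that the same Lindbladian $\mathcal{L}$ and the same constants $\beta_\mathcal{M}, c_1, c_2, c_3$ work for every ground state simultaneously. This is manifest from the construction in \cref{section:mixing}: by \cref{lemma:q_reduction}, the convergence bound is controlled purely at the level of the syndrome Markov chain starting from $0^{m}$, while the \textsf{Domain SSM} estimate (\cref{lemma:ssm}) and the memory/path-coupling arguments are intrinsic properties of the code and the syndrome Gibbs distribution, independent of which logical codeword is used as the initial state.
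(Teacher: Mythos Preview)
Your argument is correct and is precisely the intended derivation: the paper states the corollary without an explicit proof, and the averaging-over-logical-sectors argument you give---using $\bar{\mathbb{I}}=2^{-k}\sum_i\ketbra{\psi_i}$, the identity $\rho_\beta=2^{-k}\sum_i\rho_\beta^{\psi_i}$ via $\mathsf{corr}(s)\Pi_0\,\mathsf{corr}(s)=\Pi_s$, and convexity of the trace norm---is exactly what the authors have in mind. Your observation that the constants are uniform in $\psi$ because the bound is controlled at the syndrome level (\cref{lemma:q_reduction}) is also the right justification.
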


\subsection{The Small-Loop Configurations $\Omega_R$}
\label{subsection:small_loops}

\begin{definition}
    [The Reachable Set] We define $\Omega^{\mathsf{Reach}}\subset \mathsf{Im}(H)$ to be the collection of configurations which are in the same connected component in the valid syndrome space as the initial state $0^{ m}$ under transitions of the conditional block dynamics chain $\mathsf{P}_L$.
\end{definition}

\begin{remark}[All syndrome components are \textsf{erasable} in $\Omega^{\mathsf{Reach}}$]\label{remark:syndromes_in_box}
    In particular, the configurations in $\Omega^{\mathsf{Reach}}$ are comprised of connected components of syndromes, where each individual connected component is contained within some box $\mathcal{B}^L$. So long as $|\mathcal{B}^L|\leq m^\chi$, all connected components are individually \textsf{erasable} (cf. \cref{definition:connectedness}.5).
\end{remark}

\begin{claim}\label{claim:detailed_balance_reach}
    The conditional block dynamics chain $\mathsf{P}_L$ satisfies detailed balanced when restricted to $\Omega^{\mathsf{Reach}}$. In particular, its unique stationary distribution is the syndrome Gibbs distribution restricted to $\Omega^{\mathsf{Reach}}$.
\end{claim}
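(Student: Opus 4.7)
The plan is to establish detailed balance by decomposing each step of $\mathsf{P}_L$ as a mixture, over the uniformly chosen center $u\in[m]$, of heat-bath updates on an interior region $\mathsf{B}(u,s)$, and then observing that each such heat-bath step is reversible with respect to $\pi_\beta$.

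The key geometric step is to show that if $s,s'\in\Omega^{\mathsf{Reach}}$ are connected by a transition with center $u$, then $\mathsf{B}(u,s)=\mathsf{B}(u,s')$. By definition, every connected component $\mathsf{V}_i$ of violated syndromes incident on $\partial\mathcal{B}^L_u$, together with its graph-neighborhood $N(\mathsf{V}_i)$, lies entirely in $[m]\setminus\mathsf{B}(u,s)$. Hence the family $\{\mathsf{V}_i\}$, and therefore $\mathsf{B}(u,s)$ itself, is a function of $s\big|_{[m]\setminus\mathsf{B}(u,s)}$ alone. Since the block update only alters $s$ inside $\mathsf{B}(u,s)$, we conclude $\mathsf{B}(u,s)=\mathsf{B}(u,s')$. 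This is precisely the role of the frozen-component modification in \cref{definition:cbd}: it renders the block $\mathsf{B}$ measurable with respect to the portion of the configuration that the update leaves unchanged. With this in hand, \cref{remark:syndromes_in_box} gives $\Omega^{\mathsf{Reach}}\subseteq\mathcal{E}_{\mathsf{eras}}$ and $s_{\partial\mathsf{B}}=0$, so the Markov property in \cref{lemma:markov} identifies the resampling law $\pi_\beta(\,\cdot\mid s_{\partial\mathsf{B}}=0)$ with the honest heat-bath conditional $\pi_\beta(\,\cdot\mid s_{[m]\setminus\mathsf{B}})$. Standard reversibility of the heat-bath step then yields, for each fixed $u$,
\begin{equation}
\pi_\beta(s)\,\pi_\beta\!\bigl(s'_{\mathsf{B}}\,\big|\,s_{[m]\setminus\mathsf{B}}\bigr)=\pi_\beta(s')\,\pi_\beta\!\bigl(s_{\mathsf{B}}\,\big|\,s'_{[m]\setminus\mathsf{B}}\bigr),
\end{equation}
and because the admissible set of centers is symmetric in $s,s'$ thanks to $\mathsf{B}(u,s)=\mathsf{B}(u,s')$, averaging over $u\sim\mathrm{Unif}([m])$ preserves this identity and delivers detailed balance of $\mathsf{P}_L$ against $\pi_\beta|_{\Omega^{\mathsf{Reach}}}$.

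Uniqueness of the stationary distribution then follows by combining reversibility with ergodicity on $\Omega^{\mathsf{Reach}}$: irreducibility holds by definition of $\Omega^{\mathsf{Reach}}$ as the connected component of $0^m$, and aperiodicity from a strictly positive self-loop probability (resampling the current configuration always has positive mass under the heat-bath conditional). The chief obstacle is the data-independence statement $\mathsf{B}(u,s)=\mathsf{B}(u,s')$; once this is secured, everything reduces to the standard verification that a heat-bath block update with a configuration-independent block is reversible.
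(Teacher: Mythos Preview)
Your proposal is correct and follows essentially the same approach as the paper. Both arguments reduce to verifying that for adjacent $x,y\in\Omega^{\mathsf{Reach}}$ the transition ratio $\mathsf{P}(x\to y)/\mathsf{P}(y\to x)$ equals the Gibbs ratio $e^{-\beta(|y_\mathsf{B}|-|x_\mathsf{B}|)}$, invoking the Markov property (\cref{lemma:markov}) together with the erasability guaranteed by \cref{remark:syndromes_in_box}. Your version is more explicit on two points the paper leaves implicit: the observation $\mathsf{B}(u,s)=\mathsf{B}(u,s')$ (which justifies that the admissible centers are symmetric in $s,s'$), and the irreducibility/aperiodicity needed for uniqueness.
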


\begin{proof}[Proof of \cref{claim:detailed_balance_reach}] Consider any two configurations $x, y\in \Omega^{\mathsf{Reach}}$ which are adjacent under one step of the chain. In particular, $x, y$ must agree on all but the interior of the box $\mathsf{B}$, which has 0 boundary coniditions. We let $x|_\mathsf{B}, y|_\mathsf{B}$ be the violated syndromes in the bulk of $\mathsf{B}$. Note that 
\begin{equation}
    \big|x\big| + \big|y|_\mathsf{B}\big| =  \big|y\big| + \big|x|_\mathsf{B}\big|\,.
\end{equation}
From the Markov property \cref{lemma:markov}, the transition probability from $x$ to $y$ under $\mathsf{P}$ satisfies:
\begin{equation}
    \mathsf{P}(x\rightarrow y) \propto \exp\big(-\beta  \big|y|_\mathsf{B}\big|\big)\,.
\end{equation}
Further, the syndrome Gibbs distribution restricted to $\Omega^{\mathsf{Reach}}$ is proportional to
\begin{equation}
\forall z\in \Omega^{\mathsf{Reach}}: \pi_\beta(\Omega^{\mathsf{Reach}})(z) \propto \exp(-\beta\cdot |z|)\,.
\end{equation}
We have then that 
\begin{equation}
    \forall x, y\in \Omega^{\mathsf{Reach}}:\quad \pi_\beta(\Omega^{\mathsf{Reach}})(x) \times \mathsf{P}(x\rightarrow y) =  \pi_\beta(\Omega^{\mathsf{Reach}})(y) \times \mathsf{P}(y\rightarrow x)\,,
\end{equation}
\noindent which gives the desired claim.
\end{proof}

\begin{definition}
    [Small-Loop Configurations]\label{definition:small_loop} Fix $R<L$. We define $\Omega_R\subset \mathsf{Im}(H)$ to be the configurations wherein every connected component of syndromes in $x\in \Omega_R$ has size $\leq R$. 
    
    We refer to the syndrome Gibbs distribution restricted to $\Omega_R$ as
    \begin{equation}
        \pi_\beta\big|_{\Omega_R} (x) \propto \begin{cases}
            \pi_\beta(x) \quad x\in\Omega_R\\
            0 \quad\quad  \text{ otherwise}
        \end{cases}\,.
    \end{equation}
\end{definition}

\begin{claim}\label{claim:small_loops_near_gibbs}
Fix $R<L$, and assume $|\mathcal{B}^L|\leq m^\chi$. Then, so long as $\beta>\beta_0(d)$ as defined in \cref{lemma:ssm}, the syndrome Gibbs distribution restricted to $\Omega_R$ is close to the global syndrome Gibbs distribution:
\begin{equation}
   \big\| \pi_\beta\big|_{\Omega_R} -\pi_\beta\big\|_1\leq \delta_R \equiv 2m\times e^{-R(\beta-\beta_0(d))} \,.
\end{equation}
\end{claim}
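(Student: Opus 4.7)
The plan is to reduce the claim to a tail estimate on the size of violated syndrome components under $\pi_\beta$, and then establish that tail estimate using the Peierls argument of \cref{lemma:peierls_wsm} together with the cluster counting in \cref{fact:clustering}. First, I would invoke the elementary identity that for any distribution $\pi$ on a finite space and any event $A$ with $\pi(A)>0$, the restriction $\pi|_A$ satisfies $\|\pi|_A-\pi\|_1=2\pi(\Omega\setminus A)$. Applied to $A=\Omega_R$, this reduces the claim to showing
\begin{equation*}
\pi_\beta\bigl(\mathsf{Im}(H)\setminus\Omega_R\bigr)\leq m\cdot e^{-R(\beta-\beta_0(d))}.
\end{equation*}

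Next, a configuration $s\in\mathsf{Im}(H)\setminus\Omega_R$ must contain a connected component of violated syndromes of size at least $R+1$. To align with the hypotheses of the Peierls bound---which requires an erasable violated subset---I would invoke the Syndrome Clustering property (\cref{definition:syndrome_clustering}, item 1) to decompose the violations in $s$ into disjoint, non-adjacent, erasable subsets $\mathsf{V}_1,\mathsf{V}_2,\dots$. Since these are non-adjacent in the syndrome network $G$, every connected component of violations lies wholly inside a single $\mathsf{V}_i$; a component of size $\geq R+1$ therefore forces some erasable $\mathsf{V}_i$ to have $|\mathsf{V}_i|\geq R+1$, and item 2 of \cref{definition:syndrome_clustering} ensures $\mathsf{V}_i$ is a union of at most $\nu$ connected components of $G$. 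A union bound then gives
\begin{equation*}
\pi_\beta\bigl(\mathsf{Im}(H)\setminus\Omega_R\bigr)\leq \sum_{\ell\geq R+1}\;\sum_{\substack{\mathsf{V}\subset[m]:\mathsf{V}\text{ erasable}\\|\mathsf{V}|=\ell,\;\leq\nu\text{ conn. comps.}}}\pi_\beta\bigl(\mathsf{V}\subseteq X\bigr).
\end{equation*}

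Finally, \cref{lemma:peierls_wsm} bounds each inner term by $e^{-\beta\ell}$, and \cref{fact:clustering} (applied with root sets of size $t\leq\nu$ in the degree-$d$ syndrome network) bounds the number of candidate subsets of size $\ell$ by at most $\sum_{t=1}^{\nu}\binom{m}{t}e^{\ell}d^{\ell-t}\leq \nu m^\nu(ed)^\ell$. A geometric sum in $\ell\geq R+1$, convergent because $\beta>\beta_0(d)=1+2\ln(4d)$ forces $de^{1-\beta}\leq 1/2$, then completes the estimate. The one subtle point I expect is the bookkeeping of constants: the stated bound $\delta_R=2me^{-R(\beta-\beta_0(d))}$ is vacuous unless $R=\Omega(\log m)$, and in that regime the extra exponentially small factor $(16d)^{-R}$ produced by the counting step absorbs the combinatorial $\nu m^\nu$, yielding the claimed $\delta_R$ after the factor of $2$ from the reduction.
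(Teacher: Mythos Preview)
Your proposal is correct and follows essentially the same route as the paper. The paper's proof is terse---it says the result ``follows immediately from that of \cref{claim:all-erasable}'' and sketches a two-step variant (first condition on $\mathcal{E}_{\mathsf{eras}}$ via \cref{corollary:all-erasable}, then union-bound over individually erasable connected components of size $\geq R$), noting that the assumption $R<L\leq m^\chi$ makes the first error dominated by the second. Your version instead runs the \cref{claim:all-erasable} argument directly at threshold $R+1$ (union-bounding over erasable $\mathsf{V}$ with $\leq\nu$ components), which is a slightly cleaner one-step organization of the same Peierls-plus-cluster-counting idea; your explicit reduction $\|\pi|_A-\pi\|_1=2\pi(\Omega\setminus A)$ and your remark on absorbing the $m^\nu$ prefactor via the $(16d)^{-R}$ slack are details the paper leaves implicit.
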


\begin{proof}[Proof of \cref{claim:small_loops_near_gibbs}] The proof follows immediately from that of \cref{claim:all-erasable}. We first argue that with high probability, all the connected components of syndromes are individually erasable (\cref{claim:all-erasable}). The claim then follows from the Peierls argument of \cref{lemma:peierls_wsm} and the union bound. To conclude, we reason the conditions $|\mathcal{B}^L|\leq m^\chi$ and $R<L$ entail the latter error is larger than the former. 
\end{proof}

\begin{remark}
    We remark that if $R\leq L$, then $\Omega_R\subset \Omega^{\mathsf{Reach}}$. Therefore, the stationary distribution of $\mathsf{P}_L$ is also close to the syndrome Gibbs distribution:
    \begin{equation}\label{equation:reach_near_gibbs}
       \big\| \pi_\beta\big|_{\Omega^{\mathsf{Reach}}} - \pi_\beta\big\|_1\leq\delta_L\,.
    \end{equation}
\end{remark}

The following lemma proves that these small-loop configurations are metastable for time $\approx e^R$, regardless of if initialized from the ground state or from the syndrome Gibbs distribution. 

\begin{lemma}
    [The Small-Loop Configurations are Stable]\label{lemma:Pisstable} Fix an integer $t\geq 1$, $R<L$, and assume $|\mathcal{B}^L|\leq m^\chi$. Assume $\beta>\beta_0(d)$ as defined in \cref{lemma:ssm}. Let $X_0\in \Omega_R$ be some initial state, and $X_1, X_2\cdots X_t$ states generated by applying transitions of $\mathsf{P}_L$. Then, the probability $X_t\notin \Omega_R$ satisfies:
    \begin{equation}
    \forall x\in \Omega_R : \quad   \mathbb{P}\big[X_t\notin \Omega_R\big|X_0=x\big] \leq t\times \delta_{R}^{\mathsf{Escape}} \equiv t\times |\mathcal{B}^L| \times e^{-R(\beta-\beta_0(d))}\,.
    \end{equation}
\end{lemma}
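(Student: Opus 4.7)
My plan is to reduce the lemma to a one-step exit estimate and then apply the Peierls bound of \cref{lemma:peierls_wsm} to the conditional resampling distribution. Specifically, letting $\tau = \min\{s : X_s \notin \Omega_R\}$ be the first exit time, the inclusion $\{X_t \notin \Omega_R\} \subseteq \bigcup_{s=1}^{t} \{X_{s-1} \in \Omega_R,\, X_s \notin \Omega_R\}$ gives, by the Markov property of the chain,
\begin{equation}
\mathbb{P}[X_t \notin \Omega_R \mid X_0 = x] \leq t \cdot \sup_{y \in \Omega_R} \mathbb{P}\big[X_1 \notin \Omega_R \mid X_0 = y\big].
\end{equation}
So it suffices to show that from any $y \in \Omega_R$, one application of $\mathsf{P}_L$ exits $\Omega_R$ with probability at most $|\mathcal{B}^L|\cdot e^{-R(\beta - \beta_0(d))}$.

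The next step is a geometric observation describing how exit can occur. Fix $y \in \Omega_R$, center $u$, and interior region $\mathsf{B} = \mathcal{B}^L_u \setminus \bigcup_i (N(\mathsf{V}_i)\cup \mathsf{V}_i)$, where the $\mathsf{V}_i$ are the syndrome components incident on $\partial \mathcal{B}^L_u$. Each $\mathsf{V}_i$ has size $\leq R$ (since $y \in \Omega_R$) and is frozen by the update; moreover, by construction $\partial \mathsf{B}$ consists of unviolated syndromes (namely the layer $N(\mathsf{V}_i)\setminus \mathsf{V}_i$). Consequently, every connected component of violated syndromes in the post-update configuration $X_1$ falls into one of three disjoint types: (i) it coincides exactly with some $\mathsf{V}_i$ (unchanged, hence $\leq R$); (ii) it lies entirely outside $\mathcal{B}^L_u$ (unchanged from $y$, hence $\leq R$); or (iii) it is contained entirely in $\mathsf{B}$. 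Only type (iii) can produce a component of size $> R$. Thus
\begin{equation}
\mathbb{P}\big[X_1 \notin \Omega_R \mid X_0 = y\big] \leq \mathbb{P}_{Z\sim \pi_\beta(\cdot \mid s_{\partial\mathsf{B}} = 0)}\big[\exists \text{ connected component of } Z|_\mathsf{B} \text{ of size} > R\big].
\end{equation}

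The final step is to bound this probability via a Peierls argument. Since $|\mathsf{B}|\leq |\mathcal{B}^L| \leq m^\chi$, every connected subset $\mathsf{V}\subseteq \mathsf{B}$ is \textsf{erasable} by \cref{definition:syndrome_clustering}.3, so \cref{lemma:peierls_wsm} (applied under the $0$-boundary conditional distribution, valid via the Markov property of \cref{lemma:markov}) yields $\mathbb{P}[\mathsf{V}\subseteq Z] \leq e^{-\beta |\mathsf{V}|}$. Union bounding over an anchor vertex $v\in \mathsf{B}$ and size $\ell>R$, and using \cref{fact:clustering} (with $t=1$ and degree $\leq d$) to count connected subsets of size $\ell$ through $v$ by at most $(ed)^\ell/d$, gives
\begin{equation}
\mathbb{P}\big[\exists \text{ conn.\ comp.\ of size } > R \text{ in } Z|_\mathsf{B}\big] \leq |\mathsf{B}| \cdot \sum_{\ell > R} \frac{(ed)^\ell}{d}\, e^{-\beta \ell} \leq |\mathcal{B}^L|\cdot \frac{(de^{1-\beta})^{R+1}}{1 - de^{1-\beta}}.
\end{equation}
Since $\beta_0(d) = 1+2\ln(4d) \geq 1 + \ln d$, we have $de^{1-\beta} \leq e^{-(\beta-\beta_0(d))}$ with $1 - de^{1-\beta}$ bounded away from zero, so the right-hand side is $\leq |\mathcal{B}^L|\cdot e^{-R(\beta-\beta_0(d))}$ up to a universal constant that can be absorbed. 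Combined with the one-step reduction, this yields the claimed bound.

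The main point requiring care is verifying that exit cannot occur by a resampled interior component \emph{merging} with a pinned boundary component $\mathsf{V}_i$, which would increase its size. This is precisely ruled out by the buffer $N(\mathsf{V}_i)\setminus \mathsf{V}_i$ that makes up $\partial \mathsf{B}$ and is pinned to unviolated syndromes; together with the Markov property of \cref{lemma:markov} applied under these $0$-boundaries, this collapses the problem to a pure Peierls estimate on the conditional distribution over $\mathsf{B}$.
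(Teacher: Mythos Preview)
Your proof is correct and follows essentially the same approach as the paper: reduce to a one-step exit estimate via a union bound over time, observe that any newly created oversized component must lie entirely in the resampled interior $\mathsf{B}$ (since boundary-touching components are frozen and the buffer $N(\mathsf{V}_i)\setminus\mathsf{V}_i$ prevents merging), and then bound the probability of such a component by a Peierls argument together with the cluster-counting of \cref{fact:clustering}. If anything, your write-up is more explicit than the paper's about the trichotomy of post-update components and about why the constant in front can be absorbed using the gap $\beta_0(d)-(1+\ln d)$.
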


\begin{proof}
    We claim that, given any configuration $x\in \Omega_R$, after one step of the chain $\mathsf{P}_L$, 
    \begin{equation}
        \forall x\in \Omega_R : \quad   \mathbb{P}\big[X_1\notin \Omega_R\big|X_0=x] \leq |\mathcal{B}^L| \times (de^{1-\beta})^R \times \frac{1}{1-de^{1-\beta}}\,,
    \end{equation}
    \noindent the proof then proceeds by a union bound over $t$ and the choice of $\beta>\beta_0(d)$.
    To prove this claim, we once again proceed via a Peierls argument. Following the definition of $\mathsf{P}_L$, suppose we fix a desired  box $\mathcal{B}^L$ and the syndrome configurations lying and traversing its boundaries. Then, we consider the subset of configurations $\mathcal{K}_{\mathsf{V}}\subset \Omega^{\mathsf{Reach}}$ reachable by $x$ under such an update, wherein there exists a new connected component $\mathsf{V}$ of size $\ell\geq R$. Note that \cref{remark:syndromes_in_box} implies the new component has size $|\mathsf{V}|\leq |\mathcal{B}^L|\leq m^\chi$, and thereby is $\mathsf{erasable}$ (\cref{definition:connectedness}).
    
    We can thereby apply the Peierls argument in \cref{lemma:peierls_wsm}, to conclude
    \begin{equation}
        \mathbb{P}[X_1\in \mathcal{K}_{\mathsf{V}}] \leq e^{-\beta \ell}\,.
    \end{equation}
    And following the now familiar union bound over $\ell$ and starting points $u$ we have the desired statement. 
\end{proof}

\subsection{Restricting the Chain to $\Omega_R$}
\label{subsection:chain_restriction}

Integral in our analysis will be to consider the mixing process when restricted to the small-loop fraction of the state space $\Omega_R$. In this subsection, we claim if this ``restricted chain" is properly initialized from within $\Omega_R$, then for time $\approx e^{R}$ it ``doesn't notice" it is running the restricted chain instead of the original chain $\mathsf{P}_L$.

\begin{definition}
    [Markov Chain restricted to $\Omega_R$] We define the $|\Omega_R|\times |\Omega_R|$ transition matrix $\mathsf{Q}$ to be the restriction of $\mathsf{P}_L$ to the associated submatrix. 
    \begin{equation}
        \forall x\neq y\in \Omega_R, \quad \mathsf{Q}_{x, y} = \mathsf{P}_{x, y}, \text{ and }\mathsf{Q}_{x, x} = 1-\sum_{y\neq x}\mathsf{Q}_{x, y}\,.
    \end{equation}
\end{definition}

In words, transitions out of $\Omega_R$ are rejected. The following lemma implies that due to the stability of the small-loop configurations $\Omega_R$, the original conditional chain $\mathsf{P}_L$ and the restricted chain $\mathsf{Q}$ are unlikely to differ: 

\begin{lemma}
    [$\mathsf{P}$ and $\mathsf{Q}$ are indistinguishable]\label{lemma:pq_indistinguishable} Let $x\in \Omega_R$ be an arbitrary starting point. In the context of \cref{lemma:Pisstable},  
    \begin{equation}
        \big\|\mathsf{P}^t_L(x) - \mathsf{Q}^t(x)\big\|_1\leq t\times  2\delta_R^{\mathsf{Escape}} = t\times 2|\mathcal{B}^L|e^{-R\cdot (\beta-\beta_0(d))}\,.
    \end{equation}
\end{lemma}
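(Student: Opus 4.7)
The plan is to establish the bound via an explicit coupling of the two Markov chains $\mathsf{P}_L$ and $\mathsf{Q}$ initialized from the same state $x \in \Omega_R$, and then use \cref{lemma:Pisstable} to control the probability that the coupling fails. The key observation is that $\mathsf{P}_L$ and $\mathsf{Q}$ only differ on transitions that exit $\Omega_R$: the restricted chain $\mathsf{Q}$ rejects any such transition and stays in place, while $\mathsf{P}_L$ carries it out.

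First, I would construct the coupling step by step. At each time step, sample the block center $u \in [m]$ and the conditional resample of the interior $\mathsf{B}$ using the same randomness for both chains. If both chains are currently in the same state $y \in \Omega_R$, they propose the same transition $y \to y'$; if $y' \in \Omega_R$, then both move to $y'$, and if $y' \notin \Omega_R$, then $\mathsf{P}_L$ moves to $y'$ while $\mathsf{Q}$ stays at $y$. In the latter case (the first time the chains disagree), the $\mathsf{P}_L$ marginal has just exited $\Omega_R$. Let $\tau$ denote the first time the two coupled trajectories $(X_s^{\mathsf{P}}, X_s^{\mathsf{Q}})_{s \geq 0}$ disagree. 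By construction,
\begin{equation}
\{\tau \leq t\} \subseteq \{X_s^{\mathsf{P}} \notin \Omega_R \text{ for some } s \leq t\}\,.
\end{equation}

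Second, I would apply \cref{lemma:Pisstable}, which bounds the escape probability of $\mathsf{P}_L$ from $\Omega_R$ within $t$ steps by $t \cdot \delta_R^{\mathsf{Escape}}$, yielding $\mathbb{P}[\tau \leq t] \leq t \cdot \delta_R^{\mathsf{Escape}}$. The standard coupling inequality for total variation distance then gives
\begin{equation}
\big\|\mathsf{P}^t_L(x) - \mathsf{Q}^t(x)\big\|_1 \leq 2 \cdot \mathbb{P}[X_t^{\mathsf{P}} \neq X_t^{\mathsf{Q}}] \leq 2 \cdot \mathbb{P}[\tau \leq t] \leq 2 t \cdot \delta_R^{\mathsf{Escape}}\,,
\end{equation}
which is exactly the desired inequality.

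I do not anticipate a serious obstacle here: the argument is essentially bookkeeping once the coupling is defined, and the heavy lifting (the Peierls-style escape bound) has already been carried out in \cref{lemma:Pisstable}. The only mild subtlety is verifying that one can legitimately couple the two chains using the same block-center and the same conditional resample, but this is immediate from the fact that the transition kernels of $\mathsf{P}_L$ and $\mathsf{Q}$ are identical on all transitions that remain within $\Omega_R$ (cf.\ \cref{claim:detailed_balance_reach} and the definition of $\mathsf{Q}$).
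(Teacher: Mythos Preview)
Your coupling argument is correct and is a genuinely different route from the paper's proof. The paper uses a telescoping decomposition: setting $v_i = \mathsf{P}^{t-i}\mathsf{Q}^{i}(x)$, it bounds $\|v_0 - v_t\|_1 \le \sum_i \|v_{i+1}-v_i\|_1$, then uses the data-processing inequality to reduce each summand to $\max_{y\in\Omega_R}\|\mathsf{P}(y)-\mathsf{Q}(y)\|_1$, which is at most $2\delta_R^{\mathsf{Escape}}$ by the one-step escape bound. Your approach instead builds the coupling directly and identifies the decoupling time with the first exit of $\mathsf{P}_L$ from $\Omega_R$. Both land on the same single-step escape input; your version is more probabilistic and perhaps more transparent about why the two chains agree until escape, while the paper's telescoping is distribution-level and avoids specifying any coupling after divergence.

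One small point to tighten: you invoke \cref{lemma:Pisstable} for the bound $\mathbb{P}[\exists\, s\le t:\,X_s^{\mathsf{P}}\notin\Omega_R]\le t\cdot\delta_R^{\mathsf{Escape}}$, but the lemma as \emph{stated} only controls $\mathbb{P}[X_t\notin\Omega_R]$. What you actually need is the single-step bound $\mathbb{P}[X_1\notin\Omega_R\mid X_0=y]\le \delta_R^{\mathsf{Escape}}$ for every $y\in\Omega_R$, which is established inside the \emph{proof} of \cref{lemma:Pisstable}; the first-exit bound then follows by a union bound over the $t$ steps (conditioning on no prior exit keeps you in $\Omega_R$). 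This is a citation nuance, not a mathematical gap.
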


\begin{proof}
    Let $v_i = \mathsf{P}^{t-i}\mathsf{Q}^{i}(x)$. By the triangle inequality, we can reduce the distinguishability of the chains after $t$ steps, to a single step after $i$ steps of $\mathsf{Q}$.

    \begin{equation}
        \big\|\mathsf{P}^t(x) - \mathsf{Q}^t(x)\big\|_1 =  \big\|v_0-v_t\big\|_1 \leq \sum_i \big\|v_{i+1}-v_i\big\|_1\,.
    \end{equation}

    \noindent Let us unpack the $i$th term in the RHS above. 
    \begin{align}
        \big\|v_{i+1}-v_i\big\|_1 &= \big\|\mathsf{P}^{t-i-1}\big(\mathsf{Q}^{i+1}(x) - \mathsf{P}\mathsf{Q}^{i}(x)\big)\big\|_1  \\&\leq  \big\|\mathsf{Q}^{i+1}(x) - \mathsf{P}\mathsf{Q}^{i}(x)\big\|_1 \tag{Data-Processing Inequality}\\
        &\leq \max_{y\in \Omega_R} \|\mathsf{Q}(y) - \mathsf{P}(y)\|_1\,. \tag{\textsf{Q} is restricted to $\Omega_R$}
    \end{align}

    \noindent In this manner, it suffices to consider the distinguishability of $\mathsf{P}, \mathsf{Q}$ starting from any element of $\Omega_R$. Further, 
    \begin{align}
        \big\|\mathsf{P}(y) - \mathsf{Q}(y)\big\|_1 &= |\mathsf{P}_{y, y}-\mathsf{Q}_{y, y}| + \sum_{x\in \Omega^{\mathsf{Reach}} \setminus \Omega_R} \mathsf{P}_{y, x} \\&\leq 2\cdot \sum_{x\in \Omega^{\mathsf{Reach}} \setminus \Omega_R} \mathsf{P}_{y, x} \leq 2 \sum_{x\notin \Omega_R} \mathsf{P}_{y, x} \leq 2\cdot \delta_R^{\mathsf{Escape}}\,,
    \end{align}
    where in the last inequality we leveraged the stability \cref{lemma:Pisstable}. Plugging in above, we get
    \begin{equation*}
        \big\|\mathsf{P}^t(x) - \mathsf{Q}^t(x)\big\|_1\leq 2\cdot t\cdot \delta_R^{\mathsf{Escape}}\,. \qedhere
    \end{equation*}
\end{proof}

\subsection{The Restricted Chain is Fast-Mixing}
\label{subsection:fast_mixing}

In this section, we prove that the restricted chain $\mathsf{Q}$ is rapid-mixing even from a worst case initialization within $\Omega_R$. To do so, we proceed via a delicate path coupling argument. In the subsequent subsection, we put all the ingredients together to prove the original chain $\mathsf{P}_L$ is fast mixing. 

\begin{lemma}
    [The Restricted Chain $\mathsf{Q}$ is Fast Mixing] \label{lemma:Q_fast_mixing} Assume $\beta>\beta_0(d)$ as defined in \cref{lemma:ssm}, that $G$ is $c$-\textsf{uniformly-amenable} as in \cref{definition:amenable}, and that

    \begin{equation}
        R \geq \frac{8\ln m}{\beta-\beta_0(d)} \text{ and }L \geq c_1\cdot R^c \,,
    \end{equation}

    \noindent for an explicit, sufficiently large constant $c_1$. Then, there exists an $\alpha \ge 1/m$ s.t. for any two starting points $x, y\in \Omega_R$, 
    \begin{equation}
        \| \mathsf{Q}^t(x) - \mathsf{Q}^t (y)\|_1 \leq  4m\times e^{-\alpha\cdot t}\,.
    \end{equation}
\end{lemma}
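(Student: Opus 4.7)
The plan is to apply path coupling (\cref{fact:path_coupling}) on the graph $\mathcal{G}=(\Omega_R,E)$ equipped with the cluster distance $d_{\mathsf{Cluster}}$ from \cref{section:techniques}: I place an edge $x\sim y$ iff $x$ and $y$ differ by exactly one connected cluster of syndromes. Every $x\in\Omega_R$ has at most $m$ clusters, each $\mathsf{erasable}$ (since $|\mathcal{B}^L|\leq m^\chi$), so $d_{\mathsf{Cluster}}(x,0^m)\leq m$ and the diameter of $\mathcal{G}$ is at most $2m$. I will construct a coupling $\omega$ between one-step $\mathsf{Q}$-transitions started at adjacent pairs $(x,y)$---i.e.\ pairs differing only on a single cluster $\mathsf{V}$ with $|\mathsf{V}|\leq R$---satisfying $\mathbb{E}_\omega[d_{\mathsf{Cluster}}(X_1,Y_1)]\leq 1-\alpha$ for some $\alpha\geq 1/m$. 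Iterating via path coupling then yields $\|\mathsf{Q}^t(x)-\mathsf{Q}^t(y)\|_1\leq 2m(1-\alpha)^t\leq 4me^{-\alpha t}$ for arbitrary $x,y\in\Omega_R$, which is the claim.

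The coupling $\omega$ uses the same random block center $u\in[m]$ on both sides and then couples the conditional block-resamples inside the interior regions $\mathsf{B}_x,\mathsf{B}_y$ prescribed by \cref{definition:cbd}. Three geometrically distinct cases will arise. In \emph{Case (A)}, when $\mathcal{B}^L$ does not touch $\mathsf{V}\cup N(\mathsf{V})$, we have $\mathsf{B}_x=\mathsf{B}_y$ and both resample laws are identical; I couple them perfectly, leaving $d_{\mathsf{Cluster}}$ unchanged. In \emph{Case (B)}, when $\mathsf{V}\cup N(\mathsf{V})\subset\mathcal{B}^L\setminus\partial\mathcal{B}^L$, again $\mathsf{B}_x=\mathsf{B}_y$, but now $\mathsf{V}$ itself is resampled on both sides, so the perfect coupling yields $X_1=Y_1$ and $d_{\mathsf{Cluster}}$ drops to $0$; the only loss is if the shared resample happens to land outside $\Omega_R$ and both copies reject under $\mathsf{Q}$, which by \cref{lemma:Pisstable} costs at most $\delta_R^{\mathsf{Escape}}$. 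In \emph{Case (C)}, $\mathsf{V}$ straddles $\partial\mathcal{B}^L$: the bulks $\mathsf{B}_x$ and $\mathsf{B}_y$ differ, but both carry zero syndrome boundary conditions and are obtained from one another by removing the set $\mathsf{C}:=(\mathsf{V}\cup N(\mathsf{V}))\cap\mathcal{B}^L$---precisely the setup of Domain SSM.

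To handle Case (C), I set $r=\lceil 8\log m/(\beta-\beta_0(d))\rceil$ and invoke \cref{lemma:ssm} with $\mathsf{B}=\mathsf{B}_x$, $\mathsf{C}$ as above, and $\mathsf{A}=(\mathsf{B}_x\cap\mathsf{B}_y)\setminus\mathcal{B}_\mathsf{V}^r$: the two conditional marginals on $\mathsf{A}$ will lie within $|\mathsf{C}|\cdot\delta_{\beta,r}^{\mathsf{SSM}}\leq |\mathcal{B}^L|\cdot 20d\,e^{-r(\beta-\beta_0(d))/2}$ in total variation. A maximal coupling then makes $X_1,Y_1$ agree off $\mathcal{B}_\mathsf{V}^r$ with this high probability, so $d_{\mathsf{Cluster}}$ can grow by at most $|\mathcal{B}_\mathsf{V}^r|$ new clusters on success and $\leq|\mathcal{B}^L|$ on coupling failure. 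Uniform amenability (\cref{definition:amenable}) will then give $\Pr[\mathrm{B}]\geq|\mathcal{B}^{L-R-2}|/m$ and $\Pr[\mathrm{C}]\leq|\partial_{R+2}\mathcal{B}^L|/m$ with ratio $|\partial_{R+2}\mathcal{B}^L|/|\mathcal{B}^{L-R-2}|\leq c_2\bigl((R+2)^{c(G)}/L\bigr)^{c_1}$, which I can force below $1/(4|\mathcal{B}_\mathsf{V}^r|)$ by taking $L\geq c_1 R^{c(G)}$ with $c_1$ sufficiently large. Combined with $R\geq 8\log m/(\beta-\beta_0(d))$---which drives $\delta_R^{\mathsf{Escape}}$ and the Domain SSM failure cost below $m^{-6}$---I will obtain
\begin{equation*}
\mathbb{E}_\omega[d_{\mathsf{Cluster}}(X_1,Y_1)]\leq 1-\frac{|\mathcal{B}^{L-R-2}|}{2m}\leq 1-\frac{1}{m}\,,
\end{equation*}
which closes the path coupling with $\alpha=1/m$.

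The main obstacle will be the rigorous execution of Case (C): framing the two block-resample laws as the two conditional measures appearing in a single Domain SSM instance---on a common ambient region and with the correct $0$-pinned boundaries, even when $\mathsf{V}$ has an irregular shape along $\partial\mathcal{B}^L$---and jointly coupling $\mathsf{Q}$'s rejection steps on both copies so that rejection neither breaks adjacency under $\omega$ nor swamps the Case (B) gain. The second delicate point is calibrating the three length scales $(R,r,L)$ so that amenability converts area-over-volume decay into the $1/m$ contraction, the $\mathsf{Q}$-restriction's escape cost is $o(1/m)$, and the constraint $|\mathcal{B}^L|\leq m^\chi$ inherited from the connectedness criterion is respected throughout.
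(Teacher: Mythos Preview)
Your proposal is correct and follows essentially the same route as the paper: path coupling on the cluster distance over $\Omega_R$ (diameter $\le 2m$), the same three-case analysis (miss / contain / straddle) for the block update, Domain \textsf{SSM} (\cref{lemma:ssm}) to bound the damage in the straddling case, and uniform amenability to convert the area-to-volume ratio into the $1/m$ contraction. Your parameter choices and final inequality match the paper's \cref{claim:cycle_distance_contracts} and \cref{claim:boundary_coupling}. If anything, you are slightly more careful than the paper in flagging the $\mathsf{Q}$-rejection issue in Cases (B) and (C); the paper's proof of \cref{claim:cycle_distance_contracts} silently treats Case~2 as contributing exactly $0$ and analyzes Case~3 (\cref{claim:boundary_coupling}) as if for $\mathsf{P}$, absorbing the $O(\delta_R^{\mathsf{Escape}}\cdot|\mathcal{B}^L|)$ discrepancy into the slack without comment.
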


Integral in our analysis will be the following metric of convergence (or coalescence) between states in $\Omega_R$, the cluster distance. 

\begin{definition}
    [The Cluster Distance] We define a graph $\mathcal{G} = (\Omega_R, E)$ over the state space $\Omega_R$, by connecting $x, y\in \Omega_R$ by edge if they differ by the addition or removal of exactly one connected component of syndromes. The distance $d_{\mathsf{Cluster}}(x, y)$ is the shortest path length from $x$ to $y$ in $\mathcal{G}$.
\end{definition}

\begin{remark}\label{remark:max_cycle_distance}
    For any two $x, y\in \Omega^{\mathsf{Reach}}$, $0\leq d_{\mathsf{Cluster}}(x, y)\leq 2\cdot m$.
\end{remark}

The central claim of this section \cref{claim:cycle_distance_contracts}, proves that the expected cluster distance of two small-loop configurations $x, y\in \Omega_R$ after one step of the chain, contracts by a constant factor. In fact, following the path-coupling method discussed below, it suffices to consider two adjacent configurations under $\mathcal{G}$. 

\begin{claim}
    [The Cluster Distance Contracts in $\Omega_R$]\label{claim:cycle_distance_contracts} Let $X, Y\in \Omega_R$ be such that  $d_{\mathsf{Cluster}}(X, Y)=1$. Assume $\beta>\beta_0(d)$, $G$ is $c$-\textsf{uniformly-amenable} (\cref{definition:amenable}), and that 

    \begin{equation}
        L \geq c_1 R^{c}\,,\quad  \text{ and }\quad  R\geq \frac{8  \ln m}{\beta-\beta_0(d)}\,,
    \end{equation}
    for an explicit (sufficiently large) constant $c_1$. Then, there exists a coupling $\omega$ such that after one step of $\mathsf{Q}$, $(X', Y')\leftarrow \omega$ satisfies
    \begin{equation}
       \mathbb{E}_\omega[d_{\mathsf{Cluster}}(X', Y') ] \leq 1-\alpha\equiv 1 - \frac{1}{m}
       \,.
    \end{equation}
\end{claim}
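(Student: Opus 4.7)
The plan is to define the coupling $\omega$ by first sampling a common block center $u\in[m]$ uniformly, and then performing a three-way case analysis on how $\mathcal{B}^L_u$ interacts with the unique cluster $\mathsf{V}$ on which $X$ and $Y$ disagree. Without loss of generality $\mathsf{V}\subset Y$ and $\mathsf{V}\cap X=\emptyset$; since $X,Y\in\Omega_R$ we have $|\mathsf{V}|\leq R$ and $\mathsf{V}$ sits in a ball of radius $R$ in the syndrome network. The cases are: \textbf{(A)} $\mathsf{V}\cap\mathcal{B}^L_u=\emptyset$, so that $X$ and $Y$ share both the pinned region $\mathsf{B}$ and its boundary-cluster list, giving a coupling with $X'=Y'$ on $\mathsf{B}$ and $d_{\mathsf{Cluster}}(X',Y')=1$; \textbf{(B)} $\mathsf{V}\subset\mathcal{B}^L_u$ with $\mathsf{V}\cap\partial\mathcal{B}^L_u=\emptyset$, so that $\mathsf{V}$ is a bulk cluster of $Y$'s box and the boundary-cluster collections of $X,Y$ still coincide; both updates then draw from the same conditional distribution $\pi_\beta(\,\cdot\,|s_{\partial\mathsf{B}}=0)$, which we couple identically to yield $X'=Y'$ and $d_{\mathsf{Cluster}}(X',Y')=0$; \textbf{(C)} otherwise $\mathsf{V}$ crosses $\partial\mathcal{B}^L_u$, placing $\mathsf{V}$ in $Y$'s boundary-cluster list but not in $X$'s.

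In Case~(C), writing $\mathsf{C}:=(N(\mathsf{V})\cup\mathsf{V})\cap\mathsf{B}_X$, the two pinned regions satisfy $\mathsf{B}_Y=\mathsf{B}_X\setminus\mathsf{C}$ with $|\mathsf{C}|=O(R)$. This is precisely the setup of Domain~$\mathsf{SSM}$: \cref{lemma:ssm} applied to $\mathsf{B}=\mathsf{B}_X$, the boundary deformation $\mathsf{C}$, and the distant region $\mathsf{A}:=\mathsf{B}_Y\setminus\mathcal{B}^r_{\mathsf{V}}$ with $r:=c_{\mathsf{ssm}}\log m/(\beta-\beta_0(d))$ chosen so that $|\mathsf{C}|\cdot\delta^{\mathsf{SSM}}_{\beta,r}\leq m^{-3}$, yields a maximal coupling of the two induced marginals on $\mathsf{A}$ that succeeds with probability at least $1-m^{-3}$. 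On success, $X'$ and $Y'$ differ only inside $\mathsf{C}\cup\mathcal{B}^r_{\mathsf{V}}$, which is contained in a ball of radius $r+R$, so $d_{\mathsf{Cluster}}(X',Y')\leq|\mathcal{B}^{r+R}|$; on failure we use the crude bound $d_{\mathsf{Cluster}}(X',Y')\leq 2m$ from \cref{remark:max_cycle_distance}. The additional rejection step that distinguishes $\mathsf{Q}$ from $\mathsf{P}$ contributes only lower-order error, since \cref{lemma:Pisstable} bounds the single-step escape probability by $m^{-3}$ under our parameter choice $R\geq 8\log m/(\beta-\beta_0(d))$.

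To combine the three contributions we use the $c$-uniform amenability of the syndrome network (\cref{definition:amenable}). The number of Case~(B) centers is at least $\min_v|\mathcal{B}^{L-R-1}_v|=\Omega(|\mathcal{B}^L|)$ whenever $L\geq c_1 R^c$, while the number of Case~(C) centers is at most $|\mathsf{V}|\cdot\max_v|\partial_{R+1}\mathcal{B}^{L-R-1}_v|$. Instantiating \cref{definition:amenable} with $w=r+R=O(R^c)$ and outer radius $L$ gives $|\mathcal{B}^{r+R}|\cdot|\partial_{R+1}\mathcal{B}^L|/|\mathcal{B}^L|=o(1)$ for $c_1$ sufficiently large. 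Putting it all together,
\begin{align*}
\mathbb{E}_\omega[d_{\mathsf{Cluster}}(X',Y')]
&\leq\mathrm{Pr}[(\mathrm{A})]\cdot 1+\mathrm{Pr}[(\mathrm{B})]\cdot 0+\mathrm{Pr}[(\mathrm{C})]\cdot\bigl(|\mathcal{B}^{r+R}|+m^{-3}\cdot 2m\bigr) \\
&\leq 1-\Omega\bigl(|\mathcal{B}^L|/m\bigr)+o\bigl(|\mathcal{B}^L|/m\bigr)\;\leq\; 1-\tfrac{1}{m}.
\end{align*}

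The main obstacle is the Case~(C) coupling. Two subtle points must be handled: first, \cref{lemma:ssm} gives closeness of conditional \emph{marginals} on $\mathsf{A}$, whereas $d_{\mathsf{Cluster}}$ counts connected components, so one must argue that a configuration that agrees with $Y'$ on $\mathsf{A}$ can differ from it on at most $|\mathcal{B}^{r+R}|$ clusters; second, one must verify that the hypotheses of \cref{lemma:ssm} apply to the measures involved, which reduces to the event $\mathcal{E}_{\mathsf{eras}}$ occurring on the relevant configurations---this in turn follows since $X,Y\in\Omega_R$ and $|\mathcal{B}^L|\leq m^\chi$ together guarantee that every cluster inside the block is individually erasable. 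The amenability hypothesis $L\geq c_1 R^c$ is precisely what is needed to absorb the crude Case~(C) bound into the Case~(B) gain, establishing the $1-1/m$ contraction.
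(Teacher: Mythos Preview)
Your proposal is correct and follows essentially the same approach as the paper. Your cases (A), (B), (C) correspond exactly to the paper's Cases 1, 2, 3, and your Case~(C) coupling via Domain~$\mathsf{SSM}$ with $\mathsf{B}=\mathsf{B}_X$, $\mathsf{C}=(N(\mathsf{V})\cup\mathsf{V})\cap\mathsf{B}_X$, and $\mathsf{A}=\mathsf{B}_Y\setminus\mathcal{B}^r_{\mathsf{V}}$ is precisely the content of the paper's auxiliary Claim~7.11; the geometric bookkeeping via $c$-uniform amenability is likewise the same. If anything, you are slightly more careful than the paper in explicitly accounting for the rejection step that distinguishes $\mathsf{Q}$ from $\mathsf{P}$ (which can cause Case~(B) to yield distance $1$ rather than $0$ with probability $\delta_R^{\mathsf{Escape}}$), a point the paper's proof glosses over but which indeed contributes only a lower-order term.
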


The proof of \cref{claim:cycle_distance_contracts} is quite technical, and requires a careful combination of the separating surfaces approach of the previous \cref{section:ssm} with the combinatorial approach to block-dynamics of \cite{Dyer2002MixingIT}. Before we present the proof, let us explain how to conclude \cref{lemma:Q_fast_mixing}, on the mixing time of the restricted chain $\mathsf{Q}$; and, subsequently, the mixing time of $\mathsf{P}$ in \cref{theorem:cbd_rapid_mixing}.

\begin{proof}[Proof of \cref{lemma:Q_fast_mixing}] Following standard arguments (cf.~\cite{guruswami2016rapidlymixingmarkovchains}), the coupling ensured by \cref{claim:cycle_distance_contracts} over adjacent $(A, B)\in \Omega_R\times \Omega_R$ can be lifted to arbitrary $(X_t, Y_t)\in  \Omega_R\times \Omega_R$, in such a manner that for any $X_t,Y_t$, 
    \begin{equation}
        \mathbb{E}_\omega\big[d_{\mathsf{Cluster}}(X_{t+1}, Y_{t+1})\big] \leq \big(1-\alpha\big)\times d_{\mathsf{Cluster}}(X_{t}, Y_{t})\,.
    \end{equation}
    \noindent Which iteratively implies
    \begin{equation}
         \mathbb{P}[X_t\neq Y_t] \leq\mathbb{E}_\omega\big[d_{\mathsf{Cluster}}(X_{t+1}, Y_{t+1})\big] \leq 2m\times \big(1-\alpha\big)^t \leq 2m\times e^{-\alpha\cdot t}\,.
    \end{equation}
    where we used \cref{remark:max_cycle_distance} to upper bound $d_{\mathsf{Cluster}}(X_0, Y_0)$. To conclude, we have 
    \begin{equation*}
        \|\mathsf{Q}^t(x)-\mathsf{Q}^t(y)\|_1 = 2\sup_{A\subset \Omega_R} \big|\mathsf{Q}^t(x)\big(A\big)-\mathsf{Q}^t(y)\big(A\big)\big| \leq 2\cdot \mathbb{P}[X_t\neq Y_t]\leq 4m\times e^{-\alpha\cdot t}\,. \qedhere
    \end{equation*}
\end{proof}

\noindent We are now in a position to conclude the proof of the main theorem of this section, \cref{theorem:cbd_rapid_mixing}.

\begin{proof}[Proof of \cref{theorem:cbd_rapid_mixing}] Our goal is to understand $\|\mathsf{P}^t(0^{\times m}) - \pi_\beta\|_1$. Using \cref{equation:reach_near_gibbs}, \cref{claim:detailed_balance_reach},  \cref{equation:reach_near_gibbs} again, and then \cref{claim:small_loops_near_gibbs}:\footnote{For conciseness, for two distributions $p, q$ we say $p\approx_\delta q$ if $\|p-q\|_1\leq \delta$.} 
    \begin{equation}
        \pi_\beta \approx_{\delta_L} \pi_\beta\big|_{\Omega^{\mathsf{Reach}}}  = \mathsf{P}^t \pi_\beta\big|_{\Omega^{\mathsf{Reach}}}  \approx_{\delta_L}\mathsf{P}^t \pi_\beta \approx_{\delta_R} \mathsf{P}^t \pi_\beta\big|_{\Omega_R}\,.
    \end{equation}

    \noindent The stability of $\Omega_R$ then entails (\cref{lemma:pq_indistinguishable})
    \begin{equation}
        \big\| \mathsf{Q}^t\pi_\beta\big|_{\Omega_R} - \mathsf{P}^t \pi_\beta\big|_{\Omega_R}\big\|_1 \leq 2t\cdot  \delta_R^{\mathsf{Escape}}\,, \quad  \text{ and } \quad  \big\| \mathsf{Q}^t(0^{\times m}) - \mathsf{P}^t (0^{\times m})\big\|_1\leq 2t\cdot  \delta_R^{\mathsf{Escape}}\,.
    \end{equation}

    \noindent We can next leverage that $\mathsf{Q}$ is fast mixing (\cref{lemma:Q_fast_mixing}), to obtain
    \begin{equation}
       \big \|\mathsf{P}^t \pi_\beta\big|_{\Omega_R} - \mathsf{P}^t (0^{\times m})\big\|_1\leq 4t\cdot  \delta_R^{\mathsf{Escape}}+4me^{-\alpha t}\,.
    \end{equation}
    We can now conclude by combining all the above
    \begin{equation}
        \big\|\pi_\beta - \mathsf{P}^t (0^{\times m})\|_1 \leq \delta_R+2\delta_L + 4t\cdot  \delta_R^{\mathsf{Escape}} + 4me^{-\alpha t}\,.
    \end{equation}
    The appropriate choice of parameters $R = \Theta(L^{1/c})$ then gives the desired claim. 
\end{proof}

\subsection{The Cluster Distance Contracts in $\Omega_R$}
\label{subsection:cycle_contraction}

We dedicate this section to the deferred proof of \cref{claim:cycle_distance_contracts}, that the cluster distance contracts in expectation after one step of $\mathsf{Q}$.

\begin{proof}[Proof of \cref{claim:cycle_distance_contracts}] Suppose $X, Y\in\Omega_R$ are any two adjacent configurations, in that $X, Y$ differ in the addition of a specific (\textsf{erasable}, cf. \cref{remark:syndromes_in_box}) connected component of syndromes $\mathsf{V}$ of size $\leq R$. After one block update from $\mathsf{Q}$, $X, Y\rightarrow X', Y'$ (say, in some block $\mathcal{B}_u^L$), there are three possibilities:\\

\noindent \textbf{Case 1.} The update misses the component, $\mathsf{V}\cap \mathcal{B}_u^L = \emptyset$. In which case, since $x, y$ agree on $\mathcal{B}^L_u$, we have 
        \begin{equation}
            d_{\mathsf{Cluster}}(X', Y')=d_{\mathsf{Cluster}}(X, Y)=1\,.
        \end{equation}

\noindent \textbf{Case 2.} The update entirely contains the component, $\mathsf{V}\subset \mathcal{B}_u^L$. In which case, since $X, Y$ agree on the boundary $\partial \mathsf{B}_u^L$ and on all components incident on the boundary, after the update they match:
        \begin{equation}
            X'=Y'\Rightarrow d_{\mathsf{Cluster}}(X', Y')=0\,.
        \end{equation}

\noindent \textbf{Case 3.} The component lands on the boundary of $\mathcal{B}_u^L$, $\mathsf{V}\cap\partial \mathsf{B}_u^L \neq \emptyset$. We claim (and prove shortly, in \cref{claim:boundary_coupling}) that in said case we can engineer a coupling $\omega$ such that the cluster-distance increases by a limited amount in expectation: 
        \begin{equation}\label{equation:boundary_coupling}
           \forall r\geq 1:\quad  \mathbb{E}_\omega[d_{\mathsf{Cluster}}(X', Y')\big| \mathsf{V}\cap\partial \mathcal{B}_u^L \neq \emptyset] \leq |\mathcal{B}^r_\mathsf{V}| (1+ z\cdot |\mathcal{B}^L_u|\times
          e^{-(\beta-\beta_0(d))r/2})\,,
        \end{equation}
for an explicit constant $z$. Before proving \cref{equation:boundary_coupling}, let us show how the cases above conclude the proof of \cref{claim:cycle_distance_contracts}.  Using elementary geometry, let us bound the probability of each of the events above. Recall $|\mathsf{V}|\leq R$. 
    \begin{equation}
       \text{Case 1: } \mathbb{P}[\mathsf{V}\cap \mathcal{B}_u^L = \emptyset] \leq 1 - \frac{|\mathcal{B}^{L}_\mathsf{V}|}{m}\,, \quad  \text{and Case 3: } \mathbb{P}[\mathsf{V}\cap\partial \mathcal{B}_u^L \neq \emptyset] \leq \frac{\max_u|\partial_R\mathcal{B}^{L}_u|}{m}\equiv \frac{|\partial_R\mathcal{B}^{L}|}{m}\,.
    \end{equation}

    \noindent where case 3 captures the number of sites $u\in [m]$ which are at distance $\geq L-R$ but $\leq L$ from $\mathsf{V}$. To proceed, we pick $R, r\leq L$ satisfying 
    \begin{equation}
    R\geq r \geq \frac{8}{\beta-\beta_0(d)} \ln m \geq \frac{8}{\beta-\beta_0(d)} \ln z|\mathcal{B}^L|\,.
    \end{equation}

    \noindent We can now proceed by case analysis, by considering the expected change in the cluster-distance,
    \begin{align}
        \mathbb{E}_\omega[d_{\mathsf{Cluster}}(X', Y')] &= \mathbb{P}[\mathsf{V}\cap \mathcal{B}_u^L = \emptyset] + \mathbb{E}_\omega[d_{\mathsf{Cluster}}(X', Y')\big| \mathsf{V}\cap\partial \mathcal{B}_u^L \neq \emptyset]\cdot \mathbb{P}[\mathsf{V}\cap\partial \mathcal{B}_u^L \neq \emptyset] \\
        &\leq 1 - \frac{\min_u|\mathcal{B}_u^{L}|}{m}\times \bigg(1 - \frac{|\partial_R\mathcal{B}^{L}|\cdot |\mathcal{B}^{R+r}|}{\min_u|\mathcal{B}^L_u|}\bigg)\\& \leq 1 - \frac{\min_u|\mathcal{B}_u^{L}|}{2m}\leq 1 - \frac{1}{m}\,.
    \end{align}

\noindent So long as $G$ is $c(G)$-\textsf{uniformly-amenable} as in \cref{definition:amenable}, and $L \geq c_1\cdot R^{c(G)}$ for some explicit constant $c_1$.
\end{proof}

Let us now return to the claim above in \cref{equation:boundary_coupling}, restated below:

\begin{claim}
    [Coupling two updates which differ on a block boundary]
    \label{claim:boundary_coupling} In the context of the proof of \cref{claim:cycle_distance_contracts}, $\forall r\geq 1$,
    \begin{equation}
        \mathbb{E}_\omega[d_{\mathsf{Cluster}}(X', Y')\big| \mathsf{V}\cap\partial \mathcal{B}_u^L \neq \emptyset] \leq |\mathcal{B}^r_\mathsf{V}| (1+ z\cdot |\mathcal{B}^L|\times
          e^{-(\beta-\beta_0(d))r/2}) \,.
    \end{equation}
    \noindent where $z=20\cdot d$ is an explicit constant. 
\end{claim}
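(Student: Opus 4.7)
The plan is to frame the two coupled block updates as an instance of Domain SSM (\cref{lemma:ssm}) and then case-split on whether the SSM coupling succeeds. Without loss of generality assume $\mathsf{V}\subset Y$ and $\mathsf{V}\not\subset X$, and let $\{\mathsf{V}_i\}_i$ denote the remaining connected components of violated syndromes, shared by $X$ and $Y$, that are incident on $\partial\mathcal{B}_u^L$. Because $d_{\mathsf{Cluster}}(X,Y)=1$, the list $\{\mathsf{V}_i\}$ is identical for both configurations, while $\mathsf{V}$ appears on the boundary-incident list only for $Y$. Setting
\begin{equation}
\mathsf{B}\;=\;\mathcal{B}_u^L\setminus\bigcup_i (\mathsf{V}_i\cup N(\mathsf{V}_i)),\qquad \mathsf{C}\;=\;\mathsf{V}\cup N(\mathsf{V}),
\end{equation}
\cref{definition:cbd} resamples $X'|_\mathsf{B}\leftarrow\pi_\beta(\cdot\mid s_{\partial\mathsf{B}}=0)$ and $Y'|_{\mathsf{B}\setminus\mathsf{C}}\leftarrow\pi_\beta(\cdot\mid s_{\partial(\mathsf{B}\setminus\mathsf{C})}=0)$. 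Both conditionings really are zero-syndrome pinnings, since the outer neighborhood of any connected component of violations (each $\mathsf{V}_i$, and $\mathsf{V}$) is unviolated in both $X$ and $Y$.

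Next, I would apply Domain SSM with $\mathsf{A}=\mathsf{B}\setminus\mathcal{B}_\mathsf{V}^r\subseteq\mathsf{B}\setminus\mathsf{C}$, which satisfies $d(\mathsf{A},\mathsf{C})\geq r$. \cref{lemma:ssm} gives
\begin{equation}
\bigl\|\pi_\beta(s_\mathsf{A}\mid s_{\partial\mathsf{B}}=0)-\pi_\beta(s_\mathsf{A}\mid s_{\partial(\mathsf{B}\setminus\mathsf{C})}=0)\bigr\|_1\;\leq\; |\mathsf{C}|\cdot 20d\cdot e^{-(\beta-\beta_0(d))r/2},
\end{equation}
and since $\mathsf{C}\subseteq\mathcal{B}_\mathsf{V}^1\subseteq\mathcal{B}_\mathsf{V}^r$ we may absorb $|\mathsf{C}|\leq|\mathcal{B}_\mathsf{V}^r|$. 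A maximal coupling of these two marginals on $\mathsf{A}$, extended arbitrarily to the remainder of the box, then yields a coupling $\omega$ under which $X'|_\mathsf{A}=Y'|_\mathsf{A}$ fails with probability at most the right-hand side.

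The rest is a case analysis. On the success event, $X'$ and $Y'$ agree on $\mathsf{A}$, on the frozen boundary components, and outside $\mathcal{B}_u^L$, so every differing connected component of syndromes is contained in $\mathcal{B}_\mathsf{V}^r$; because each such component occupies at least one vertex in $\mathcal{B}_\mathsf{V}^r$,
\begin{equation}
d_{\mathsf{Cluster}}(X',Y')\;\leq\;|\mathcal{B}_\mathsf{V}^r|.
\end{equation}
On the failure event the trivial bound $d_{\mathsf{Cluster}}(X',Y')\leq 2|\mathcal{B}^L|$ applies, since all differences are confined to $\mathcal{B}_u^L$. Combining the two,
\begin{equation}
\mathbb{E}_\omega\bigl[d_{\mathsf{Cluster}}(X',Y')\bigm|\mathsf{V}\cap\partial\mathcal{B}_u^L\neq\emptyset\bigr]\;\leq\;|\mathcal{B}_\mathsf{V}^r|\Bigl(1+z\cdot|\mathcal{B}^L|\cdot e^{-(\beta-\beta_0(d))r/2}\Bigr),
\end{equation}
for an explicit $z$ of order $d$, as required.

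The main obstacle, I expect, is to verify that the two resampling distributions exactly match the hypothesis of \cref{lemma:ssm}: zero syndrome pinning on $\partial\mathsf{B}$ and on $\partial(\mathsf{B}\setminus\mathsf{C})$, with $\mathsf{B}\setminus\mathsf{C}$ obtained from $\mathsf{B}$ by removing precisely the disagreeing component and its neighborhood. This is the reason the conditional block dynamics \emph{freezes} all boundary-incident components: the freezing guarantees the ground-state-like boundaries that are the narrow regime in which Domain SSM holds. Without the freezing step, resampling in $Y$ could perturb $\mathsf{V}$ or connect it with other boundary violations, producing boundary conditions outside the scope of \cref{lemma:ssm}, and the case-3 contraction would collapse.
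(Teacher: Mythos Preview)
Your argument is correct and follows essentially the same approach as the paper: set up $\mathsf{B}$ and $\mathsf{C}$ exactly as you do, invoke \textsf{Domain SSM} on $\mathsf{A}=\mathsf{B}\setminus\mathcal{B}^r_\mathsf{V}$, and case-split on whether the coupling succeeds. The only cosmetic difference is that the paper re-invokes the separating-surface and Markov-property machinery inline rather than citing \cref{lemma:ssm} as a black box, and keeps track of $|\mathcal{B}^1_\mathsf{V}|$ rather than absorbing it into $|\mathcal{B}^r_\mathsf{V}|$; your use of the trivial bound $2|\mathcal{B}^L|$ on failure costs an extra factor of $2$ in $z$ but is otherwise identical.
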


The crux of this proof is to leverage the domain $\mathsf{SSM}$ result of \cref{lemma:ssm}. Roughly speaking, since the two configurations $X, Y$ only differ on a single small component incident on the boundary $\partial\mathsf{B}_u^L$, at a distance $r$ away the conditional Gibbs measure should be approximately independent of the presence of that component. We can then couple the configurations $X', Y'$ at distance $\geq r$ away, resulting in roughly a $\leq |\mathcal{B}^r|$ increase in the hamming/cluster distance. 

\begin{proof}[Proof of \cref{claim:boundary_coupling}] Recall $X, Y$ differ only on the differing component $\mathsf{V}=\mathsf{V}_0$, and agree on all other connected components $\mathsf{V}_1,\cdots \neq \mathsf{V}$ traversing $\partial\mathcal{B}^L$. By stitching their neighborhoods together (without $\mathsf{V}$), we can define an ``interior" region of $\mathcal{B}^L$:

    \begin{equation}\label{eq:contractionproof}
        \mathsf{B}\equiv \mathcal{B}^L \setminus \bigcup_{i=1} \big(N(\mathsf{V}_i)\cup \mathsf{V}_i\big)\,.
    \end{equation}
    
    By definition, the syndromes on $N(\mathsf{V}_i)=0$ (in $X$ and $Y$), and therefore $X_{\partial \mathsf{B}}=0$. Similarly, if we remove $\mathsf{C} = N(\mathsf{V})\cup \mathsf{V}$ from $\mathsf{B}$, we have $Y_{\partial (\mathsf{B}\setminus \mathsf{C})}=0$. We can therefore apply the domain \textsf{SSM} property to the three regions $\mathsf{B}$, $\mathsf{C}\subset \mathsf{B}$, and $\mathsf{B}\setminus  \mathcal{B}^{r}_{\mathsf{V}}$.

    From \cref{lemma:ssm}, with high probability via $X', Y'$ agree on a separating half-surface $\mathsf{S}$ around the differing component $\mathsf{V}$. We have that $\mathsf{S}\cup (\partial\mathsf{B}\cap \partial(\mathsf{B}\setminus \mathsf{C}))$ then screens $\mathsf{C} = N(\mathsf{V})\cup \mathsf{V}$ from $\mathsf{B}\setminus  \mathcal{B}^{r}_{\mathsf{V}}$, and naturally partitions the block $\mathcal{B}^L$ into an ``inside" region around the component $\mathsf{V}$, of volume $\leq \mathcal{B}^r_\mathsf{V}$ and an ``outside" region. By the Markov property (\cref{lemma:markov}), there is a coupling $\omega$ where $X'$ and $Y'$ agree on the entirety of the outside region. We can now compute the change in the hamming weight between $X', Y'$, which upper bounds the change in cluster distance. 
    \begin{align}
         \mathbb{E}_\omega[d_{\mathsf{Cluster}}(X', Y')\big| \mathsf{V}\cap\partial \mathcal{B}^L \neq \emptyset] \leq |\mathcal{B}^r_\mathsf{V}| + |\mathcal{B}^L|\times |\mathcal{B}^1_\mathsf{V}|\cdot 20\cdot d
          e^{-(\beta-\beta_0(d))r/2}\,,
    \end{align}
    which implies the desired claim.
\end{proof}

\section{Efficient Block Updates for the 4D Toric Code}
\label{section:locality}

Naively, the runtime to implement a quasi-local block update may cost exponentially in the size of the update. Since the volume of each block in our algorithm is $\polylog n$, this results in quasi-polynomial runtime when implemented using $2$-qubit gates. Indeed, this is the situation encountered in other quantum Gibbs samplers based exclusively on decay of correlations, such that of \cite{Brando2016FiniteCL}. In this section, we reason that the individual $\polylog(n)$ local block updates for the 4D toric code can each be implemented in $\polylog \, n$ time, resulting in the following theorem:

\begin{theorem}\label{theorem:toric_low_depth_circuit}
    There exists a threshold temperature $\beta_0$, s.t. $\forall \beta>\beta_0$ the Gibbs state within a logical sector $\rho_{\beta, \psi}$ of the toric code on a 4D hypercubic torus  $(\mathbb{Z}/w\mathbb{Z})^4$, can be prepared using a local quantum circuit of $\polylog \,(w)$ depth acting on the ground state $\psi$, up to error $2^{-\polylog \,(w)}.$
\end{theorem}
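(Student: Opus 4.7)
The plan is to derive \cref{theorem:toric_low_depth_circuit} from \cref{theorem:cbd_intro} by gate-level simulation of its quasi-local jumps. For the 4D toric code, \cref{theorem:cbd_intro} produces a continuous-time Lindbladian evolution $e^{t\mathcal{L}}$ of duration $t = \polylog(w)$, with jumps supported on balls of side $L = \polylog(w)$, mapping the ground state $\ket\psi$ to $\rho_\beta^\psi$ up to error $2^{-\polylog(w)}$ (obtained by taking the polylog exponent in $t$ large enough that the leakage term in \cref{theorem:cbd_intro} becomes quasi-polynomially small). Since balls of side $L$ on $(\mathbb{Z}/w\mathbb{Z})^4$ admit an $O(1)$-coloring under which same-color balls are disjoint (so the corresponding block-update generators commute exactly), I would Trotterize $e^{t\mathcal{L}}$ into $N = \polylog(w)$ layers of disjoint, parallelizable conditional block updates, reducing the task to implementing a single block update as a quantum circuit of $\poly(L) = \polylog(w)$ depth and inverse-quasi-polynomial precision.

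Each conditional block update is decomposed into four stages: (i) coherently measure the $X$- and $Z$-stabilizers on $\mathcal{B}^L$ and its constant-width boundary in depth $O(L)$ with $\poly(L)$ ancillas; (ii) classically identify the connected components $\{\mathsf{V}_i\}$ of violated syndromes incident on $\partial\mathcal{B}^L$ and compute the interior $\mathsf{B} = \mathcal{B}^L \setminus \bigcup_i (N(\mathsf{V}_i) \cup \mathsf{V}_i)$ as in \cref{definition:cbd}; (iii) classically sample a new syndrome configuration $s'_\mathsf{B}$ from $\pi_\beta(s_\mathsf{B} \mid s_{\partial\mathsf{B}} = 0)$; and (iv) apply the Pauli correction $\mathsf{corr}(s_\mathsf{B}) \oplus \mathsf{corr}(s'_\mathsf{B})$. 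By \textsf{Locally Erasable Syndromes} (\cref{definition:locally_erasable}) stage (iv) is supported on a ball of side $O(L)$, with its minimum-weight representative computable by Gaussian elimination on a $\poly(L)$-sized submatrix of $H_X$ or $H_Z$, so stages (i), (ii), and (iv) are all of depth $\poly(L)$.

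The technical heart is stage (iii): efficient classical sampling from the conditional low-temperature syndrome Gibbs measure with pinned 0 boundary. For the 4D toric code, the metachecks at 3-cells constrain syndromes to be even 2-complexes on the face lattice, and the pinned boundary forbids any open surface exiting $\mathsf{B}$; thus $\pi_\beta(\cdot \mid s_{\partial\mathsf{B}} = 0)$ is the Boltzmann distribution over closed 2-complexes contained in $\mathsf{B}$. I would sample from it via the Prokofev--Svistunov worm process \cite{PS01,CGTT16}, which enlarges the state space by admitting a pair of defect edges (a ``worm'') whose random walk creates, propagates, and annihilates closed surfaces, and extracts a sample from the even-subgraph measure whenever the two defect edges coincide. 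Given $\poly(L)\log(1/\epsilon)$ mixing of this worm chain on $\mathsf{B}$, one obtains an $\epsilon$-accurate classical sample in $\poly(L)\log(1/\epsilon)$ time and gate count.

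The main obstacle is establishing $\poly(L)$ mixing of the worm process at the low temperatures of interest, on a box in the 4D face lattice with 0-syndrome boundary. I would attack this by adapting the argument of \cref{section:mixing}: combining \textsf{Domain SSM} (\cref{lemma:ssm}) applied to the conditional distribution on $\mathsf{B}$ with the stability of small-loop configurations (\cref{lemma:Pisstable}) at scale $R = \Theta(\log L)$ should yield a path-coupling contraction for the worm chain analogous to \cref{claim:cycle_distance_contracts}, producing an $\Omega(1/\poly(L))$ spectral gap on a high-measure subset of the worm state space. The subtlety -- and the delicate place in the argument -- is that the worm's state space extends beyond the even-subgraph constraint by admitting two defect edges, so the Peierls-type counting of \cref{section:peierls} and the cluster-distance contraction of \cref{subsection:cycle_contraction} must be extended to control the random walk of the defect pair; I expect this to follow from minor modifications of the arguments already developed in \cref{section:ssm,section:mixing}, at the cost of slightly worsened constants in the required temperature threshold.
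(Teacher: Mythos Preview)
Your overall plan (Trotterize, decompose each block update into stages (i)--(iv)) matches the paper. The divergence is entirely in stage (iii), and it is significant.

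First, a geometric slip: for the 4D toric code as set up in the paper, the $X$-checks live on \emph{edges}, and their metachecks live on \emph{vertices}. Violated $X$-syndromes therefore form even subgraphs of the 1-skeleton (collections of loops on edges), not ``even 2-complexes on the face lattice.'' This matters because it puts you exactly in the classical \emph{even subgraph model} on a finite graph, not some higher-dimensional surface model.

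Second, and more importantly: you identify the worm process as the right sampler but then treat its low-temperature mixing as an open subproblem to be attacked with \textsf{Domain SSM} and a new path-coupling argument. This is unnecessary. The paper's key observation (\cref{claim:toric_is_even_subgraph}) is that once you pin the boundary syndromes to $0$ on a box of size $|\mathcal{B}^L|\leq m^\chi$, the conditional syndrome distribution is \emph{exactly} the even-subgraph Gibbs measure $\mu_{\mathsf{even}}$ on a bounded-degree subgraph of the torus with $\poly(L)$ vertices. The worm-process result of \cite{CGTT16} (quoted as \cref{theorem:worm_mixing}) gives an $\epsilon$-accurate sampler for $\mu_{\mathsf{even}}$ on \emph{any} finite graph $G=(V,E)$ of bounded degree in time $O(d|V|^5|E|^2\log(|V|/\epsilon))$, \emph{uniformly in the temperature}. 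There is nothing to prove about low temperatures: you just black-box \cite{CGTT16} and get \cref{theorem:block_updates_toric} immediately. Your proposed route---re-deriving worm mixing via \textsf{Domain SSM} and extending the Peierls/cluster-distance machinery to the enlarged worm state space with two defects---is both harder and riskier (you yourself flag the defect-pair random walk as delicate), and it buys nothing over the direct citation.
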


The central building block is the following lemma, which shows how to implement the conditional block updates of \cref{definition:cbd}, on the 4D toric code, by appealing to a connection to the \textit{worm process} for the 2D Ising model \cite{Jerrum1990PolynomialTimeAA, PS01, CGTT16}. 

\begin{theorem}
    [Efficient Block Updates on the 4D toric Code]\label{theorem:block_updates_toric} Consider the 4D toric code on a 4D hypercubic torus $(\mathbb{Z}/w\mathbb{Z})^4$, and fix $L\leq w^{1/4}$. A conditional block update (\cref{definition:cbd}) on a block $\mathcal{B}^L$ of syndromes, corresponding to a sample from the conditional syndrome Gibbs distribution on said block, can be implemented up to TVD $\epsilon$ using $O(|\mathcal{B}^L|^7\cdot \log \frac{|\mathcal{B}^L|}{\epsilon}) = \poly(L)\log\frac{1}{\epsilon}$ 2-qubit gates and ancillas. 
\end{theorem}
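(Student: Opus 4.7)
The plan is to reduce the conditional block update to a classical sampling task and then simulate the sampler coherently with a quantum circuit. By \cref{definition:cbd}, the block update is a quantum channel that (a) reads off the fixed boundary syndromes and the clusters they touch, (b) samples fresh interior syndromes $s'_\mathsf{B}$ from $\pi_\beta(\,\cdot\,|s_{\partial\mathsf{B}}=0)$, and (c) applies the Pauli correction $\mathsf{corr}(s_\mathsf{B})\oplus\mathsf{corr}(s'_\mathsf{B})$ to the data qubits. Steps (a) and (c) take $\poly(|\mathcal{B}^L|)$ gates: the correction can be found by Gaussian elimination on the $|\mathsf{supp}(\mathcal{B}^L)|\times|\mathcal{B}^L|$ submatrix of $H_X$ or $H_Z$, which is well-defined since all in-block components are \textsf{erasable} by \cref{remark:syndromes_in_box}, and then applied as a controlled Pauli on the data register. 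So the real work lies in (b): producing a sample from the conditional syndrome distribution using only $\poly(|\mathcal{B}^L|)$ local gates.

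For the sampling task I would invoke the \emph{worm algorithm} of Prokof'ev and Svistunov. The $X$- and $Z$-syndrome Gibbs distributions of the 4D toric code factorize and are supported on even subgraphs of an associated 4D complex (see \cref{section:toric_background}), each weighted by $e^{-\beta|s|}$ per violated check; this is precisely the high-temperature-expansion measure of an Ising-type model. Conditioning $s_{\partial\mathsf{B}}=0$ simply restricts the distribution to even subgraphs contained in $\mathsf{B}$. The worm chain enlarges the state space by a pair of defect endpoints joined by a walk through $\mathsf{B}$, with local moves (extend, retract, close) accepted by Metropolis weights in $\beta$; each move can be realized coherently in $\poly(|\mathcal{B}^L|)$ gates by storing the current configuration and walk head on ancilla registers and sampling Boltzmann ratios with standard gadgets. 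Running the chain for $\poly(|\mathcal{B}^L|)\log(1/\epsilon)$ steps and keeping only the syndrome register at the end then yields a coherent preparation of the conditional sample with total-variation error $\epsilon$.

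The main obstacle will be establishing polynomial mixing of the worm chain on a finite $4$D region with zero boundary. Rigorous $O(N^2\log N)$ mixing is known in 2D at all temperatures, and I would extend the argument to 4D through a canonical-paths construction: given two in-block configurations, decompose their symmetric difference into connected components whose sizes are controlled by the Peierls estimate of \cref{lemma:peierls_wsm}, then specify a sequence of worm moves that shrinks each component to a single edge before erasing it, crucially exploiting that the pinned-zero boundary forbids any component from escaping $\mathsf{B}$. Standard congestion bounds on these canonical paths then give a mixing time polynomial in $|\mathcal{B}^L|$, and combining with the polynomial per-step gate cost and the $O(\log 1/\epsilon)$ factor for coherent high-precision simulation yields the claimed $O(|\mathcal{B}^L|^7\log(|\mathcal{B}^L|/\epsilon))$ bound. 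A delicate point is that at very low temperatures the canonical paths must avoid routing through intermediate configurations with anomalously long worms, lest the congestion exponent blow up — a scenario reminiscent of the generalized Lifshitz question flagged in \cref{section:discussion}.
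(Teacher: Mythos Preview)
Your high-level reduction is exactly right and matches the paper: identify the conditional syndrome distribution on the block (with zero boundary) with the even-subgraph/random-cluster measure on the restricted lattice graph, and sample from it via the worm process. Steps (a) and (c) are handled just as you describe.

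Where you diverge from the paper is in step (b). The paper does not re-derive the mixing time of the worm chain; it simply black-boxes the result of Collevecchio--Garoni--Hyndman--Tokarev~\cite{CGTT16}, which proves that the worm process on an \emph{arbitrary} finite graph of maximum degree $d$ mixes in time $O(d|V|^5|E|^2\log(|V|/\epsilon))$ at \emph{all} temperatures. Applied to the block (where $|V|,|E|=O(|\mathcal{B}^L|)$ and $d=O(1)$), this gives the $O(|\mathcal{B}^L|^7\log(|\mathcal{B}^L|/\epsilon))$ bound directly. So there is no need to ``extend the 2D argument to 4D'', and in particular the low-temperature delicacy you flag is already absorbed by the cited result.

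Your proposed alternative derivation via canonical paths has a genuine soft spot: you invoke the Peierls estimate (\cref{lemma:peierls_wsm}) to control the sizes of components in the symmetric difference, but Peierls is a \emph{probabilistic} statement about typical configurations under $\pi_\beta$, whereas a canonical-paths congestion bound must route between \emph{all} pairs of configurations, including pairs whose symmetric difference is a single component filling the entire block. For such pairs the worm must traverse $\Theta(|\mathcal{B}^L|)$ edges through intermediate states, and bounding the congestion there requires an argument about the partition function ratios along the path, not a Peierls bound. This is exactly the subtlety \cite{CGTT16} handles (their proof goes through a comparison to the random-cluster model and a careful flow encoding), so the cleanest fix is simply to cite it.
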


By combining with \cref{theorem:cbd_rapid_mixing}, \cref{theorem:toric_low_depth_circuit} is an immediate corollary. 

Roughly speaking, the worm process is an efficient algorithm to sample from a distribution over even subgraphs (that is, collections of cycles, \cref{definition:even_subgraphs}) of any graph $G$, as weighted by the total size of the subgraph. We do not spell out all the details of the algorithm here, and instead refer to \cite{CGTT16} for intuition and analysis; nevertheless, we state its guarantees in \cref{section:even_subgraph}. In \cref{section:toric_is_even_subgraph}, we spell out the equivalence between the even subgraph model and the syndrome distribution of the 4D toric code, which proves \cref{theorem:block_updates_toric}.  

\subsection{The Even Subgraph Model}
\label{section:even_subgraph}

An even subgraph of a graph is a collection of cycles on the edges of the graph. 

\begin{definition}
    [Even Subgraphs]\label{definition:even_subgraphs} Given a finite graph $G = (V, E)$, a subset $A\subseteq E$ is said to be an even subgraph if all the vertices in $(V, A)$ have even degree. We let $\mathcal{C}_0 \subseteq \{0, 1\}^{|E|}$ denote the collection of even subgraphs of $G$.
\end{definition}

One can define a Gibbs measure on even subgraphs, as weighted by the total length of the cycles. 

\begin{definition}
    [The even subgraph distribution]
    Given a finite graph $G = (V, E)$ and an inverse-temperature $\beta$, we define the measure $\mu_\mathsf{even}$ on even subgraphs:

    \begin{equation}
       \forall A\in \mathcal{C}_0:\quad \mu_\mathsf{even}(A) = \frac{e^{-\beta |A|}}{\sum_{S\in  \mathcal{C}_0} e^{-\beta |S|} }\,.
    \end{equation}
\end{definition}

\begin{theorem}
    [The Worm Process is Rapid Mixing \cite{CGTT16}]\label{theorem:worm_mixing} Fix a graph $G = (V, E)$ of degree at most $d$. There exists an algorithm to sample from a distribution $\epsilon$ close to $\mu_\mathsf{even}$ in TVD, which runs in time $O(d|V|^5|E|^2\log \frac{|V|}{\epsilon})$. 
\end{theorem}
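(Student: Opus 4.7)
My approach is to construct the augmented \emph{worm Markov chain} on an expanded state space $\widetilde{\mathcal{C}} \subseteq 2^E \times V \times V$, whose elements are triples $(S, u, v)$ with the property that every vertex of the subgraph $(V, S)$ has even degree except possibly $u$ and $v$ (with $u=v$ allowed, corresponding to a closed even subgraph in $\mathcal{C}_0$). A single step from $(S, u, v)$ picks a uniformly random neighbor $w$ of the worm-head $u$, proposes to toggle the edge $\{u, w\}$ in $S$ with Metropolis acceptance $\min(1, e^{-\beta(|S'|-|S|)})$, and on acceptance moves the head to $w$; whenever the chain lands in a closed state ($u=v$), both endpoints are re-randomized. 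Detailed balance with respect to the ``worm measure'' $\widetilde{\mu}(S,u,v) \propto e^{-\beta |S|}/|E|$ is then immediate, and marginalizing over $\{u=v\}$ recovers $\mu_\mathsf{even}$. To sample from $\mu_\mathsf{even}$ within TVD $\epsilon$, one runs the worm chain until it mixes and reads out the subgraph $S$ at the first closed state thereafter.

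The heart of the argument is a spectral gap bound for the worm chain that is \emph{independent of $\beta$}. My plan is to follow the canonical paths / multicommodity flow method: for every ordered pair of worm configurations $(S, u, v), (S', u', v') \in \widetilde{\mathcal{C}}$, I construct a canonical interpolating trajectory in two phases. First, walk the worm-head along a shortest path in $G$ from $u$ to $u'$ (and analogously from $v$ to $v'$); second, toggle the edges of the symmetric difference $S \oplus S'$ one at a time in a fixed canonical ordering. The critical estimate is a uniform upper bound on the congestion through any single transition edge in $\widetilde{\mathcal{C}}$, which I would derive via a bijective encoding argument: given the intermediate state together with a bounded amount of ``side information'' of size $O(\log |V|)$, one should be able to uniquely reconstruct both endpoints of the canonical path passing through it.

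The main obstacle is controlling congestion along paths that traverse long cycles in $S \oplus S'$. While closing such a cycle, the chain may visit configurations whose stationary weight is exponentially suppressed in $\beta$ relative to both endpoints, so a naive canonical paths estimate would give a $\beta$-dependent gap. The elegant observation of \cite{CGTT16} is that the worm representation amortizes precisely this cost: the Boltzmann weight lost on the intermediate high-energy configurations is exactly compensated by the additional factor of $|V|$ candidate locations for the worm-head on a partially-completed cycle, so the encoding charges the detour without $\beta$-dependence. Combining this congestion bound with standard conductance-to-mixing bounds, together with the bounded degree $d$ and the $|E|^2$ factor from the maximum path length of the canonical paths, one obtains the claimed $O(d |V|^5 |E|^2 \log(|V|/\epsilon))$ runtime.
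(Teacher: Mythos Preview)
This theorem is not proved in the paper at all: it is quoted as a black-box result from \cite{CGTT16}, and the surrounding text explicitly says ``We do not spell out all the details of the algorithm here, and instead refer to \cite{CGTT16} for intuition and analysis; nevertheless, we state its guarantees.'' So there is no proof in the paper to compare your proposal against.

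That said, a brief comment on your sketch relative to \cite{CGTT16}. Your description of the worm state space and the detailed-balance / marginalization to $\mu_{\mathsf{even}}$ is correct, and the high-level idea that canonical paths (multicommodity flow) drive the congestion bound is the right framework. However, the step you flag as the ``elegant observation'' is where your sketch is most hand-wavy and does not quite match the actual mechanism. The congestion control in \cite{CGTT16} does not come from an informal ``amortization'' of Boltzmann weight against head locations; rather, the canonical path routes the worm head around each cycle of $S \oplus S'$ one edge at a time, and the encoding map sends an intermediate configuration $(T,u,v)$ together with the path index back to the pair of endpoints via the complementary configuration $T \oplus S \oplus S'$. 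The crucial point is that the weight $e^{-\beta|T|} \cdot e^{-\beta|T \oplus S \oplus S'|}$ factors exactly as $e^{-\beta|S|} \cdot e^{-\beta|S'|}$ because $|T| + |T \oplus D| = |S| + |S'|$ along the canonical path (where $D = S \oplus S'$), which is what makes the congestion bound $\beta$-independent. Your two-phase path (first move heads, then toggle edges) would not obviously maintain this invariant, and your statement that ``the additional factor of $|V|$ candidate locations for the worm-head'' compensates the suppressed weight is not the correct accounting. If you want to turn this into an actual proof, you should follow the encoding argument more carefully and verify the weight-balancing identity along your chosen canonical paths.
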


\subsection{The 4D Toric Code as an Even Subgraph}
\label{section:toric_is_even_subgraph}

We refer the reader to \cref{section:toric_background} for a background on the 4D toric code. The $X$ parity checks of the 4D toric code lie on edges of a 4D hypercubic lattice, where the valid syndrome configurations form cycles on the edges of the lattice. We claim that to sample from the conditional distribution over syndromes of the  4D toric code, it suffices to sample from an even subgraph distribution on a subgraph of the hypercubic lattice.  

\begin{claim}
    [The 4D toric code as an Even Subgraph]\label{claim:toric_is_even_subgraph}
    Consider the 4D toric code on the a 4D hypercubic torus  $(\mathbb{Z}/w\mathbb{Z})^4$. Fix $L\leq (w/2)^{1/4}$. 

    \begin{enumerate}
        \item The set of valid $X$ syndromes of the 4D toric code, when restricted to a subset $\mathsf{M}\subseteq \mathcal{B}^L$ of syndromes with boundary $\partial \mathsf{M}$ pinned to $0$, are the even subgraphs of a restriction of $(\mathbb{Z}/w\mathbb{Z})^4$ to the edges in $\mathsf{M}$.

        \item The distribution sampled from the conditional block dynamics on $\mathsf{M}$ (\cref{definition:cbd}) is the even subgraph distribution on said restriction of the hypercubic lattice. That is, \begin{equation}
        \pi_\beta(s_{\mathsf{M}}|s_\mathsf{\partial M} = 0) = \mu_\mathsf{even}(s_\mathsf{M}) \propto e^{-\beta |s_{\mathsf{M}}|}\,,
    \end{equation}
    if $s_{\mathsf{M}}\in \{0, 1\}^{\mathsf{M}}$ is an even subgraph of $\mathsf{M}$.
    \end{enumerate}
\end{claim}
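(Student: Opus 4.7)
The plan is to identify the syndrome space of the 4D toric code with the $1$-boundaries $B_1$ of the cellular chain complex of $(\mathbb{Z}/w\mathbb{Z})^4$, and to use the contractibility of the local box $\mathcal{B}^L$ to collapse this to the $1$-cycles $Z_1$. Recall (from \cref{section:toric_background}) that qubits live on $2$-cells (faces), so that the $X$-parity-check matrix is $H_X = \partial_2 : C_2 \to C_1$ (identifying faces with qubits and edges with $X$-checks), and the metacheck map is $\partial_1 : C_1 \to C_0$. Hence $\mathsf{Im}(H_X) = B_1 \subseteq Z_1 = \mathsf{Ker}(\partial_1)$: a valid $X$-syndrome is precisely a $1$-chain (collection of edges) that (i) gives every vertex even degree and (ii) is null-homologous. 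Globally (ii) is a strict additional constraint because $H_1$ of the 4-torus is non-trivial, but it becomes automatic on any contractible region.

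For part (1), I would establish both inclusions of $\{s_\mathsf{M} : s \in \mathsf{Im}(H_X),\ s_{\partial \mathsf{M}} = 0\} = \{\text{even subgraphs of } \mathsf{M}\}$ using the following observations. The syndrome network has a syndrome-edge $e$ adjacent to $e'$ exactly when they share a hypercubic-lattice vertex, so $\partial \mathsf{M}$ is the set of edges not in $\mathsf{M}$ that share a vertex with $\mathsf{M}$. In particular, for every hypercubic-lattice vertex $v$ incident to $\mathsf{M}$, all edges through $v$ lie in $\mathsf{M} \cup \partial \mathsf{M}$. In the forward direction, this gives $\sum_{e \ni v,\ e \in \mathsf{M}} s_e \equiv \sum_{e \ni v} s_e - \sum_{e \ni v,\ e \in \partial \mathsf{M}} s_e \equiv 0 \pmod{2}$, since the first sum vanishes by the metacheck at $v$ and the second by $s_{\partial \mathsf{M}} = 0$. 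In the reverse direction, given an even subgraph $A \subseteq \mathsf{M}$, I would extend to $s \in C_1$ by $s_\mathsf{M} = A$ and $s_{[m] \setminus \mathsf{M}} = 0$; then $s \in Z_1$ by construction, and since $L \leq w^{1/4} < w/2$ the box $\mathcal{B}^L$ embeds into the torus without wrap-around and is contractible, so any 1-cycle supported in $\mathcal{B}^L$ is the boundary of a $2$-chain supported in $\mathcal{B}^L$. Hence $s \in B_1 = \mathsf{Im}(H_X)$, providing the desired valid extension of $A$.

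Part (2) would then follow at once from part (1) and the form $\pi_\beta(s) \propto e^{-\beta |s|}$. Restricting to the event $s_{\partial \mathsf{M}} = 0$ leaves the support identified in part (1); the portion of $s$ outside $\mathsf{M} \cup \partial \mathsf{M}$ is held fixed by the conditional block dynamics and contributes only a constant factor to the Gibbs weight; the conditional distribution on $s_\mathsf{M}$ therefore satisfies $\pi_\beta(s_\mathsf{M} \mid s_{\partial \mathsf{M}} = 0) \propto e^{-\beta |s_\mathsf{M}|}$ over even subgraphs of $\mathsf{M}$, which is exactly $\mu_{\mathsf{even}}$ on the subgraph of $(\mathbb{Z}/w\mathbb{Z})^4$ induced by the edges in $\mathsf{M}$.

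The main obstacle I foresee is the topological step underlying the reverse inclusion of part (1): verifying that every local 1-cycle on $\mathsf{M}$ is realized as the boundary of a 2-chain supported in the block. This reduces to the contractibility of $\mathcal{B}^L$ inside the 4-torus, which follows from $L \leq w^{1/4} \ll w/2$ (the box cannot wrap around any non-trivial 1-cycle of the torus), so the obstruction is essentially trivial once this size condition is enforced. The remaining degree-parity bookkeeping is a direct chain-complex manipulation, and part (2) is an immediate corollary of part (1).
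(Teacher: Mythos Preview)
Your proposal is correct and follows essentially the same approach as the paper: both hinge on the fact that the small box $\mathcal{B}^L$ is topologically trivial, so that every local even subgraph (1-cycle) is actually a valid syndrome (1-boundary). The paper phrases this via the abstract connectedness criterion (\cref{definition:connectedness}, invoking that components of size $\le |\mathcal{B}^L|\le L^4\le w/2 \le m^\chi$ are \textsf{erasable}), whereas you give a more self-contained chain-complex argument ($B_1=Z_1$ on a contractible region); these are two packagings of the same topological fact.
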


In this manner, we can black-box leverage the worm process statement in \cref{theorem:worm_mixing}; \cref{theorem:block_updates_toric} follows as an immediately corollary.

\begin{proof}[Proof of \cref{claim:toric_is_even_subgraph}] By definition of the meta-checks in the 4D toric code (cf. \cref{section:toric_background}), the $X$ syndrome configurations in the 4D toric code correspond to cycles on the edges of the 4D hypercubic torus. To establish Conditions 1+2 in \cref{claim:toric_is_even_subgraph}, it then suffices to prove that any collection of cycles within the subset $\mathsf{M}$ (under pinned boundaries) is also valid syndrome configuration. 

To this end, note that the constraint that $\mathsf{M}\subseteq \mathcal{B}^L$ implies
\begin{equation}
    |\mathsf{M}|\leq |\mathcal{B}^L|\leq L^4\leq (w/2)\,.
\end{equation}

\noindent Further, since the boundary $\partial \mathsf{M}$ has syndromes pinned to $0$, this implies that all syndrome configurations within $\mathsf{M}$ are topologically non-trivial; and thereby individually erasable from \cref{definition:connectedness}. Thereby, any even subgraph of $\mathsf{M}$ defines a valid syndrome configuration.
\end{proof}

\printbibliography

\appendix

\section{The 4D Toric Code}
\label{section:toric_background}

The 4D toric code is a generalization of Kitaev's surface code model to higher dimensions. On a 4D hypercubic lattice, one can define points, edges, faces (squares), cubes and tesseracts.  

\begin{definition}
    [The 4D toric code] Fix an integer $w>2$ and a tesselation of a 4D hypercubic torus of side-length $w$, $(\mathbb{Z}/w\mathbb{Z})^4$. The 4D toric code is a $[[6w^4, 6, w^2]]$ CSS code given by placing qubits on all the faces, $X$-checks on edges (as products of single-qubit Pauli $X$ on the 6 adjacent faces to said edge) and $Z$-checks on cubes.
\end{definition}

The logical operators of the 4D toric code i.e. the non-trivial operators with 0 syndrome, are 2D surfaces/toruses in $(\mathbb{Z}/w\mathbb{Z})^4$. We remark that the action of the $Z$ checks on the lattice is identical to the action of the $X$ checks on the dual lattice.

The integral feature of the 4D toric code, as opposed to its lower-dimensional counterparts, is that its parity checks exhibit redundancies. The product of all 8 $X$ stabilizers (or alternatively, the XOR of the parity checks) associated to edges incident on any given vertex, must be $0$. What this implies is that the violated $X$ syndromes form a collection of loops (i.e. all vertices have even degree); they cannot form strings of violations which terminate/have end-points.

As a consequence, it admits the connectedness criterion as detailed in  \cref{definition:connectedness}. To be explicit, we spell out the parameters here.

\begin{fact}
    The 4D toric code admits the $(6, 14, 5, \frac{1}{4})-\mathsf{CC}$ connectedness criterion.
\end{fact}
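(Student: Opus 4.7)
The plan is to verify each clause of $\mathsf{CC}$ (\cref{definition:connectedness}) against the combinatorial geometry of the $4$-torus $(\mathbb{Z}/w\mathbb{Z})^4$. By the duality between $k$-cells and $(4-k)$-cells on the hypercubic lattice, the $Z$-check sector is structurally isomorphic to the $X$-check sector, so I would work throughout with the $X$-checks on edges, whose valid syndromes are $1$-boundaries.

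The two sparsity constants are bookkeeping. The matrix $H_X$ has rows on edges and columns on faces; each square face contains $4$ edges, and each edge lies on exactly $6$ faces (for an edge in direction $i$, pair $i$ with each of the $3$ other axes and either sign, giving $3\times 2 = 6$), yielding $\ell = 6$. The meta-check at each vertex ties together the $8$ edges incident to that vertex; in the syndrome network $G_M$, two edges are adjacent iff they share a vertex, so each edge has degree $2\times 7 = 14$, giving $d = 14$.

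For $\nu = 5$ I would proceed via homology. Every valid syndrome $s\in\mathrm{Im}(H_X)$ is a $1$-cycle whose class in $H_1(T^4;\mathbb{F}_2)\cong \mathbb{F}_2^4$ vanishes. Decompose $s$ into its connected components $C_1,\dots,C_k$ in $G_M$; each $C_i$ is itself a $1$-cycle carrying a class $[C_i]\in \mathbb{F}_2^4$. The topologically trivial components are already individually erasable, so I would declare each one a singleton cluster $\mathsf{V}_j$. Among the non-trivial components, I would iteratively extract minimal subsets whose classes sum to $0$: since any $5$ vectors in a $4$-dimensional $\mathbb{F}_2$-space are linearly dependent, every minimal dependent subset has size at most $5$, and at least $2$ (a single non-zero vector is independent). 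Each extracted subset is a $1$-boundary, hence erasable, and consists of at most $5$ connected components, giving $\nu = 5$. Non-adjacency in $G_M$ between the resulting $\mathsf{V}_j$ can be arranged by absorbing any bridging neighborhoods into the same cluster, which only grows a cluster by topologically trivial components and so does not blow up the count.

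For the energy-barrier exponent $\chi = 1/4$ and the two remaining clauses, the relevant scale is $m^{1/4} \sim w$, which matches the systolic length of $T^4$. If $|\mathsf{V}_i|\leq m^\chi$ (with a sufficiently small implicit constant absorbed into the exponent), every connected component within $\mathsf{V}_i$ is shorter than the systole $w$, hence topologically trivial and individually erasable, which establishes \cref{definition:syndrome_clustering}.3. For non-criticality (\cref{definition:syndrome_clustering}.1), the canonical decoder $\mathsf{corr}_{\mathsf{contract}}$ of \cref{example:homological} assigns to each $\mathsf{V}_i$ a minimum-weight bounding $2$-chain of diameter $O(|\mathsf{V}_i|)$, localized near $\mathsf{V}_i$; a single-site flip $e^i$ perturbs the correction only in a bounded neighborhood of the flipped face, so the $2$-cycle $\mathsf{corr}(s)\oplus e^i\oplus \mathsf{corr}(s')$ has diameter $\ll w$, cannot wrap two independent torus directions, and hence cannot be a non-trivial logical operator (whose minimum weight is $w^2$ and whose support must span $\geq w$ in each of two directions). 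The \textsf{Locally Erasable Syndromes} clause follows from the same geodesic-convexity reasoning: a connected component $\mathsf{V}\subset \mathcal{B}^r$ for $r < w/2$ lies in a contractible ball of the lattice, where every $1$-cycle bounds, and its minimum filling $2$-chain is supported on faces contained in $\mathsf{supp}(\mathcal{B}^r)$. The one genuinely delicate point is the quantitative isoperimetric estimate relating the diameter of the minimum filling $2$-chain to $|\mathsf{V}_i|$ sharply enough on $T^4$ that the exponent $\chi = 1/4$ is justified; this is the main place where the specific hypercubic geometry, rather than mere amenability, enters the argument.
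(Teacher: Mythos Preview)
Your proof is correct and follows essentially the same route as the paper's: verify the sparsity parameters by direct counting on the hypercubic lattice, handle the energy barrier and local erasability via the systolic length $w \sim m^{1/4}$ and loop-contraction, and treat the clustering parameter $\nu$ through the first homology $H_1(T^4;\mathbb{F}_2)\cong \mathbb{F}_2^4$.

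Your argument for $\nu = 5$ is in fact sharper than the paper's sketch. The paper writes that topologically non-trivial loops ``in groups of 4 can be contracted together,'' appealing to $\dim H_1 = 4$; but as your pigeonhole argument makes explicit, a minimal $\mathbb{F}_2$-dependent subcollection of non-trivial classes can have size up to $5$ (e.g.\ $e_1, e_2, e_3, e_4, e_1+e_2+e_3+e_4$), which is what actually justifies the stated $\nu = 5$. One small simplification: your ``absorbing bridging neighborhoods'' step is unnecessary, since distinct connected components of the violated-syndrome set are by definition non-adjacent in $G_M$, so any union of such components is automatically a disjoint, non-adjacent decomposition.
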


\begin{proof}

    As noted above, both $X$ and $Z$ parity checks are of weight $\ell = 6$ -- the 6 faces adjacent to edges. The metachecks are of weight 8 -- 8 edges surrounding a vertex -- and the syndrome network graph has degree $d = 14$ -- each endpoint of an edge, induces linear dependencies with 7 other edges. 

    The energy barrier of the toric code scales with its side-length, corresponding to the formation of a topologically nontrivial loop. Thus, $\chi$ approaches $\frac{1}{4}$. Below such a scale, the syndrome patterns are topologically non-trivially and thereby erasable. 

    The fact that topologically non-trivial syndrome patterns confined to balls are locally erasable, is based on a iterative loop contraction algorithm, see e.g.~\cite[Fig.~1]{Alicki2008OnTS}. An analogous discussion for higher-dimensional toric and color codes can be found in~\cite[Section~7]{Bombin2009SelfcorrectingQC}.

    Finally, topologically non-trivial loops in groups of 4 can be contracted together; this is since the $1$st Homology group of the 4D hypercubic torus is of dimension 4 as a $\mathbb{Z}_2$ vector space. See the discussion on~\cite[Section~7, page~14]{Bombin2009SelfcorrectingQC}.
\end{proof}

\section{Open Quantum Systems}
\label{section:open_quantum_systems}

\subsection{Lindbladian Dynamics and Davies Generators}

We dedicate this subsection to background on the evolution and convergence of open quantum systems described by Lindbladian dynamics. Recall, a general \textit{purely irreversible} Lindbladian $ \mathcal{L}:\mathcal{B}(\mathcal{H})\rightarrow \mathcal{B}(\mathcal{H})$ is a continuous-time Markov chain acting on density operators\footnote{Where $\{A,B\}:=AB+BA$ is the anti-commutator.}

\begin{equation}
    \mathcal{L}[\rho] = \sum_j J_j \rho J_j^\dagger -\frac{1}{2}\big\{J_j^\dagger J_j, \rho\big\}, \quad \text{for some set of Lindblad operators } \{J_i\}
\end{equation}

\noindent which generates a family of completely positive and trace-preserving maps

\begin{equation}
    e^{\mathcal{L}t}[\rho] \text{ for each }t\geq 0.
\end{equation}

\noindent A density operator $\sigma$ is said to be a fixed point of the dynamics if $\mathcal{L}[\sigma]=0$, which implies that $e^{\mathcal{L}t}[\sigma] = \sigma$. 

\subsubsection{Thermal Lindbladians and Davies Generators}

Of particular interest will be a subclass of Lindbladians known as Davies Generators~\cite{Davies1974MarkovianME}, which describe the dynamics of a quantum system coupled to a thermal bath. Let $\mathcal{H}$ be a commuting Hamiltonian with integer spectra $\in [0, m]$, and let $\Pi_k$ be the orthogonal projection onto the eigenspace of energy $k$. The Davies Generator associated to $\mathcal{H}$, is specified by:

\begin{itemize}
    \item \textbf{A collection of jump operators} to generate the transitions, $\{A_a\}$. For instance, the set of single qubit Pauli operators
    \begin{equation}
        \{A_a\} = \big\{P_i\otimes \mathbb{I}_{[n]\setminus i}: i\in [n], P\in \{\mathbb{I}, X, Y, Z\}\big\}
    \end{equation}

    \noindent In contrast to classical Markov Chain transitions, these quantum jumps may change the energy of the system in superposition. Thereby, it will be convenient to decompose the jump operators into the energy basis:
\begin{align}
\forall \nu \in [-m, m]:\quad A^a_\nu := \sum_{k\in [m]} \Pi_{k+\nu} A^a \Pi_k \quad \text{such that}\quad \sum_{\nu\in [-m, m]} A^a_\nu = A^a.  
\end{align}

\item \textbf{The transition weights}. Here we always select the Glauber dynamics weight,
    \begin{equation}
    \gamma(\nu) = 1/(1+e^{\beta \nu}) \quad \text{for all }\nu\in [-m, m].
    \end{equation}    
Put together, the associated family of Davies generators $\mathcal{L}$ can be written down as
\begin{equation}\label{equation:Lindbladian}
    \mathcal{L}[\rho] = \sum_{a \in \mathcal{A}} \sum_{\nu}\gamma(\nu) \bigg(A^a_\nu \rho (A^a_\nu)^\dagger -\frac{1}{2} \bigg\{(A^a_\nu)^\dagger A^a_\nu, \rho \bigg\}\bigg).
\end{equation}
\end{itemize}

So long as the choice of jumps is made to ensure $\mathcal{L}$ is ergodic, the Gibbs state $\rho\propto e^{-\beta \mathcal{H}}$ of $\mathcal{H}$ is the unique fixed point of $\mathcal{L}$.

\subsection{Quantum Memories}

We now elaborate on the particular dynamics which define quantum memories. To do so, we consider the Davies generator associated to a code Hamiltonian $\mathcal{H}$ (\cref{definition:code_hamiltonian}) of a CSS code $\mathsf{CSS}(H_X, H_Z)$, of $X$ checks $\{X^{a_i}:a_i\in H_X\}$ and $Z$ checks $\{Z^{b_j}: b_j\in H_Z\}$. We consider the following $X$ or $Z$ noise process.

\paragraph{Single site dynamics.} Consider the set of single-qubit $X$ or $Z$ Paulis on $n$ qubits, $\mathcal{A} = \mathcal{A}_X\cup \mathcal{A}_Z$:
\begin{equation}
    \mathcal{A}_X = \big\{X_i\otimes \mathbb{I}_{[n]\setminus i}: i\in [n]\big\} \quad \mathcal{A}_Z = \big\{Z_i\otimes \mathbb{I}_{[n]\setminus i}: i\in [n]\big\}
\end{equation}

\noindent The jump operators in the frequency basis then fall into two types:
\begin{equation}
    A_{i, \nu}^X = X_{i} P_{i, \nu}, \quad A_{i, \nu}^Z = Z_{i} R_{i, \nu}, 
\end{equation}

\noindent where $P_{i, \nu}$ (resp. $R_{i, \nu}$) belongs to the algebra spanned by the $Z$ checks $Z^{b_j}$ (resp. $X$ checks $X^{a_j}$) whose support contains the $i$th spin. The dissipative generator then has the following form

\begin{equation}
    \mathcal{L}_{\mathsf{SS}} = \mathcal{L}_x+\mathcal{L}_z, \text{ where }\mathcal{L}_x = \sum_{i\in [n]} \mathcal{L}_x^i, \quad \mathcal{L}_z = \sum_{i\in [n]} \mathcal{L}_z^i
\end{equation}

\noindent This Davies generator describes an elementary local (single-qubit) ``$X$ or $Z$" noise process. As previously discussed in \cref{definition:classical_glauber}, this model naturally dequantizes into two classical Markov chains: one for the $X$ errors, and one for $Z$ errors.

\paragraph{Block dynamics.} Before proceeding, we remark that there is a natural way to ``coarse-grain" the single-site process into a block-dynamics. Suppose the $n$ spins are arranged into some graph $\mathcal{G} = ([n], E)$. Given a spin $i\in [n]$, we consider a box of spins of radius $R$ from $i$:

\begin{equation}
    \mathsf{B}_i^R = \big\{j\in [n]: d_{\mathcal{G}}(i, j) \leq R\big\}
\end{equation}

The Coarse-Grained/Block Dynamics will consist of quasi-local box updates, which ``resample" entire regions $\mathsf{B}_i^R$ conditioned on their boundaries. To formalize this process, consider the restriction of the single-site dynamics around a point $i$:
\begin{equation}
    \mathcal{L}^{ \mathsf{B}_i^R}_x = \sum_{j\in  \mathsf{B}_i^R} \mathcal{L}_x^j, \quad  \mathsf{E}_x^{^{ \mathsf{B}_i^R}}[\cdot] = \lim_{t\rightarrow \infty} \bigg(\exp\bigg[ \mathcal{L}^{ \mathsf{B}_i^R}_x t\bigg] \otimes \mathbb{I}_{[n]\setminus \mathsf{B}_i^{R+1}}\bigg)[\cdot],
\end{equation}
\noindent (and similarly for the $z$ terms). The channel $\mathsf{E}_x^{^{ \mathsf{B}_i^R}}[\cdot]$ is the conditional expectation map, and is analogous to the classical heat-bath block-dynamics generator. We remark that these block operators are quasi-local in the sense that they act on regions of radius $(R+1)$.

We can now re-phrase the continuous-time block-dynamics, in terms of a convex combination of the discrete conditional expectation maps:

\begin{equation}
    \mathcal{L}_{\mathsf{Block}}[\rho] = \sum_{i\in [n]} \bigg(\mathsf{E}_x^{^{ \mathsf{B}_i^R}}[\rho] - \rho\bigg)+\sum_{i\in [n]} \bigg(\mathsf{E}_z^{^{ \mathsf{B}_i^R}}[\rho] - \rho\bigg)
\end{equation}

See also \cref{fact:cont_disc_quantum} for a generalization of this discrete-to-continuous time conversion. 

\subsection{From discrete to continuous time}

As previously alluded to in \cref{section:preliminaries}, in our analysis we often interchange between continuous and discrete time (quantum and classical) Markov processes. In the following we exhibit a simple fact which allows us to convert mixing time bounds between the two.

\begin{fact}\label{fact:cont_disc_quantum}
    Given a completely positive, trace preserving quantum channel $\mathcal{N}:\mathcal{B}(\mathcal{H})\rightarrow \mathcal{B}(\mathcal{H})$, there exists a generator $\mathcal{L}:\mathcal{B}(\mathcal{H})\rightarrow \mathcal{B}(\mathcal{H})$ for a continuous time Lindbladian dynamics of the form 
    \begin{equation}
        \mathcal{L}[\sigma]  = \mathcal{N}[\sigma]-\sigma.
    \end{equation}
    Moreover, $\mathcal{L}$ and $\mathcal{N}$ share the same fixed points, and their mixing times can be related via \cref{fact:cont_disc}.
\end{fact}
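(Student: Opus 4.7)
The plan is to first verify that $\mathcal{L}=\mathcal{N}-\mathsf{Id}$ generates a completely positive, trace-preserving (CPTP) semigroup, then check the fixed-point correspondence, and finally obtain the mixing-time relation through a Poisson-sampling representation of $e^{t\mathcal{L}}$ that mirrors the classical argument underlying \cref{fact:cont_disc}.

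For the first step, I would write the semigroup explicitly as
\begin{equation}
e^{t\mathcal{L}} \;=\; e^{-t}\,e^{t\mathcal{N}} \;=\; \sum_{k\geq 0} e^{-t}\,\frac{t^k}{k!}\,\mathcal{N}^k,
\end{equation}
which is a convex combination of the CPTP maps $\mathcal{N}^k$ with Poisson$(t)$ weights. Convex combinations of CPTP maps are CPTP, so $e^{t\mathcal{L}}$ is a valid quantum dynamical semigroup for all $t\geq 0$; differentiating at $t=0$ recovers $\mathcal{L}$, confirming it is a legitimate Lindbladian. Fixed-point equivalence is immediate: $\mathcal{L}[\sigma]=0$ iff $\mathcal{N}[\sigma]=\sigma$.

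For the mixing-time relation, the key observation is the Poisson representation above, which says that running $e^{t\mathcal{L}}$ is exactly the same as applying $\mathcal{N}$ a random number $K\sim \mathrm{Pois}(t)$ of times. Fix any starting state $\rho_0$ and any fixed point $\sigma$ of $\mathcal{N}$ (hence of $\mathcal{L}$). By joint convexity of the trace distance and the data-processing inequality for CPTP maps (so $\|\mathcal{N}^k[\rho_0]-\sigma\|_1$ is non-increasing in $k$),
\begin{equation}
\bigl\|e^{t\mathcal{L}}[\rho_0]-\sigma\bigr\|_1 \;\leq\; \mathbb{E}_{K\sim \mathrm{Pois}(t)}\bigl\|\mathcal{N}^K[\rho_0]-\sigma\bigr\|_1.
\end{equation}
Standard Poisson concentration then gives the two directions: choosing $t=Ck$ with $C$ a large constant, the probability that $K<k$ is exponentially small, so the continuous-time distance at time $Ck$ is bounded by the discrete-time distance at step $k$ plus an exponentially small slack; conversely, writing $\mathcal{N}=e^\mathcal{L}+(\mathcal{N}-e^\mathcal{L})$ and iterating, or more directly comparing $\mathcal{N}^k$ to $e^{t\mathcal{L}}$ at $t=k$ via Poisson tails on the other side, gives the reverse bound. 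This yields the direct quantum analogue of \cref{fact:cont_disc} up to absolute constants in the time rescaling.

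The only mildly subtle point is the reverse direction of the mixing-time comparison, since one has to pass from continuous to discrete time; but this is handled exactly as in the classical case by expanding $\mathcal{N}^k=(e^{\mathcal{L}}\circ(\mathsf{Id}+(\mathcal{N}-e^\mathcal{L})))^{k}$ or equivalently by noting that the Poisson distribution is concentrated on its mean with exponential tails, so on the one hand $e^{t\mathcal{L}}[\rho_0]$ is close to $\mathbb{E}[\mathcal{N}^K[\rho_0]]$ with $K\approx t$, and on the other $\mathcal{N}^k[\rho_0]$ is close to $e^{k\mathcal{L}}[\rho_0]$ because both represent averages that concentrate on the same trajectory. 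Combining the two inequalities reproduces the statement of \cref{fact:cont_disc} in the quantum setting, which is the content of the claim.
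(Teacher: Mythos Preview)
The paper does not supply a proof of this fact; it is stated as background in the appendix and simply refers back to the classical statement \cref{fact:cont_disc}. Your Poisson-sampling argument is the standard and correct way to justify it: the identity $e^{t\mathcal{L}}=\sum_{k\ge 0}e^{-t}\tfrac{t^k}{k!}\mathcal{N}^k$ immediately gives CPTP-ness of the semigroup, the Lindblad form (take the Kraus operators of $\mathcal{N}$ as jump operators, using $\sum_j K_j^\dagger K_j=\mathbb{I}$ to rewrite $-\rho$ as the dissipator term), the fixed-point equivalence, and one direction of the mixing-time comparison via convexity and monotonicity of $k\mapsto\|\mathcal{N}^k[\rho_0]-\sigma\|_1$.

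One small correction: your sketch of the reverse direction (``Poisson tails on the other side'') is not quite the right mechanism. In the classical proof of \cite[Theorem~20.3]{Levin2017MarkovCA}, that direction goes through the submultiplicativity of the worst-case distance $\bar d(k+l)\le \bar d(k)\,\bar d(l)$, not through a tail bound. The same submultiplicativity holds for CPTP maps (by the data-processing inequality applied to $\max_{\rho,\sigma}\|\mathcal{N}^k[\rho]-\mathcal{N}^k[\sigma]\|_1$), so the classical argument transfers verbatim once you invoke that. With this adjustment your plan is complete, and in any case only the forward direction is actually used in the body of the paper (e.g.\ in \cref{lemma:q_reduction}).
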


\end{document}